\setlist[itemize]{noitemsep, topsep=0pt}
\newtheorem{theorem}{Theorem}
\newtheorem{lemma}{Lemma}
\newtheorem*{remark}{Remark}
\newcommand{\iid}{\overset{\mathrm{iid}}{\sim}}
\newcommand{\E}{\mathbb{E}}
\newcommand{\bP}{\mathbb{P}}
\newcommand{\V}{\mathbb{V}}
\newcommand*{\prob}{\mathsf{P}}
\newcommand{\talpha}{\tilde{\alpha}}
\newcommand{\tbeta}{\tilde{\beta}}
\newcommand{\tlambda}{\tilde{\lambda}}
\newcommand{\tLambda}{\tilde{\Lambda}}
\newcommand{\tzeta}{\tilde{\zeta}}
\newcommand{\tY}{\tilde{Y}}
\newcommand{\cS}{\mathcal{S}}
\newcommand{\sD}{\mathscr{D}}
\newcommand{\sT}{\mathscr{T}}
\newcommand*\dif{\mathop{}\!\mathrm{d}}
\newcommand{\hk}{\hat{k}}
\newcommand{\hK}{\hat{K}}
\newcommand{\tf}{\tilde{f}}
\newcommand{\tg}{\tilde{g}}
\newcommand{\hf}{\hat{f}}
\newcommand{\hg}{\hat{g}}
\newcommand{\fo}{f^{(1)}}
\newcommand{\fj}{f^{(j)}}
\newcommand{\hfo}{\hf^{(1)}}
\begin{document}

\def\spacingset#1{\renewcommand{\baselinestretch}%
{#1}\small\normalsize} \spacingset{1}


\date{}
\title{\bf Sampling depth trade-off in function estimation under a two-level design}
\author[1]{Akira Horiguchi\thanks{\url{akira.horiguchi@duke.edu}}}
\author[2]{Li Ma\thanks{\url{li.ma@duke.edu}}}
\author[3]{Botond T. Szabó\thanks{\url{botond.szabo@unibocconi.it}}}
\affil[1,2]{Department of Statistical Science, Duke University}
\affil[3]{Department of Decision Sciences and Institute for Data Science and Analytics, Bocconi University}
\maketitle

\begin{abstract}
Many modern statistical applications involve a two-level sampling scheme that first samples subjects from a population and then samples observations on each subject.
These schemes often are designed to learn both the population-level functional structures shared by the subjects and the functional characteristics specific to individual subjects.
Common wisdom suggests that learning population-level structures benefits from sampling more subjects whereas learning subject-specific structures benefits from deeper sampling within each subject. 
Oftentimes these two objectives compete for limited sampling resources, which raises the question of how to optimally sample at the two levels.
We quantify such sampling-depth trade-offs by establishing the $L_2$ minimax risk rates for learning the population-level and subject-specific structures under a hierarchical Gaussian process model framework where we consider a Bayesian and a frequentist perspective on the unknown population-level structure. 
These rates provide general lessons for designing two-level sampling schemes given a fixed sampling budget. 
Interestingly, they show that subject-specific learning occasionally benefits more by sampling more subjects than by deeper within-subject sampling.
We show that the corresponding minimax rates can be readily achieved in practice through simple adaptive estimators without assuming prior knowledge on the underlying variability at the two sampling levels. We validate our theory and illustrate the sampling trade-off in practice through both simulation experiments and two real datasets. While we carry out all the theoretical analysis in the context of Gaussian process models for analytical tractability, the results provide insights on effective two-level sampling designs more broadly.
\end{abstract}

\noindent%
{\it Keywords:} Sampling design, nonparametric, Gaussian process, hierarchical model, Bayesian model

\section{Introduction}
\label{sec:intro}

Many statistical applications involve collecting data in which repeated sampling occurs at multiple levels. A two-level nested sampling design widely in use proceeds by first sampling at the level of subjects, and then collecting observations on each subject. 
In biomedical applications, for instance, the subjects of interest are often individual patients, and for each patient, the investigator may collect sequencing reads by a high-throughput sequencer in order to identify patient characteristics such as their gene regulation, microbiome composition, and metabolic or immunological profiles. 
In different studies the subjects of interest might be different (e.g., single-cell RNA-seq studies might involve individual cells as the subjects) and the number of sampling levels might be larger (e.g., in a longitudinal single-cell analysis involving multiple patients). Herein for simplicity we consider a two-level sampling setup that assumes the subjects are sampled exchangeably from some population of interest, and for each subject, observations are sampled from an underlying data-generative distribution that characterizes the relevant contextual features of the subject.

A key assumption in collecting data from multiple subjects is that they share similarities that characterize the underlying target population, and so a common objective is to learn the commonalities in the data-generative mechanism shared across the population. One might ask: what gene pathways are active among patients of a particular disease? What microbes are typically present in a breast-fed 3-month old? 
Intuitively, these shared structures can be inferred by pooling data from the subjects sampled from such a population. We refer to such inference as population-level learning. 
In addition, one may also wish to learn the characteristics specific to an individual subject. Questions such as  
``Are there immunological anomalies for this particular patient?''  might arise.
Such subject-specific learning is often the prime objective in precision medicine or personalized treatment.

Common wisdom suggests that learning the shared structures in the target population requires sampling enough subjects, whereas learning subject-specific characteristics requires enough observations from that particular subject. 
But because real studies are constrained by the available time and resources, the induced trade-off between the ``sampling depths'' at these two levels---that is, in terms of the number of subjects versus the number or precision of the observations made for each subject (respectively denoted as $m$ and $n$ later in the paper)---can pit these two learning objectives against each other. In order to garner insights on ``optimal'' sampling designs that prioritize one or balance both objectives, one must quantify how such a trade-off takes place. We aim to provide such a characterization in this paper for a benchmark hierarchical Gaussian process model.

Related ideas in two-level sampling have been considered from the early days of experiment designs. 
For example, \cite{novick1972estimating} performed a numerical study and found that subject-specific learning improved by pooling data from all subjects in the study, especially when the sample sizes were small \citep{lindley1972bayes}. 
However, such considerations mainly concerned finite-dimensional quantities of interest, whereas modern applications see increasingly complex population-level and subject-specific characteristics of interest which are often functional in nature. 
Here we review existing works that study minimax rates in the presence of multiple related functional observations.
For estimating a population-level function, \cite{bunea2006minimax} and \cite{chau2016functional} provide risk bounds for the population-mean log-spectrum in the context of time series, whereas \cite{giacofci2018minimax} assume a heteroscedastic nonparametric regression setting to estimate the function from multi-subject noisy curves.
For estimating subject-specific functions, \cite{wang2016nonparametric} establish risk upper bounds for the scenario where a densely sampled function is used to aid in the regression of a sparsely sampled function. 
\cite{koudstaal2018multiple} establish minimax rates for a multiple Gaussian sequence model where the subject-specific functions are realizations of a mean-zero Gaussian process and hence do not possess any shared structure.
Finally, \cite{bak2023minimax} establish minimax rates in a nonparametric regression setting where the functions are spanned by a smaller number of other orthonormal functions which by definition do not possess any shared structure.
However, none of the works we found on multi-subject designs for functional data seem to address the apparent sampling-depth trade-off for the objectives of learning both population-level and subject-specific functions of interest.

This paper aims to translate the intuition behind the mechanisms underlying two-level sampling into theory by quantifying this trade-off for the aforementioned two objectives.
The paper is organized as follows. 
Section~\ref{sec:mainresults} presents our theoretical results.
To ease technical computation, we assume the subject-specific functions are realizations of a Gaussian process whose mean function characterizes their population-level structure and whose covariance function specifies the cross-sample heterogeneity across the subjects.
On the unknown population-level structure, we take both a frequentist {\em and} a Bayesian perspective; the former assumes the structure is a fixed member of a Sobolev class of functions, while the latter endows it with a Gaussian process prior. 
We establish the rates of $L_2$ minimax risk (under the frequentist setup) and minimum average risk (under the Bayesian setup) for both the subject-specific {\em and} population-mean functions.
We also construct threshold estimators that adapt to the unknown smoothness at both subject and population levels to achieve the minimax rate by applying Lepskii's method \citep{lepskii1992asymptotically,lepskii1993asymptotically}. 

All of the estimators we build in this paper are very simple and would not necessarily be our recommended methods in various applications. Their purpose is to show that the established minimax rates can be readily attained in practice by estimators that can be easily constructed. As such, we hope the reader does not get distracted by the specific design of these simple estimators. Our focus herein shall be on the risk rates and the lessons one may draw from them regarding the choice of sample design---in particular $n$ and $m$---for an experiment with a fixed sampling budget.
Proofs of these results are in the Supplementary Material.
Section~\ref{sec:studyall} contains simulation studies that empirically validate our theoretical results and compare our proposed estimators to existing single-subject estimators. 
Sections \ref{sec:pinch} and \ref{sec:orthosis} further compare the estimators on two real-world data sets: one that records the force exerted during a brief pinch by the thumb and forefinger \citep{ramsay1995functional}, and one that records how different externally applied moments to the human knee affect the processes underlying movement generation \citep{cahouet2002static}.
Section~\ref{sec:discussion} provides concluding remarks. 

We finish the introduction with a preview of our findings. 
Our theoretical results in Sections \ref{sec:estg} and \ref{sec:estf} indeed corroborate the general wisdom of the trade-off between sampling depths for many settings.
In addition, in some settings our established multi-subject rates are faster than the single-subject rates, which confirms that in such settings the information pooling process can strongly improve estimation of the subject-specific functions. 
Our rates also delineate the mechanism in which the smoothnesses of the functions, the number of subjects, and how precisely each subject is sampled jointly influence the quality of the inference and the benefit of the information from the additional subjects. 
Interestingly, the theoretical finding of Figure~\ref{fig:negativegradientf} and numerical experiments in Section~\ref{sec:study2} suggest that if the only goal is to estimate some subject-specific function under a constrained budget, the goal might be best achieved not by spending the entire budget on sampling that subject, but instead by allocating some of the budget to sparingly sample a large number of other subjects; this strategy of pooling the subjects' inferred information allows precise inference on the population-level structure without sacrificing much subject-specific information when the population-level structure is complex and the between-subject variations are relatively small. 

\section{Main results}
\label{sec:mainresults}
\subsection{The two-level sampling model}
\label{sec:samplingmodel}

The paper assumes a two-level sampling model. The first level samples $m$ subjects from some target population. For the $j$th subject, $j=1,\ldots,m$, we use $f^{(j)}$ to represent a function of interest, while the observation for the subject is a sample path $Y^{(j)} \coloneqq \{Y_t^{(j)}, t \in [0,1]\}$ from the Gaussian white noise model given $f^{(j)}$. 
The white noise model is the idealized version of the nonparametric regression model and is used as an analytically tractable benchmark model for other more complicated nonparametric models. 
We idealize the sampling distribution of each $f^{(j)}$ to be a Gaussian process whose mean function characterizes the population structure of interest and whose covariance function $\tilde\Lambda$ specifies the cross-subject variability of $f^{(j)}$. 
By Mercer's theorem \citep[see e.g., ][]{williams2006gaussian}, there is an orthonormal eigenbasis $\{\psi_k\}_{k=1}^{\infty}$ in $L_2([0,1])$ such that the eigenvalues $\{\tilde\lambda_k\}$ of $\tilde\Lambda$ are all nonnegative and $\tilde\Lambda(s,t) = \sum_{k=1}^\infty \tilde\lambda_k \psi_k(s) \psi_k(t)$ for all $s,t \in [0,1]$.
Furthermore, we assume $\tilde\Lambda$ is characterized by a regularity hyper-parameter $\tilde\alpha$ describing the polynomial decaying rate of the eigenvalues (e.g., Matérn or Riemann-Liouville processes), i.e., $\tilde\lambda_k \asymp k^{-1-2\tilde\alpha}$. 
Thus the parameter $\tilde\alpha$ characterizes the ``smoothness'' of realizations from a mean-zero Gaussian process with covariance function $\tilde\Lambda$. 

We will consider both a Bayesian and a frequentist model on the population structure of interest. 
The Bayesian model endows the population-level function with a Gaussian process prior.
Our full model thus becomes a hierarchical Gaussian process model
\begin{align}  \label{eq:gwnmodelbayes}
\begin{split}
\dif Y^{(j)}_t &= \fj(t)\dif t + n^{-1/2} \dif W_t^{(j)},\quad t\in[0,1]   \\
f^{(j)}\mid g,\tilde\Lambda  &\iid \text{GP}(g,\tilde\Lambda), \\
g\mid\Lambda &\sim \pi_{\Lambda}
\end{split}
\end{align}
where $W^{(j)} \coloneqq \{W_t^{(j)}, t \in [0,1]\}$, $j=1,\ldots,m$, are i.i.d.\ Brownian motions, and $\pi_{\Lambda}$ is a mean-zero Gaussian process distribution whose covariance function $\Lambda$ shares the same eigenfunctions $\{\psi_k\}$ as the covariance function $\tilde\Lambda$ and has eigenvalues $\{\lambda_k\}$ with polynomial decaying rate $\alpha$, i.e. $\lambda_k \asymp k^{-1-2\alpha}$. 

In the frequentist framework, we instead assume the population-level function is unknown and fixed.
Our sampling model thus becomes
\begin{align}  \label{eq:gwnmodelfreq}
\begin{split}
\dif Y^{(j)}_t &= \fj(t)\dif t + n^{-1/2} \dif W_t^{(j)},\quad t\in[0,1]   \\
f^{(j)}\mid g^*,\tilde\Lambda  &\iid \text{GP}(g^*,\tilde\Lambda), \\
g^* &\in \cS(\alpha,R)
\end{split}
\end{align}
where $W^{(1)}, \ldots, W^{(m)}$ are i.i.d. Brownian motions, and the population-level function $g^*$ belongs to the Sobolev-type ball (i.e., ellipsoid with respect to the Euclidean norm and the basis functions $\{\psi_k\}$)  
\begin{align}  \label{eq:ellipsoid}
    \cS(\alpha,R) = \left\{g(\cdot)=\sum_{k=1}^\infty g_k\psi_{k}(\cdot)\in L^2[0,1]: \sum_k \left(g_k^2 k^{2\alpha}\right) \leq R^2\right\}
\end{align}
for some positive rate $\alpha$ and radius $R$.

Our primary technical goal is to find the Bayes risk (i.e., minimum average risk) under model~\eqref{eq:gwnmodelbayes} and the $L_2$ minimax risk under model~\eqref{eq:gwnmodelfreq} of the population-level function $g^*$ and each subject-specific function $\fj$ from the noisy observations. 
Because the subject-specific functions are exchangeable under either model, we can simplify this goal to finding the Bayes and minimax risks of $g^*$ and $\fo$.
These risks will be functions of $m$ and $n$ and hence will provide insight into the trade-off between the number of samples and the accuracy of each observed function on estimating $g^*$ and $\fo$. 
For both functions we will also produce an adaptive estimator that achieves the function's Bayes and minimax risk rates without knowledge of either $\alpha$ or $\tilde\alpha$. 
As noted earlier, we choose these simple estimators to show that the rates can be easily attained in practice. Our main focus will be on the rates and their implications on the sampling design.
Proofs are included as Supplemental Material.

\subsection{Notation} 
We write $a \wedge b = \min\{a, b\}$ and $a \vee b = \max\{a, b\}$. For two positive sequences $a_n,b_n$ we write $a_n\lesssim b_n$ if there exists a universal positive constant $C$ such that $a_n\leq C b_n$. We write $a_n\asymp b_n$ if $a_n \lesssim b_n$ and $b_n \lesssim a_n$ are satisfied simultaneously.
For two doubly indexed positive sequences $a_{n,m},b_{n,m}$ we write $a_{n,m} \lesssim b_{n,m}$ if there exists a universal positive constant $D$ such that $a_{n,m} \leq D b_{n,m}$. We write $a_{n,m}\asymp b_{n,m}$ if $a_{n,m} \lesssim b_{n,m}$ and $b_{n,m} \lesssim a_{n,m}$ are satisfied simultaneously.

For any function $g$ defined on the unit interval, we denote by $\bP_g$ the joint distribution of $Y^{(1)},\ldots,Y^{(m)},f^{(1)},\ldots,f^{(m)}|g$ and denote by $\E_g$ the expectation over this distribution.

\subsection{Estimating the population-level function}  \label{sec:estg}

First we find the Bayes risk for estimating the population-level function generated from a mean-zero Gaussian process with covariance function $\Lambda$.
The following theorem provides the squared Bayes risk's asymptotic rate as well as an estimator that achieves this rate. 

\begin{theorem} \label{thm:lbbayesriskg}
    Consider the hierarchical model \eqref{eq:gwnmodelbayes}.
    Then the corresponding minimum average mean integrated squared error (MISE) for $g$ with respect to the distribution $\pi_{\Lambda}$ is
    \begin{equation*}  \label{eq:bayesrisklowerboundg}
        \inf_{\hg} \int \E_g\|\hg-g\|_2^2\dif\pi_{\Lambda}(g) 
        = \int \E_g\|\tg_{\Lambda}-g\|_2^2\dif\pi_{\Lambda}(g)
        \asymp m^{-1}+(nm)^{-\frac{2\alpha}{1+2\alpha}},
    \end{equation*}
    where $\tilde{g}_\Lambda$ denotes the posterior mean in this model.
\end{theorem}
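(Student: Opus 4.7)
The approach is to diagonalize in the common eigenbasis $\{\psi_k\}$, reducing the problem to independent one-dimensional normal-normal estimation problems indexed by $k$. Projecting via $Y^{(j)}_k := \int_0^1 \psi_k(t)\,\dif Y^{(j)}_t$ and writing $g=\sum_k g_k\psi_k$, $\fj=\sum_k f^{(j)}_k\psi_k$, the joint Gaussian structure (using that $\{\psi_k\}$ diagonalizes both $\Lambda$ and $\tilde\Lambda$ and that Brownian integrals against an orthonormal basis are i.i.d.\ $N(0,1)$) produces mutually independent coordinate models $g_k\sim N(0,\lambda_k)$, $f^{(j)}_k\mid g_k \ind N(g_k,\tlambda_k)$, and $Y^{(j)}_k\mid f^{(j)}_k \ind N(f^{(j)}_k,n^{-1})$. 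Marginalizing out the $f^{(j)}_k$ yields $Y^{(j)}_k\mid g_k \ind N(g_k,\tlambda_k+n^{-1})$ with coordinatewise sufficient statistic $\bar Y_k:=m^{-1}\sum_j Y^{(j)}_k\sim N(g_k,(\tlambda_k+n^{-1})/m)$. Because squared-error Bayes risk is minimized by the posterior mean and decouples across the $\psi_k$, the first equality in the theorem is immediate, with $\tg_{\Lambda}=\sum_k \mu_k^{\text{post}}\psi_k$ the coordinatewise normal-normal conjugate mean, and the total Bayes risk equals
\begin{equation*}
\sum_{k=1}^\infty \sigma_k^2,\qquad \sigma_k^2 \;=\; \left(\lambda_k^{-1}+\frac{m}{\tlambda_k+n^{-1}}\right)^{-1} \;=\; \frac{\lambda_k(\tlambda_k+n^{-1})}{m\lambda_k+\tlambda_k+n^{-1}}.
\end{equation*}
It then remains to analyze this sum using $\lambda_k\asymp k^{-1-2\alpha}$ and $\tlambda_k\asymp k^{-1-2\talpha}$.

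For the upper bound I would use the two elementary dominations $\sigma_k^2\leq \lambda_k$ and $\sigma_k^2\leq (\tlambda_k+n^{-1})/m$, and split the sum at $K\asymp (nm)^{1/(1+2\alpha)}$. The tail contributes $\sum_{k>K}\lambda_k\lesssim K^{-2\alpha}\asymp (nm)^{-2\alpha/(1+2\alpha)}$, while the bulk is bounded by $m^{-1}\sum_{k}\tlambda_k + K/(nm)$. Since $\talpha>0$ makes $\sum_k\tlambda_k<\infty$, these contribute $O(m^{-1})$ and $(nm)^{-2\alpha/(1+2\alpha)}$, respectively, producing the stated rate. For the matching lower bound I would apply the harmonic-mean inequality $\sigma_k^2\geq \tfrac12\min\{\lambda_k,(\tlambda_k+n^{-1})/m\}$: the $(nm)^{-2\alpha/(1+2\alpha)}$ term is recovered from $\sum_k\min\{\lambda_k,(nm)^{-1}\}$ via the same cutoff $K$, while the $m^{-1}$ term is already obtained from a single low-frequency coordinate---for $m$ large enough that $\tlambda_1/m\leq\lambda_1$, $\sigma_1^2\geq \tlambda_1/(2m)\gtrsim m^{-1}$.

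The main obstacle I anticipate is less any individual estimate than the bookkeeping: ensuring that the two rate terms $m^{-1}$ and $(nm)^{-2\alpha/(1+2\alpha)}$ emerge cleanly and uniformly in the relative sizes of $n$ and $m$, and confirming that all dependence on $\talpha$ in the interior of the sum collapses into the constant $\sum_k\tlambda_k$ rather than leaking into the asymptotic rate. The Gaussian conjugacy step---joint diagonalization, marginalization of the nuisance coordinates $f^{(j)}_k$, and coordinatewise independence of the posterior---is the enabling reduction that converts the infinite-dimensional problem into this essentially direct computation.
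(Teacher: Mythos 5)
Your proposal is correct, and it takes a genuinely different and arguably cleaner route than the paper's. You exploit the Gaussian conjugacy fully: after diagonalizing and marginalizing the nuisance coordinates $f_k^{(j)}$, the Bayes risk is literally the sum of coordinatewise posterior variances $\sigma_k^2 = (\lambda_k^{-1} + m(\tlambda_k + n^{-1})^{-1})^{-1}$, and you then sandwich each $\sigma_k^2$ with elementary harmonic-mean inequalities and split the sum at the single spectral cutoff $K\asymp(nm)^{1/(1+2\alpha)}$. The paper instead treats the posterior mean as a fixed estimator and runs a full bias--variance decomposition of $\E_g\|\hg - g\|_2^2$, then averages over $\pi_\Lambda$; to control the resulting sums it partitions $\mathbb{N}$ into three intervals according to which of $nm\lambda_k$, $n\tlambda_k$, or $1$ dominates the denominator and further splits into two cases according to whether $\alpha \leq \talpha + (1/2+\talpha)\delta$. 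Your route avoids all of that case bookkeeping because the identity ``risk $=$ posterior variance'' collapses bias and variance into one quantity, and your two dominations $\sigma_k^2 \leq \lambda_k$ and $\sigma_k^2 \leq (\tlambda_k + n^{-1})/m$ cleanly separate the tail (prior-dominated) from the bulk (data-dominated). What the paper's heavier machinery buys is uniformity of method: the three-interval partition is essentially forced in the companion proof of the Bayes risk for $f^{(1)}$ (Theorem~5), where no single closed-form variance collapses the computation, and the paper reuses the same scaffolding here for consistency. Two small cosmetic points in your write-up: when you argue $\sigma_1^2 \gtrsim m^{-1}$ you should note that the case $\tlambda_1/m > \lambda_1$ (i.e.\ $m$ bounded) is also fine because then $\sigma_1^2 \geq \lambda_1/2 \gtrsim 1 \gtrsim m^{-1}$; and the first equality of the theorem is, as you say, immediate from Bayes optimality of the posterior mean under squared error, but it is worth stating that the estimator may be taken to depend only on $(Y^{(1)},\dots,Y^{(m)})$, not on the latent $f^{(j)}$, which is what licenses the marginalization step.
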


The Bayes risk's rate is thus $m^{-1/2}+(nm)^{-\alpha/(1+2\alpha)}$, where we see the number of subjects and the sampling precision of each observed function, i.e., $m$ and $n$, playing different roles.
As common wisdom suggests, this rate cannot be made arbitrarily small by fixing $m$ and appropriately increasing $n$, seeing as perfect knowledge of finitely many subject-specific functions still allows for much variation in what the population-level function could be. 
On the other hand, this rate shrinks to zero as $m \rightarrow \infty$ due to the mutual independence of the Brownian motions modeling the observation noise and of the subject-specific deviations $(f^{(j)}-g) \mid g$ for $j=1,\ldots,m$, which implies that we can view the $m$ observed functions $Y^{(1)}, \ldots, Y^{(m)} \mid g$ as $m$ mutually independent noisy versions of $g$.

\begin{figure}
\includegraphics[width=0.7\textwidth]{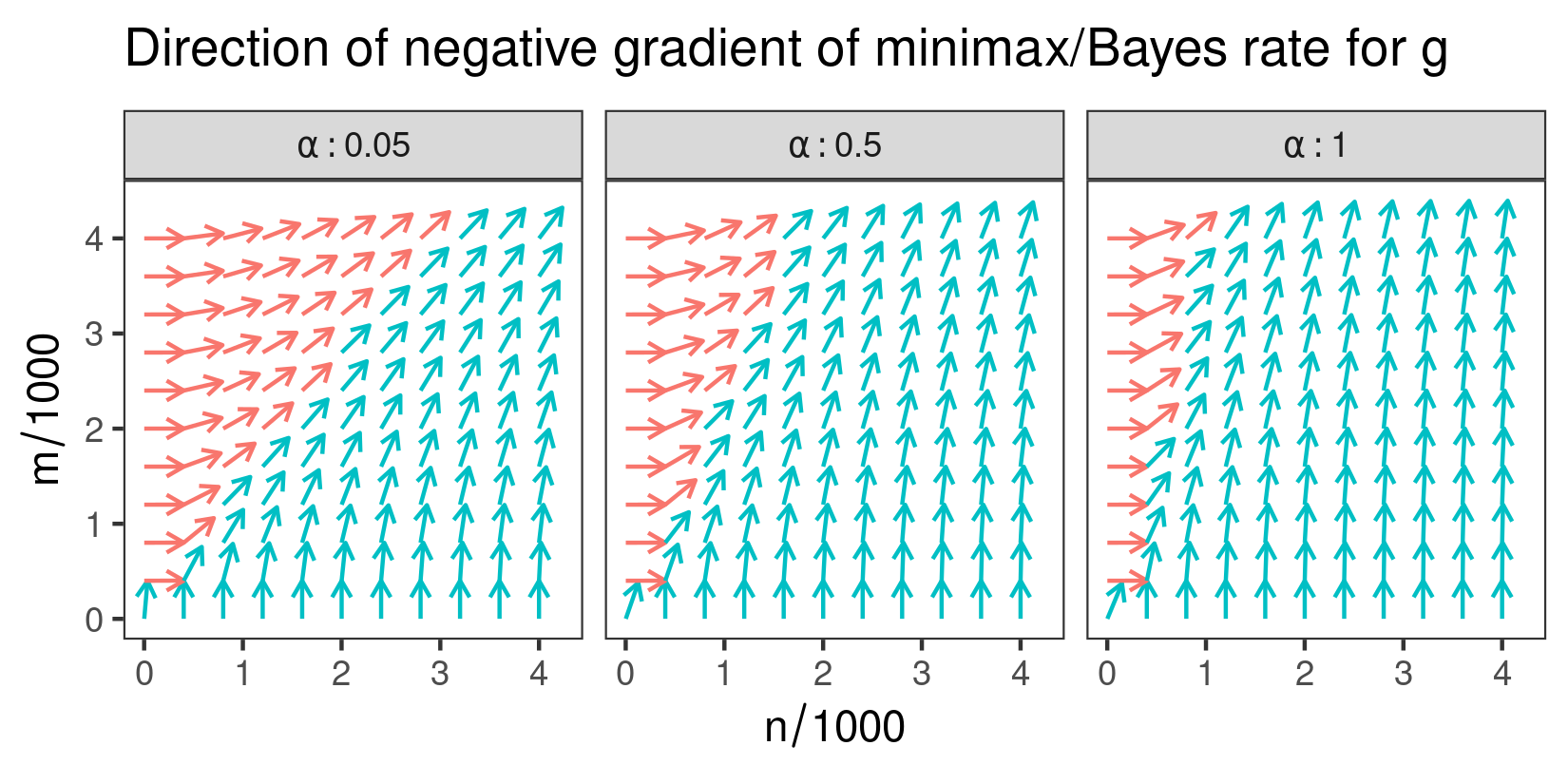}
\centering
\caption{Direction of the negative gradient of the Bayes rate $m^{-1/2}+(nm)^{-\alpha/(1+2\alpha)}$ (i.e., the direction of the greatest decrease of the rate) for estimating $g$. Arrow color indicates whether arrow slope is above or below 45 degrees.}
\label{fig:negativegradientg}
\end{figure}

Despite this, a sampling budget should still be spent partially on increasing $n$ even if the main focus is to estimate the population-level function $g$. 
We can see this in several ways.
The Bayes risk's rate becomes $m^{-\alpha/(1+2\alpha)}$ if $n$ is fixed but improves to the parametric speed $m^{-1/2}$ if $n$ grows with $m$ at least as fast as $m^{1/(2\alpha)}$ (i.e., if $n = n_m \gtrsim m^{1/(2\alpha)}$).
Also, Figure~\ref{fig:negativegradientg} shows fairly large regions of $(n,m)$ values where the direction of the rate's negative gradient is below 45 degrees, i.e., where the rate would decrease more by an infinitesimal increase of $n$ than by the same infinitesimal increase of $m$.
Finally, in the simulation study in Section~\ref{sec:study2}, the empirical out-of-sample risk of estimating $g$ decreases more by a unit increase of $n$ than by a unit increase of $m$ for many values of $(n,m,\alpha)$ (e.g., $(1,10,1)$ or $(1,100,0.05)$ in Figure~\ref{fig:pthmseg_fixta}).
Thus when designing a study where the main focus is to estimate the population-level function and the cost of a unit increase of $n$ is the comparable to the cost of a unit increase of $m$, the value of $n$ should be smaller than $m$, but not too much so. We note that in different applications, the cost of recruiting additional subjects might be higher (or lower) than measuring each individual subject more deeply; in order to arrive at the appropriate sampling design in such cases, one could apply similar reasoning through computing the risk gradients with respect to the unit cost for increasing $m$ and $n$ rather than to $m$ and $n$ directly.

Now we consider model \eqref{eq:gwnmodelfreq}, which assumes the population-level function $g^*$ is fixed, and derive the minimax risk for estimating $g^*$ over a Sobolev ball $\cS(\alpha,R)$ as defined in \eqref{eq:ellipsoid}.
We do so by providing matching lower and upper bounds up to a multiplicative constant. 
The following theorem provides a minimax lower bound that equals the Bayes risk's rate if the ball's radius grows logarithmically with $n$.

\begin{theorem} \label{thm:lbminimaxg}
    Under model \eqref{eq:gwnmodelfreq}, we have
    \begin{equation*}  \label{eq:minimaxrisklowerboundg}
    \inf_{\hg} \sup_{g^* \in \cS(\alpha, 2 \sqrt{\log n})} \E_{g^*}\|\hg-g^*\|_2^2
    \gtrsim 
     m^{-1}+(nm)^{-\frac{2\alpha}{1+2\alpha}}.
    \end{equation*}
\end{theorem}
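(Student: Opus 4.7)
The plan is to reduce the problem to a heteroscedastic Gaussian sequence model and then bound the two terms on the right-hand side separately, using a two-point Le Cam argument for the parametric term and an Assouad-type argument for the nonparametric term. The additional exchangeability structure in model \eqref{eq:gwnmodelfreq} simplifies nicely once we project onto the eigenbasis of $\tilde\Lambda$.

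First I would project onto $\{\psi_k\}$ by defining $Y^{(j)}_k \coloneqq \int_0^1 \psi_k(t)\,\dif Y^{(j)}_t$. Since $f^{(j)}\mid g^*\sim\GP(g^*,\tilde\Lambda)$ and the $\psi_k$ diagonalize $\tilde\Lambda$, one obtains $Y^{(j)}_k\mid g^*\ind \Normal(g^*_k,\tilde\lambda_k+n^{-1})$, independent across $j$ and $k$. By sufficiency, estimation of $g^*$ reduces to using $\bar Y_k\coloneqq m^{-1}\sum_j Y^{(j)}_k\sim\Normal(g^*_k,\sigma_k^2/m)$ with $\sigma_k^2\coloneqq \tilde\lambda_k+n^{-1}$. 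In this representation, the Kullback–Leibler divergence between the joint laws under $g^*$ and $g^{**}$ equals $(m/2)\sum_k (g^*_k-g^{**}_k)^2/\sigma_k^2$, and crucially $\sigma_k^{-2}\leq n$ for every $k$ regardless of $\tilde\alpha$.

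For the $m^{-1}$ term I would apply two-point Le Cam with $g_0=0$ and $g_1 = c\,m^{-1/2}\psi_1$ for a small constant $c>0$. Both lie in $\cS(\alpha, 2\sqrt{\log n})$ once $n$ is large, and the KL equals $c^2/[2(\tilde\lambda_1+n^{-1})]$, which is bounded, so the total variation is bounded away from $1$. Le Cam then yields minimax risk $\gtrsim \|g_1-g_0\|_2^2 \asymp c^2/m$.

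For the $(nm)^{-2\alpha/(1+2\alpha)}$ term I would invoke Assouad's lemma on a block of high-frequency coordinates. Set $K^*\coloneqq \lfloor(nm)^{1/(1+2\alpha)}\rfloor$ and $\eta^2\coloneqq c_1/(nm)$ for a small constant $c_1>0$, and consider the $2^{K^*}$ hypotheses
\begin{equation*}
g_\epsilon \;=\; \eta\sum_{k=K^*+1}^{2K^*}\epsilon_{k-K^*}\,\psi_k,\qquad \epsilon\in\{0,1\}^{K^*}.
\end{equation*}
Sobolev check: $\sum_k g_{\epsilon,k}^2\,k^{2\alpha}\leq \eta^2 (2K^*)^{1+2\alpha}\lesssim c_1$, which is at most $(2\sqrt{\log n})^2$ for sufficiently small $c_1$. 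The KL between Hamming-neighbors is $m\eta^2/(2\sigma_k^2)\leq mn\eta^2/2 = c_1/2$, bounded, so neighboring distributions have TV bounded away from $1$. Assouad's lemma then delivers minimax risk $\gtrsim K^*\eta^2 \asymp (nm)^{-2\alpha/(1+2\alpha)}$. Combining the two lower bounds (via their maximum) gives the claimed rate.

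The main obstacle is bookkeeping: the bounded-KL condition and the Sobolev radius constraint must be compatible, which forces $\eta^2\asymp(nm)^{-1}$ together with $K^*\asymp(nm)^{1/(1+2\alpha)}$. These choices are compatible with a fixed (even constant) Sobolev radius precisely because $\sigma_k^{-2}\leq n$ uniformly in $k$, so the per-coordinate KL is at most $mn\eta^2/2$ independent of the unknown smoothness $\tilde\alpha$ of the cross-subject covariance; this is what decouples the argument from $\tilde\alpha$ and produces the rate as stated.
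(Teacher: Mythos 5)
Your proposal is correct, but it takes a genuinely different route from the paper. The paper's proof (Section~S.4) reuses the Bayes-risk machinery from Theorem~1: it lower-bounds the minimax risk over the Sobolev ball by the full Bayes risk under the GP prior $\pi_{\Lambda}$, then subtracts the contribution from the ball's complement and shows this is $o(n^{-C_\rho})$ via a Gaussian chaos tail bound (Theorem~3.1.9 of Gin\'e--Nickl). That concentration step is what forces the radius to grow like $\sqrt{\log n}$. You instead reduce by sufficiency to the heteroscedastic sequence model $\bar Y_k \sim \Normal\big(g^*_k, (\tilde\lambda_k+n^{-1})/m\big)$ and obtain the two terms separately by a two-point Le Cam argument and an Assouad block construction on frequencies $(K^*, 2K^*]$. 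This is more elementary and fully self-contained (no appeal to Theorem~1 or to GP tail bounds), and it proves the stronger statement over a ball of \emph{constant} radius---exactly the sharpening the paper's own Remark predicts would follow from a Rademacher-type prior, which is what your Assouad hypercube amounts to. Your key bookkeeping is sound: the per-coordinate KL is controlled by $\sigma_k^{-2}\leq n$ uniformly in $\tilde\alpha$, which makes $\eta^2\asymp (nm)^{-1}$ and $K^*\asymp (nm)^{1/(1+2\alpha)}$ simultaneously compatible with a bounded KL and a bounded Sobolev norm, and the maximum of the two separate lower bounds gives the stated sum up to constants.
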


\begin{remark}
The radius of the Sobolev ball contains a logarithmic multiplicative factor, which is of technical nature coming from our proof technique. Following the standard approach the minimax risk was bounded from below by the Bayes risk. To connect the Bayesian and frequentist models we used the GP prior on $g$ as given in model \eqref{eq:gwnmodelbayes}. This, however, results in a $\sqrt{\log n}$ factor coming from the tail behavior of the GP prior. The results could be made sharper by considering a Rademacher prior on $g$, see e.g., \cite{gine}. Here we do not follow this route as we wanted to exploit the connection between models \eqref{eq:gwnmodelbayes} and \eqref{eq:gwnmodelfreq}. 
\end{remark}

To prove that the above lower bound is sharp, we show it can be attained with or without prior knowledge about the underlying smoothness of the functions. We start by showing that two simple estimators can achieve this lower bound when the underlying smoothness is known, followed by building an adaptive estimator that accomplishes this without assuming such prior knowledge.
The Bayesian estimator is the posterior mean of $g$ given the observations and prior in model \eqref{eq:gwnmodelbayes}. 
An alternative estimator we consider is a thresholding estimator of the form
\begin{align} \label{eq:estg}
\tg^{K}(t) = \sum_{k=1}^K \hg_k \psi_k(t),
\end{align}
where $\hat{g}_k = m^{-1}\sum_{j=1}^{m} Y^{(j)}_k$ and $Y_k^{(j)} \coloneqq \langle Y^{(j)}, \psi_k \rangle = \int_0^1 Y^{(j)}(t) \psi_k(t) \dif t$ for each $j=1,\ldots,m$.
(In the left hand side of~\eqref{eq:estg}, we put the threshold $K$ in the superscript to avoid confusing the thresholding estimator $\tg^{K}(\cdot)$ with the coefficient estimator $\hg_K$ in~\eqref{eq:estg}.)
The following theorem provides the risk of the Bayesian and thresholding estimators maximized over the Sobolev ball.

\begin{theorem} \label{thm:upperboundg}
    If $\tilde{g}_{\alpha}$ is either the posterior mean associated with the prior $\pi_{\Lambda}$ defined in model \eqref{eq:gwnmodelbayes} or the estimator \eqref{eq:estg} using the threshold $K \asymp (nm)^{1/(1+2\alpha)}$, then under model \eqref{eq:gwnmodelfreq} we have
    \begin{align*}
        \sup_{g^* \in \cS(\alpha,R)} \E_{g^*} \|\tilde{g}_{\alpha}-g^*\|_2^2
        \lesssim 
        m^{-1} + (nm)^{-\frac{2\alpha}{1+2\alpha}}.
    \end{align*}
\end{theorem}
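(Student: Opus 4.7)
The plan is to diagonalize the model via Parseval's identity in the eigenbasis $\{\psi_k\}$, so that the problem reduces to an independent Gaussian sequence. Projecting model~\eqref{eq:gwnmodelfreq} onto $\psi_k$, the coefficients $Y_k^{(j)}=\langle Y^{(j)},\psi_k\rangle$ satisfy
\begin{equation*}
Y_k^{(j)} = g_k^* + \eta_k^{(j)} + n^{-1/2}\xi_k^{(j)},
\end{equation*}
with $\eta_k^{(j)}\sim\Normal(0,\tilde\lambda_k)$ and $\xi_k^{(j)}\sim\Normal(0,1)$ independent across $(j,k)$. The subject-averaged coefficient $\hg_k=m^{-1}\sum_{j=1}^m Y_k^{(j)}$ therefore satisfies $\hg_k=g_k^*+\varepsilon_k$ with $\varepsilon_k\sim\Normal(0,\sigma_k^2)$, $\sigma_k^2=\tilde\lambda_k/m+1/(nm)$, and $\{\varepsilon_k\}$ independent. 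By Parseval, both candidate estimators and their $L_2$ risks then decompose additively in $k$.

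For the thresholding estimator $\tg^K$ in~\eqref{eq:estg}, the bias--variance decomposition reads
\begin{equation*}
\E_{g^*}\|\tg^K-g^*\|_2^2 = \sum_{k\le K}\sigma_k^2 + \sum_{k>K}(g_k^*)^2 \lesssim \frac{1}{m} + \frac{K}{nm} + R^2 K^{-2\alpha},
\end{equation*}
where the variance bound uses $\sum_k\tilde\lambda_k<\infty$ (from $\tilde\lambda_k\asymp k^{-1-2\tilde\alpha}$), and the squared-bias bound uses the Sobolev constraint through $\sum_{k>K}(g_k^*)^2\le K^{-2\alpha}\sum_{k>K}k^{2\alpha}(g_k^*)^2\le R^2 K^{-2\alpha}$. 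Choosing $K\asymp(nm)^{1/(1+2\alpha)}$ equalizes $K/(nm)$ and $K^{-2\alpha}$ at order $(nm)^{-2\alpha/(1+2\alpha)}$, delivering the stated bound.

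For the Bayes posterior mean, the conjugate Gaussian calculation gives the linear shrinkage $\tg_k=w_k\hg_k$ with $w_k=\lambda_k/(\lambda_k+\sigma_k^2)$, and the frequentist MSE decomposes as $\sum_k w_k^2\sigma_k^2 + \sum_k(1-w_k)^2(g_k^*)^2$. For the variance I would use $w_k^2\sigma_k^2\le \lambda_k\sigma_k^2/(\lambda_k+\sigma_k^2)\le\min(\lambda_k,\sigma_k^2)$ and split at the oracle cutoff $K^*\asymp(nm)^{1/(1+2\alpha)}$ (where $\lambda_{K^*}\asymp 1/(nm)$): the head gives $\sum_{k\le K^*}\sigma_k^2\lesssim 1/m + K^*/(nm)$ exactly as in the thresholding analysis, and the tail gives $\sum_{k>K^*}\lambda_k\lesssim(K^*)^{-2\alpha}$. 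For the bias, I would use $(1-w_k)^2\le 1$ on $\{k>K^*\}$ combined with the Sobolev tail bound, and the key inequality $(1-w_k)^2\le 1-w_k\le \sigma_k^2/\lambda_k$ on $\{k\le K^*\}$. The essential algebraic step is
\begin{equation*}
\frac{\sigma_k^2}{\lambda_k}(g_k^*)^2 \asymp (k\sigma_k^2)\cdot k^{2\alpha}(g_k^*)^2,
\end{equation*}
together with the uniform control $k\sigma_k^2 = k^{-2\tilde\alpha}/m + k/(nm) \lesssim 1/m + K^*/(nm) \lesssim 1/m + (nm)^{-2\alpha/(1+2\alpha)}$ valid for every $k\le K^*$. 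Factoring this bound out of the sum and invoking $\sum_k k^{2\alpha}(g_k^*)^2\le R^2$ closes the estimate.

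The main obstacle I anticipate is the Bayes bias: $\sigma_k^2$ has two regimes---dominated by $\tilde\lambda_k/m$ for small $k$ and by $1/(nm)$ for large $k$---and the relation between $\alpha$ and $\tilde\alpha$ is unrestricted, so a naive argument would split into several subcases. The uniform bound on $k\sigma_k^2$ for $k\le K^*$ sidesteps the case analysis by showing that both noise components contribute at most one piece of the target rate. The thresholding analysis, by contrast, is essentially a one-line bias--variance balance.
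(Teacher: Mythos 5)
Your argument is correct. For the thresholding estimator it coincides with the paper's proof (Supplement Section S.11): same bias--variance split, same bound $V_g(K)\asymp m^{-1}+K/(nm)$ using summability of $\tilde\lambda_k$, same Sobolev tail bound, same choice of $K$. For the posterior mean, however, you take a genuinely different and cleaner route than the paper's Section S.7. The paper leans on its Bayes-risk machinery: it recycles the three-way partition of $\mathbb{N}$ at cutoffs $t_1,t_2$ determined by which of $nm\lambda_k$, $n\tilde\lambda_k$, or $1$ dominates the denominator, invokes the already-established identity $V_g(\hat g)\asymp m^{-1}+(nm)^{-2\alpha/(1+2\alpha)}$ for the variance, and then splits the bias estimate into the two cases $\alpha\lessgtr\tilde\alpha+(1/2+\tilde\alpha)\delta$ to evaluate the $t_1,t_2$ boundaries. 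You instead write the posterior mean as linear shrinkage $\tilde g_k=w_k\hat g_k$ with $w_k=\lambda_k/(\lambda_k+\sigma_k^2)$, use $w_k^2\sigma_k^2\le\min(\lambda_k,\sigma_k^2)$ and a single oracle cutoff $K^*\asymp(nm)^{1/(1+2\alpha)}$ for the variance, and for the bias the observation that $(1-w_k)\le\sigma_k^2/\lambda_k$ together with the algebraic identity $\sigma_k^2/\lambda_k\asymp(k\sigma_k^2)k^{2\alpha}$ lets you factor out the uniform bound $k\sigma_k^2\lesssim m^{-1}+(nm)^{-2\alpha/(1+2\alpha)}$ valid over $k\le K^*$, then apply the Sobolev constraint once. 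This avoids the $\alpha$ vs.\ $\tilde\alpha+(1/2+\tilde\alpha)\delta$ case split and the extra interval $(t_1,t_2]$ entirely. The paper's approach has the advantage of uniformity with the Bayes-risk computations elsewhere in the supplement; your approach is shorter and more self-contained, and makes visible that it is precisely the two components of $k\sigma_k^2$ --- the between-subject noise $k\tilde\lambda_k/m\lesssim 1/m$ and the averaged observation noise $k/(nm)\le K^*/(nm)$ --- that each contribute one piece of the target rate.
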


This upper bound matches the lower bound in Theorem~\ref{thm:lbminimaxg} up to a logarithmic factor (the radius was set to $\sqrt{\log n}$ in Theorem~\ref{thm:lbminimaxg} and to a constant in Theorem~\ref{thm:upperboundg}). We conjecture, however, that the rate does not contain the logarithmic term.

Both of the above estimators assume knowledge of the regularity hyperparameter $\alpha$ of the underlying true population-level function. However, this information is typically not available in practice. Furthermore, if the regularity hyperparameter is misspecified in either estimator and if the MISE of the estimator is maximized over $g$ in the Sobolev ball, the estimator's error rate becomes of larger order than the minimax rate; further detail is in Sections S.8 and S.11 of the Supplementary Material. Therefore, adaptive, data-driven choices should be considered for $\alpha$ or for the optimal threshold $K$ relying on it. We propose the following estimator based on Lepskii's method \citep{lepskii1992asymptotically,lepskii1993asymptotically} adapted to the present, hierarchical setting. 
We define the data-driven threshold
\begin{equation}  \label{eq:thresholdadaptg}
\hk = \min\, \left\{k: \forall\ell \in \left(k, \sqrt{nm}\right],\, \|\tilde{g}^{\ell} - \tilde{g}^k\|_2^2  \leq \tau \frac{\ell}{nm} \right\}
\end{equation}
for some $\tau > 6$. 
This threshold is well defined, seeing as the set on the right hand side always contains $\sqrt{nm}$ and hence is never empty.
The following theorem implies the minimax rate is achieved by the estimator~\eqref{eq:estg} with this data-driven threshold without any knowledge of $\alpha$ (or $\tilde\alpha$), which makes it an adaptive estimator.

\begin{theorem}  \label{thm:adaptiveg}
Take $\alpha > 0$, and recall the regularity parameter $\tilde\alpha$ of the covariance function $\tilde\Lambda$ from Section~\ref{sec:samplingmodel}.
If $m=O(\exp\{cn^{1/(1+2A)}\})$ for some $c>0$ and $A>\max\{\alpha,\tilde\alpha\}$, then there exists a constant $D_{\alpha,\tilde\alpha}>0$ such that under model \eqref{eq:gwnmodelfreq} the estimator \eqref{eq:estg} using the data-driven threshold \eqref{eq:thresholdadaptg} satisfies 
\begin{align*}
\sup_{g^* \in \cS(\alpha, R)} 
\E_{g^*}\|\tilde{g}_{\hk}-g^*\|_2^2 \leq D_{\alpha,\tilde\alpha} \left(m^{-1} + (nm)^{-\frac{2\alpha}{1+2\alpha}}\right).
\end{align*}
\end{theorem}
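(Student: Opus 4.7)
The plan is to carry out a standard Lepskii-type bias--variance balancing argument, tailored to the hierarchical noise structure here. Writing $Y_k^{(j)} = g_k^* + \sqrt{\tilde\lambda_k}\,\xi_k^{(j)} + n^{-1/2}Z_k^{(j)}$ with independent standard Gaussian $\xi_k^{(j)}, Z_k^{(j)}$, one sees that $\hg_k - g_k^* \ind \Normal(0, \sigma_k^2)$ with $\sigma_k^2 := \tilde\lambda_k/m + 1/(nm)$, and hence for any deterministic threshold $K$,
\[
\E_{g^*}\|\tg^K-g^*\|_2^2 = \sum_{k\leq K}\sigma_k^2 + \sum_{k>K}(g_k^*)^2 \lesssim \frac{1}{m} + \frac{K}{nm} + R^2 K^{-2\alpha}
\]
uniformly on $\cS(\alpha,R)$. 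The oracle threshold $k^* \asymp (nm)^{1/(1+2\alpha)}$ minimizes this upper bound at the target rate $R_{n,m,\alpha} := m^{-1} + (nm)^{-2\alpha/(1+2\alpha)}$, exactly as in Theorem~\ref{thm:upperboundg}, which serves as the oracle benchmark.

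The next step is to decompose, writing $k_\star := k^* \wedge \lceil\sqrt{nm}\rceil$,
\[
\|\tg^{\hk}-g^*\|_2^2 \leq 2\|\tg^{\hk}-\tg^{k_\star}\|_2^2 + 2\|\tg^{k_\star}-g^*\|_2^2,
\]
where the second summand is already $O(R_{n,m,\alpha})$ by Theorem~\ref{thm:upperboundg}. On the ``good'' event $\{\hk \leq k_\star\}$ the defining property~\eqref{eq:thresholdadaptg} of $\hk$, applied with $\ell = k_\star$, immediately gives $\|\tg^{\hk}-\tg^{k_\star}\|_2^2 \leq \tau k_\star/(nm) \lesssim R_{n,m,\alpha}$. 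The remaining work lies in controlling the complementary ``bad'' event $\{\hk > k_\star\}$, which by the definition of $\hk$ requires the existence of some $\ell \in (k_\star,\sqrt{nm}]$ with $\|\tg^\ell-\tg^{k_\star}\|_2^2 > \tau\ell/(nm)$.

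For the bad event one would (i) bound the mean $\E\|\tg^\ell-\tg^{k_\star}\|_2^2 = \sum_{k_\star < j \leq \ell}\{(g_j^*)^2 + \sigma_j^2\}$ using the Sobolev tail $\sum_{j>k_\star}(g_j^*)^2 \lesssim R^2 k_\star^{-2\alpha}$ and the eigenvalue tail $\sum_{j>k_\star}\tilde\lambda_j \lesssim k_\star^{-2\tilde\alpha}$ to show that, for $\ell > k_\star$, this mean does not exceed a small fraction of $\tau\ell/(nm)$; (ii) apply a Laurent--Massart style concentration inequality to the non-central weighted chi-square $\sum_{k_\star<j\leq\ell}\{\hg_j^2 - \E\hg_j^2\} = \sum_{k_\star<j\leq\ell}\{2g_j^*\epsilon_j + (\epsilon_j^2 - \sigma_j^2)\}$, producing for $\tau > 6$ a deviation bound of the form $\exp(-c\ell)$; (iii) union-bound over $\ell \in (k_\star,\sqrt{nm}]$. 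The hypothesis $m = O(\exp\{cn^{1/(1+2A)}\})$ with $A > \max\{\alpha,\tilde\alpha\}$ enters precisely here, forcing $\log\sqrt{nm} \lesssim n^{1/(1+2A)} \ll k_\star$, so the logarithmic union-bound cost is absorbed by the exponential decay. A crude deterministic bound on $\|\tg^{\hk}-g^*\|_2^2$ on the bad event, controlled via $\|g^*\|_2^2 \leq R^2$ and the easily bounded second moments of the Gaussian $\hg_j$, multiplied by this tiny tail probability, then yields a negligible contribution.

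The main obstacle I anticipate lies in step (ii): because $\sigma_j^2 = \tilde\lambda_j/m + 1/(nm)$ is highly heteroscedastic across $j$, and $\tilde\alpha$ may differ substantially from $\alpha$, a naive Bernstein bound produces a deviation scale proportional to $\max_j \sigma_j^2$, which is too loose to yield an exponential-in-$\ell$ tail. Handling this requires splitting the summation at the breakpoint $j_0 \asymp n^{1/(1+2\tilde\alpha)}$ where the two variance regimes cross over, applying a Bernstein/chi-square concentration separately in each regime, and checking that the resulting constants together with the stipulated threshold $\tau > 6$ are simultaneously compatible. The requirement $A > \tilde\alpha$ in the hypothesis on $m$ is precisely what makes these constants uniformly controllable, while $A > \alpha$ ensures the union-bound cost stays subleading relative to $k_\star$; tracing through these constants is tedious but standard.
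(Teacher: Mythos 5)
Your overall strategy---bias--variance decomposition, Lepskii-type good/bad event split, concentration of a weighted non-central chi-square, and a crude bound times an exponentially small bad-event probability---is exactly the route the paper takes. In particular, the paper's key concentration tool is Theorem 3.1.9 of Gin\'e--Nickl (their Lemma~S.1), which is a ``self-normalizing'' weighted chi-square inequality and plays the role of the Laurent--Massart bound you invoke in step (ii). Your anticipated obstacle about heteroscedasticity is, however, a red herring: once the oracle threshold is chosen large enough (see below), every index $j$ appearing in the bad-event sums satisfies $\tilde\lambda_j \lesssim 1/n$, so the per-coordinate variances $\sigma_j^2 = \tilde\lambda_j/m + 1/(nm)$ are uniformly of order $1/(nm)$; no regime-split at $j_0 \asymp n^{1/(1+2\tilde\alpha)}$ is needed.

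The genuine gap is in your choice of oracle. With $k_\star \asymp (nm)^{1/(1+2\alpha)}$, step (i) requires the ``drift'' term $\sum_{k_\star < j \leq \ell} \sigma_j^2$ to be a small fraction of $\tau\ell/(nm)$, which in turn needs $\sum_{j > k_\star}\tilde\lambda_j/m \lesssim k_\star^{-2\tilde\alpha}/m$ to be $\lesssim k_\star/(nm)$, i.e.\ $k_\star \gtrsim n^{1/(1+2\tilde\alpha)}$. When $\tilde\alpha < \alpha$ and $m \lesssim n^{(2\alpha-2\tilde\alpha)/(1+2\tilde\alpha)}$ this fails: $(nm)^{1/(1+2\alpha)} < n^{1/(1+2\tilde\alpha)}$, so the Lepskii criterion will (correctly) \emph{not} stop at $k_\star$ --- the observed $\hg_j$ over $j\in(k_\star, n^{1/(1+2\tilde\alpha)}]$ carry extra cross-subject variance of order $\tilde\lambda_j/m \gg 1/(nm)$, and $\hk$ is pushed beyond $k_\star$ with high probability. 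Your bad-event argument therefore covers an event of non-negligible probability, and the ``tiny tail probability times crude bound'' step breaks down. The fix is to enlarge the oracle to $k_\star \asymp (nm)^{1/(1+2\alpha)} \vee n^{1/(1+2\tilde\alpha)}$ (equivalently, to define it implicitly through $\sum_{j>k_\star}(g_j^{*2} + \tilde\lambda_j) \leq k_\star/(nm)$, as the paper does with its oracle $k_2^*$). The additional good-event variance cost $n^{1/(1+2\tilde\alpha)}/(nm) = n^{-2\tilde\alpha/(1+2\tilde\alpha)}/m \leq m^{-1}$ is then absorbed into the target rate, while the enlarged $k_\star$ guarantees that the bad-event sums only involve indices where $\tilde\lambda_j \lesssim 1/n$, so that the weighted chi-square concentration produces the $\exp(-c\ell)$ tail you want. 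With this modification, and the $m = O(\exp\{cn^{1/(1+2A)}\})$ hypothesis controlling the union bound exactly as you describe, the argument closes.
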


\subsection{Estimating the subject-specific function} \label{sec:estf}

Similar to Section~\ref{sec:estg}, we first find the Bayes risk for estimating the subject-specific function $\fo$ when the population-level function $g$ has a mean-zero GP prior distribution with covariance function $\Lambda$ from model~\eqref{eq:gwnmodelbayes}.
The following theorem provides the asymptotic rate of squared Bayes risk as well as an estimator that achieves this rate. 
\begin{theorem}  \label{thm:lbbayesriskf}
Consider the hierarchical model~\eqref{eq:gwnmodelbayes}. Then the minimum of the average MISE (with respect to the distribution of $g\sim\pi_{\Lambda}$) for estimating $f^{(1)}$ is equal to
    \begin{equation*}  \label{eq:bayesrisklowerboundf1}
    \inf_{\hfo} \int \E_g \|\hfo-f^{(1)}\|_2^2 \dif \pi_{\Lambda}(g)
    =  \int \E_g \|\tilde{f}_{\Lambda, \tilde\Lambda}-f^{(1)}\|_2^2 \dif \pi_{\Lambda}(g)
    \asymp  n^{-\frac{2\tilde\alpha}{1+2\tilde\alpha}}+(nm)^{-\frac{2\alpha}{1+2\alpha}},
    \end{equation*}
where $\tilde{f}_{\Lambda, \tilde\Lambda}$ denotes the posterior mean for $\fo$ given the data $Y^{(1)}, \ldots, Y^{(m)}$ and prior $\pi_{\Lambda}$. 
\end{theorem}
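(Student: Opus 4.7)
The plan is to decouple the hierarchical model into independent coordinate-wise problems by projecting onto the shared eigenbasis $\{\psi_k\}$. Writing $Y_k^{(j)} = \langle Y^{(j)}, \psi_k\rangle$, $f_k^{(j)} = \langle \fj, \psi_k\rangle$, and $g_k = \langle g, \psi_k\rangle$, model~\eqref{eq:gwnmodelbayes} becomes, independently across $k$,
\[
Y_k^{(j)} \mid f_k^{(j)} \ind \Normal(f_k^{(j)}, n^{-1}),\quad f_k^{(j)} \mid g_k \iid \Normal(g_k, \tlambda_k),\quad g_k \sim \Normal(0, \lambda_k).
\]
By Parseval, the total MISE equals $\sum_k \E[(f_k^{(1)} - \hf_k^{(1)})^2]$, and by standard Bayes optimality the minimum is attained coordinate-wise by the posterior mean, so the minimum Bayes MISE equals $\sum_k r_k$ with $r_k := \V[f_k^{(1)} \mid Y^{(1)},\ldots,Y^{(m)}]$, attained by $\tf_{\Lambda,\tLambda}$.

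The next step is to compute $r_k$ in closed form. Using the tower property---first conditioning on $g_k$, so that $f_k^{(1)} \mid g_k, Y^{(1)},\ldots,Y^{(m)}$ depends only on $(g_k, Y_k^{(1)})$ via Gaussian conjugacy, and then integrating out the Gaussian posterior of $g_k$ given $Y^{(1)},\ldots,Y^{(m)}$ with variance $v_g = \lambda_k V_k /(V_k + m\lambda_k)$ where $V_k := \tlambda_k + n^{-1}$---the law of total variance yields
\[
r_k = \frac{\tlambda_k}{n V_k} + \frac{\lambda_k}{n^2\, V_k(V_k + m\lambda_k)}.
\]
The first term is the single-subject shrinkage residual and the second captures the additional gain from pooling across the $m$ subjects.

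For the upper bound, $\tlambda_k/(nV_k) = \tlambda_k/(1+n\tlambda_k) \asymp \min(\tlambda_k, n^{-1})$, and using $V_k \ge n^{-1}$ together with $V_k + m\lambda_k \ge \max(n^{-1}, m\lambda_k)$ bounds the second term by $\min(\lambda_k, (mn)^{-1})$. The polynomial eigendecays $\tlambda_k \asymp k^{-1-2\talpha}$ and $\lambda_k \asymp k^{-1-2\alpha}$, combined with the standard split at each smoothness cutoff, give $\sum_k r_k \lesssim n^{-2\talpha/(1+2\talpha)} + (mn)^{-2\alpha/(1+2\alpha)}$. For the matching lower bound, the first term alone gives $\sum_k r_k \gtrsim n^{-2\talpha/(1+2\talpha)}$. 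For the second contribution, I would restrict attention to $k \ge c\, n^{1/(1+2\talpha)}$ where $\tlambda_k \lesssim n^{-1}$ so that $V_k \asymp n^{-1}$, yielding $r_k \gtrsim \min(\lambda_k, (mn)^{-1})$ on this range; when $(mn)^{1/(1+2\alpha)} \ge n^{1/(1+2\talpha)}$ summing along this tail recovers the full $(mn)^{-2\alpha/(1+2\alpha)}$ rate, and in the complementary case a short comparison using $m \ge 1$ forces $\alpha > \talpha$ and shows $(mn)^{-2\alpha/(1+2\alpha)} \lesssim n^{-2\talpha/(1+2\talpha)}$, so the first-term lower bound already suffices. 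The main obstacle is this regime split: no single range of $k$ contributes to both terms simultaneously, so matching bounds require carefully separating the two cutoffs $k\asymp n^{1/(1+2\talpha)}$ and $k \asymp (mn)^{1/(1+2\alpha)}$ and comparing their relative order.
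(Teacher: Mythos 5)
Your proof is correct and takes a genuinely cleaner route than the paper's. Both approaches compute the coordinate-wise posterior variance (which equals the Bayes risk by standard theory), but you get there via the law of total variance, which yields the compact closed form
\[
r_k = \frac{\tlambda_k}{nV_k} + \frac{\lambda_k}{n^2 V_k(V_k + m\lambda_k)}, \qquad V_k = \tlambda_k + n^{-1},
\]
in which the single-subject shrinkage and the cross-subject pooling contributions are visibly separate. (I checked that this agrees algebraically with the paper's $(n + \tlambda_k^{-1}b_k)^{-1}$.) The paper instead derives the full joint posterior of $(f,g)$, decomposes the Bayes risk as integrated squared bias plus integrated variance conditioned on $f$, and evaluates both terms by partitioning $\mathbb{N}$ into three intervals according to which of $nm\lambda_k$, $n\tlambda_k$, or $1$ dominates the common denominator. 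Your two-term decomposition replaces that three-way bookkeeping with two simple $\min$-type bounds ($\min(\tlambda_k,n^{-1})$ and $\min(\lambda_k,(nm)^{-1})$) summed at the two natural cutoffs $n^{1/(1+2\talpha)}$ and $(nm)^{1/(1+2\alpha)}$. What the paper's route buys is that the same interval machinery is re-used verbatim for the $g$-theorem and the misspecified-posterior sections, so a reader of the whole supplement amortizes the bookkeeping; what your route buys is a self-contained, more interpretable derivation for this single theorem, and the lower-bound argument is tighter in spirit---restricting the second term to the range $k\gtrsim n^{1/(1+2\talpha)}$ where $V_k\asymp n^{-1}$ and checking that the complementary regime $(nm)^{1/(1+2\alpha)}<n^{1/(1+2\talpha)}$ is already absorbed by the first term (via $m\ge 1\Rightarrow\alpha>\talpha$) is a clean way to close the case split.
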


The Bayes risk's rate is thus $n^{-\tilde\alpha/(1+2\tilde\alpha)}+(nm)^{-\alpha/(1+2\alpha)}$, which we see cannot be made arbitrarily small by fixing $n$ and appropriately increasing $m$, seeing as infinitely many noisy subject-specific functions allows for much variation in what the subject-specific deviation $(\fo-g)\mid g$ could be.
Note also that $f^{(1)}$ is $\{\alpha\wedge\tilde\alpha\}$-smooth; hence the corresponding minimax rate using the single-subject data $Y^{(1)}$ is $n^{-\{\alpha\wedge\tilde\alpha\}/(1+2\{\alpha\wedge\tilde\alpha\})}$. We compare this with the above derived multi-subject rate to see how learning $g$ aids with learning $f^{(1)}$.
If $\alpha \geq \talpha$, i.e., if $f^{(1)}$'s roughness comes from the covariance function $\tLambda$ rather than from $g$, the two rates equal $n^{-\tilde\alpha/(1+2\tilde\alpha)}$, which implies estimating $g$ does not help with estimating $f^{(1)}$.
If $\alpha < \talpha$, i.e., if $f^{(1)}$'s roughness comes from $g$, the single-subject rate becomes $n^{-\alpha/(1+2\alpha)}$ which is strictly larger than either $n^{-\tilde\alpha/(1+2\tilde\alpha)}$ or $(nm)^{-\alpha/(1+2\alpha)}$ and hence implies that estimating $g$ helps with estimating $f^{(1)}$.
In this region, we now examine the degree to which learning $g$ aids with learning $f^{(1)}$. 
As such, let $\delta \coloneqq \log m / \log n$ and suppose $\talpha > \alpha$. 
Then $n^{-\tilde\alpha/(1+2\tilde\alpha)} > (nm)^{-\alpha/(1+2\alpha)}$ if and only if $\alpha \{1+\delta+2\talpha\delta\} > \talpha > \alpha$, which means increasing the number of subjects will help to learn $f^{(1)}$ only up to a certain point.
Specifically, any additional increase in $m$ beyond $n^{(\talpha\alpha^{-1}-1)/(1+2\talpha)}$ will not improve the estimation of $f^{(1)}$.

\begin{figure}
\includegraphics[width=0.7\textwidth]{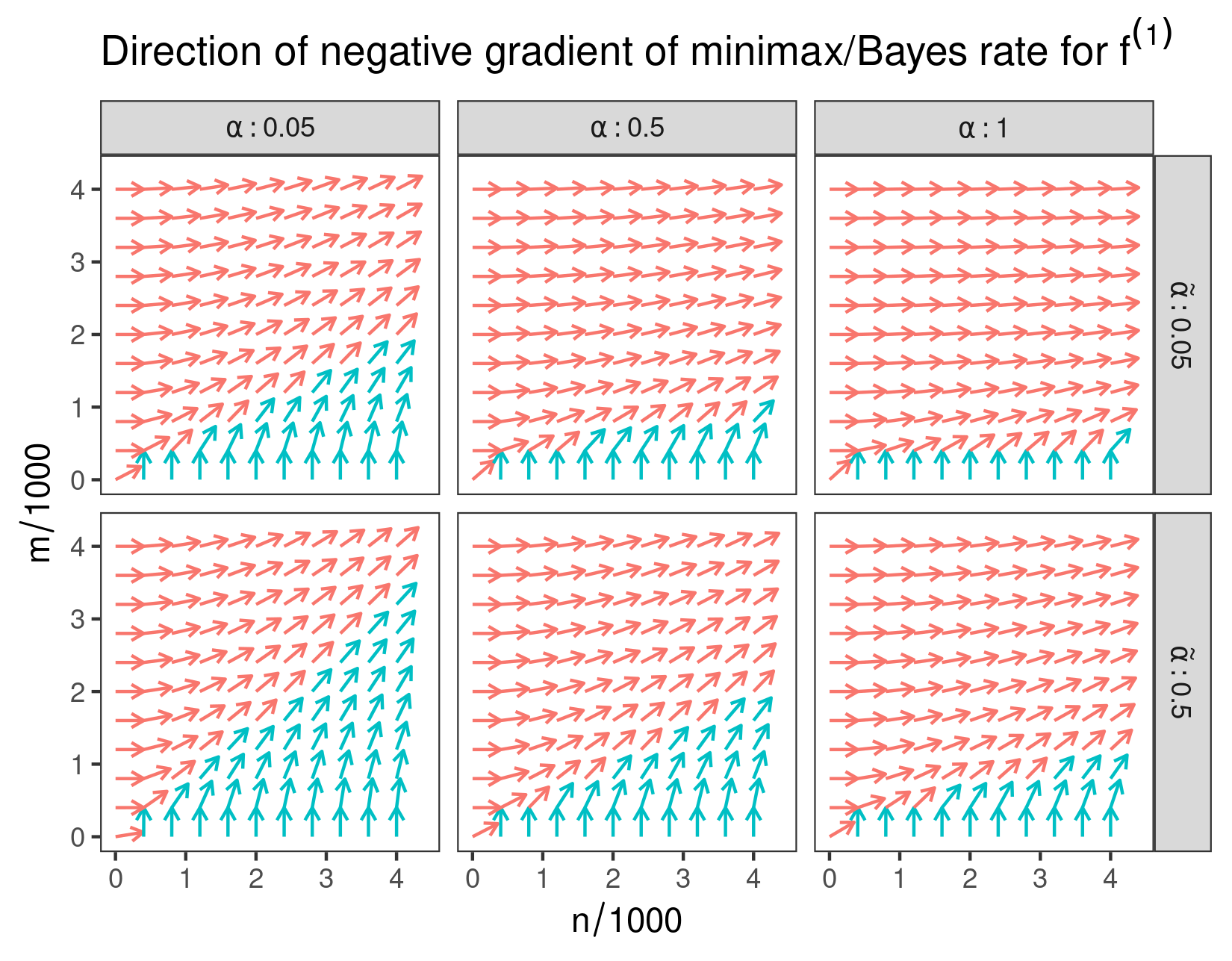}
\centering
\caption{Direction of the negative gradient of the Bayes rate $n^{-\tilde\alpha/(1+2\tilde\alpha)}+(nm)^{-\alpha/(1+2\alpha)}$ (i.e., the direction of the greatest decrease of the rate) for estimating $f^{(1)}$. Arrow color indicates whether arrow slope is above or below 45 degrees.}
\label{fig:negativegradientf}
\end{figure}

Similar to the previous section, the previous point implies a sampling budget should still be spent partially on increasing $m$ even if the main focus is to estimate the one subject-specific function $\fo$.
We emphasize that this does not even take into account that increasing $m$ allows us to learn about additional subjects (e.g., patients) whereas increasing $n$ has no such benefit. To bolster this argument, 
Figure~\ref{fig:negativegradientf} shows sizeable regions of $(n,m)$ values where the direction of the rate's negative gradient is above 45 degrees, i.e., where the rate would decrease more by an infinitesimal increase of $m$ than by the same increase of $n$.
This behavior is corroborated by the simulation study in Section~\ref{sec:study2}, where the empirical out-of-sample risk of estimating $\fo$ decreases more by a unit increase of $m$ than by a unit increase of $n$ for many values of $(n,m,\alpha,\tilde\alpha)$ (e.g., $(1,1000,0.2,0.05)$ or $(1,100,0.2,0.5)$ in Figure~\ref{fig:pmsef_fixta}).
Thus when designing a study where the main focus is to estimate the subject-specific function $\fo$ and the cost of a unit increase of $n$ is the similar to that for $m$, the value of $m$ should be smaller than $n$, but not too much so. Again, one can generalize the reasoning to cases in which the sampling cost for a unit increase in $m$ and $n$ are different through computing the risk gradient in terms of the corresponding unit cost.

Now we find the minimax risk for estimating $\fo$ for fixed $g^*$ in the Sobolev ball $\cS(\alpha,R)$ as defined in \eqref{eq:ellipsoid}.
As in the previous section, we do so by providing matching lower and upper bounds up to a multiplicative constant. 
The following theorem provides a minimax lower bound that equals the Bayes risk's rate if the ball's radius grows logarithmically with $n$. 

\begin{theorem} \label{thm:lbminimaxf}
    Under model \eqref{eq:gwnmodelfreq}, we have
    \begin{equation*}  
    \inf_{\hf} \sup_{g^* \in \cS(\alpha, 2 \sqrt{\log n})} \E_{g^*}\|\hf-\fo\|_2^2
    \gtrsim 
    n^{-\frac{2\tilde\alpha}{1+2\tilde\alpha}}+(nm)^{-\frac{2\alpha}{1+2\alpha}}.
    \end{equation*}
\end{theorem}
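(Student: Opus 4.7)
The plan is to mirror the strategy used for Theorem~\ref{thm:lbminimaxg}: lower bound the frequentist minimax risk over the Sobolev ball by a Bayes risk under a suitable prior supported on that ball, then invoke Theorem~\ref{thm:lbbayesriskf} to identify the rate. The starting point is the standard reduction
\[
\inf_{\hf} \sup_{g^* \in \cS(\alpha, 2\sqrt{\log n})} \E_{g^*}\|\hf-\fo\|_2^2 \;\geq\; \inf_{\hf} \int \E_g \|\hf-\fo\|_2^2 \, d\nu(g),
\]
valid for any probability measure $\nu$ supported on $A := \cS(\alpha, 2\sqrt{\log n})$. The task reduces to exhibiting such a $\nu$ whose Bayes risk matches the rate of Theorem~\ref{thm:lbbayesriskf}.

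Because $g$ drawn from the full GP prior $\pi_{\Lambda}$ satisfies $\|g\|_{\cS(\alpha)}=\infty$ almost surely (since $\sum_k \lambda_k k^{2\alpha} \asymp \sum_k k^{-1}$ diverges), I would instead take $\nu$ to be a Karhunen--Lo\`eve-truncated version of $\pi_{\Lambda}$ conditioned on $A$. Concretely, let $\pi_{\Lambda}^K$ denote the law of $g=\sum_{k=1}^K g_k \psi_k$ with $g_k \ind N(0,\lambda_k)$; then $\|g\|_{\cS(\alpha)}^2$ is a finite chi-square-type sum with mean $\sum_{k=1}^K \lambda_k k^{2\alpha} \asymp \log K$. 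A chi-square deviation bound (or Borell--TIS applied to the Gaussian vector $(g_k)_{k\leq K}$) yields $\pi_{\Lambda}^K\{\|g\|_{\cS(\alpha)} > 2\sqrt{\log n}\}=o(1)$ provided $K$ is polynomial in $n$, so $\nu := \pi_{\Lambda}^K(\,\cdot\mid A)$ is well defined and supported on $A$.

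Next I would verify that the Bayes risk under the unconditioned $\pi_{\Lambda}^K$ remains of order $n^{-2\tilde\alpha/(1+2\tilde\alpha)}+(nm)^{-2\alpha/(1+2\alpha)}$. This is essentially a rerun of the proof of Theorem~\ref{thm:lbbayesriskf}, supplemented by the observation that both terms of that rate are driven by the first $(nm)^{1/(1+2\alpha)}$ frequencies in the Karhunen--Lo\`eve expansion, so any truncation level $K \gg (nm)^{1/(1+2\alpha)}$ preserves the rate. Passing from $\pi_{\Lambda}^K$ to $\nu$ then amounts to decomposing $\int \E_g\|\hf-\fo\|_2^2\, d\pi_{\Lambda}^K(g)$ over $A$ and $A^c$ and showing the tail contribution is negligible compared to the target rate.

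The main obstacle will be this last step: controlling the $A^c$ part of the integrated risk without placing ad hoc restrictions on $\hf$, since an unconstrained Bayes-optimal $\hf$ could a priori incur arbitrarily large loss on outlier $g$. A standard workaround is to first argue that one may restrict attention to estimators with $\|\hf\|_2 = O(\sqrt{\log n})$ without affecting the rate (by projecting any $\hf$ onto a Euclidean ball of that radius, which cannot increase the worst-case risk over $A$), after which the $A^c$-contribution is bounded by $\pi_{\Lambda}^K(A^c)$ times a polylogarithmic factor and is negligible next to $n^{-2\tilde\alpha/(1+2\tilde\alpha)}+(nm)^{-2\alpha/(1+2\alpha)}$. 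The rest is bookkeeping carried over verbatim from the proof of Theorem~\ref{thm:lbminimaxg}.
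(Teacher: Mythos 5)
Your instinct that the full GP prior $\pi_{\Lambda}$ places zero mass on $\cS(\alpha,R)$ is correct: under $\pi_{\Lambda}$ we have $\|g\|_{\cS(\alpha)}^2 \overset{d}{=} \sum_k \lambda_k k^{2\alpha} Z_k^2 \asymp \sum_k k^{-1} Z_k^2 = \infty$ a.s., so one cannot blithely restrict $\pi_{\Lambda}$ to the Sobolev ball. The paper's own proof does \emph{not} truncate or condition; it decomposes $\int_A = \int - \int_{A^c}$, argues that any near-minimizing $\hat f$ must satisfy $\E_g\|\hat f\|_2^2 \lesssim R_n^2$ by a consistency argument against the matching upper bound of Theorem~\ref{thm:upperboundf} (otherwise its worst-case loss over $A$ would already exceed the known upper bound), and then tries to control the $A^c$ contribution via a concentration bound. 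So your truncation-plus-conditioning route is a genuinely different fix.

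There are, however, concrete gaps in your proposal. First, the statement that $\pi_{\Lambda}^K(A^c)=o(1)$ ``provided $K$ is polynomial in $n$'' is not automatic: $\E_{\pi_\Lambda^K}\|g\|_{\cS(\alpha)}^2 = \sum_{k\le K}\lambda_k k^{2\alpha}\asymp \log K$, and a chi-square deviation bound only helps if the mean sits safely below the radius-squared $4\log n$. This caps $K\le n^{c'}$ for some $c'$ determined by the constants in $\lambda_k$, while preserving the Bayes rate forces $K\gg (nm)^{1/(1+2\alpha)}$; for $m$ large relative to $n$ these windows can fail to intersect, so you must track constants (or enlarge the radius to $\rho\sqrt{\log n}$) and state any resulting constraint on $m$. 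Second, the projection workaround for the $A^c$ piece is flawed: projecting $\hat f$ onto a ball $B$ contracts loss only toward targets inside $B$, but $\fo = g + \text{GP noise}$ is random and $\|\fo\|_2$ exceeds $O(\sqrt{\log n})$ with positive probability even when $g\in A$, in which case the projection may \emph{increase} the loss. The cleaner route, which the paper itself uses, is the consistency argument: any $\hat f$ with $\E_g\|\hat f\|_2^2 \gg R_n^2$ would already be dominated on $A$ by the upper-bound estimator, so WLOG $\E_g\|\hat f\|_2^2 \lesssim R_n^2$, and then $\E_g\|\hat f - \fo\|_2^2 \lesssim R_n^2 + \E_g\|\fo\|_2^2$ is what you integrate over $A^c$. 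Finally, the claim that the Bayes risk under $\pi_\Lambda^K$ (and under the conditional law $\nu$) retains the rate of Theorem~\ref{thm:lbbayesriskf} is plausible but not verified; the supplement's calculation sums over all $k\in\mathbb{N}$ including a bias tail $\sum_{k>t_2}$, and one must check the truncation preserves this and that conditioning on an event of probability bounded away from $0$ perturbs the Bayes risk only by a constant factor.
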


The remark about the radius from the previous section also applies here. 

To show that the above lower bound is sharp, we again construct estimators that attain this lower bound, first with prior knowledge about the underlying smoothness then without such knowledge.
The first is the posterior mean of $\fo$ given the observations and prior in model~\eqref{eq:gwnmodelbayes}. 
The second is a double-thresholding estimator of the form
\begin{align} \label{eq:estf}
\hfo_{k_1,k_2}(t) = \sum_{k=1}^{k_1} Y^{(1)}_k \psi_k(t) +\sum_{k=k_1+1}^{k_2}\hat{g}_k \psi_k(t), 
\end{align}
where $Y_k^{(1)}$ is defined in \eqref{eq:estg} and $\hg_k = (m-1)^{-1}\sum_{j=2}^{m} Y^{(j)}_k$ contains data from subjects $j=2,\ldots,m$. 
(To improve the legibility of the thresholds, we will sometimes denote by $\hfo(k_1,k_2)$ the double-thresholding estimator with thresholds $k_1$ and $k_2$ if we are thinking of it as a function yet to be evaluated.)
The following theorem provides the risk of these two estimators maximized over the ball.

\begin{theorem} \label{thm:upperboundf}
    If $\tf_{\talpha,\alpha}$ is either the posterior mean of $\fo$ associated with the prior $\pi_{\Lambda}$ or the estimator \eqref{eq:estf} using the thresholds $k_1 \asymp n^{1/(1+2\talpha)}$ and 
    \begin{align*}
        k_2 \asymp \left(k_1 \vee (nm)^{1/(1+2\alpha)}\right) \asymp 
        \begin{cases}
            (nm)^{1/(1+2\alpha)} & \text{for } \alpha < \talpha + \delta(1/2+\talpha), \\
            n^{1/(1+2\talpha)} & \text{for } \alpha \geq \talpha + \delta(1/2+\talpha), 
        \end{cases} 
    \end{align*}
    then under model~\eqref{eq:gwnmodelfreq} we have
    \begin{align*}
        \sup_{g^* \in \cS(\alpha,R)} \E_{g^*} \| \tf_{\alpha,\talpha}-\fo \|_2^2
        &\lesssim 
        (nm)^{-\frac{2\alpha}{1+2\alpha}} + n^{-\frac{2\talpha}{1+2\talpha}}
    \end{align*}
    where $\cS(\alpha,R)$ is a Sobolev ball defined in \eqref{eq:ellipsoid} and $\delta \coloneqq \log m / \log n$.
\end{theorem}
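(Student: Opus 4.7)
The plan is to handle the double-thresholding estimator~\eqref{eq:estf} first; the Bayesian posterior mean then follows by a parallel coordinate-wise Gaussian-shrinkage argument. Expanding in $\{\psi_k\}$, write $Y_k^{(j)} = g_k^* + \tilde\eta_k^{(j)} + n^{-1/2}\epsilon_k^{(j)}$ with $\tilde\eta_k^{(j)} \ind \Normal(0,\tlambda_k)$ independent of $\epsilon_k^{(j)} \iid \Normal(0,1)$, so that by Parseval
\[
\E_{g^*}\|\tf_{\alpha,\talpha} - \fo\|_2^2 = \sum_{k=1}^{k_1} \E_{g^*}(Y_k^{(1)} - f_k^{(1)})^2 + \sum_{k=k_1+1}^{k_2} \E_{g^*}(\hg_k - f_k^{(1)})^2 + \sum_{k>k_2} \E_{g^*}(f_k^{(1)})^2.
\]
The first block equals $k_1/n$. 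The second, using that $\hg_k - f_k^{(1)}$ is mean-zero Gaussian with variance $\tlambda_k\cdot m/(m-1) + ((m-1)n)^{-1}$ and $\tlambda_k \asymp k^{-1-2\talpha}$, is bounded by a constant multiple of $k_1^{-2\talpha} + (k_2-k_1)/(nm)$. The third equals $\sum_{k>k_2}[(g_k^*)^2 + \tlambda_k]$, bounded by $R^2 k_2^{-2\alpha} + Ck_2^{-2\talpha}$ via the Sobolev constraint $g^*\in\cS(\alpha,R)$ and the eigenvalue decay.

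Combining and substituting $k_1 \asymp n^{1/(1+2\talpha)}$ balances $k_1/n$ and $k_1^{-2\talpha}$ to $n^{-2\talpha/(1+2\talpha)}$. For the remaining $k_2/(nm) + k_2^{-2\alpha} + k_2^{-2\talpha}$, I split on whether $(nm)^{1/(1+2\alpha)} > n^{1/(1+2\talpha)}$, equivalently $\alpha < \talpha + \delta(1/2+\talpha)$. In that regime, taking $k_2 \asymp (nm)^{1/(1+2\alpha)}$ gives $k_2/(nm) \asymp k_2^{-2\alpha} \asymp (nm)^{-2\alpha/(1+2\alpha)}$, and the exponent arithmetic shows $k_2^{-2\talpha} = (nm)^{-2\talpha/(1+2\alpha)} \leq n^{-2\talpha/(1+2\talpha)}$ precisely in this regime. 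Otherwise $k_2 \asymp k_1$: the middle block is empty, and $\alpha \geq \talpha$ here so $k_2^{-2\alpha} + k_2^{-2\talpha} \lesssim n^{-2\talpha/(1+2\talpha)}$. In either case the total is $\lesssim n^{-2\talpha/(1+2\talpha)} + (nm)^{-2\alpha/(1+2\alpha)}$.

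For the Bayesian posterior mean under model~\eqref{eq:gwnmodelbayes}, conjugacy yields coordinate-wise a linear shrinkage $\tf_k^{(1)} = a_k Y_k^{(1)} + b_k \bar Y_k$, where $\bar Y_k = m^{-1}\sum_j Y_k^{(j)}$ and the weights $a_k,b_k$ are explicit in $n,m,\lambda_k,\tlambda_k$. Under fixed $g^*$, decompose the coordinate MISE into a variance (independent of $g^*$, coinciding with the Bayes calculation of Theorem~\ref{thm:lbbayesriskf}, hence summing to the target rate) plus squared bias $[(a_k+b_k-1)g_k^*]^2$; summing the latter against the Sobolev constraint $\sum_k k^{2\alpha}(g_k^*)^2 \leq R^2$ and controlling $|a_k+b_k-1|$ by the effective signal-to-noise at frequency $k$ recovers the same order.

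The main obstacle is the bookkeeping for the case split, in particular verifying that the $k_2^{-2\talpha}$ tail term is dominated by the target rate in both regimes (which requires the precise algebraic identity between $(nm)^{1/(1+2\alpha)}$ and $n^{1/(1+2\talpha)}$ cutoff and the definition of $\delta$). A secondary subtlety is bounding the bias of the Bayesian estimator at a worst-case Sobolev $g^*$: one must show the shrinkage factor $|a_k+b_k-1|$ inherits a polynomial decay matching the $R^2 k_2^{-2\alpha}$ tail contribution of the thresholding estimator, which is the step that converts the Bayesian oracle rate into a uniform bound over $\cS(\alpha,R)$.
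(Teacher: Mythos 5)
Your double-thresholding argument is essentially the paper's proof (Section S.9): the same three-block decomposition, the same variance computation for $\hg_k - f_k^{(1)}$, and the same case split on whether $(nm)^{1/(1+2\alpha)}$ exceeds $n^{1/(1+2\talpha)}$. Your verification that $k_2^{-2\talpha}$ is always absorbed by the target rate in both regimes is precisely the key piece of bookkeeping the paper handles, and your algebra is correct.

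For the posterior mean, however, you propose a genuine shortcut that the paper does not use. The paper (Section S.5) writes $\E_{g^*}(\hat f_k - f_k^{(1)})^2 = \gamma_k^2 (g_k^*)^2 + \delta_k$ and then spends over a page bounding $\sum_k \delta_k$ directly by partitioning $\mathbb{N}$ into regimes according to which term dominates the denominator. You instead observe that $\sum_k \delta_k$ is exactly the $g^*$-independent part of the conditional MISE, and that the Bayes risk of Theorem~\ref{thm:lbbayesriskf} decomposes as $\sum_k \gamma_k^2 \lambda_k + \sum_k \delta_k$ with both terms nonnegative, so $\sum_k \delta_k$ is \emph{bounded by} the Bayes risk, which Theorem~\ref{thm:lbbayesriskf} already shows is of the target order. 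This is cleaner and avoids the entire $\delta_k$ case analysis. Two caveats: first, your phrase ``coinciding with the Bayes calculation of Theorem~\ref{thm:lbbayesriskf}'' is imprecise---$\sum_k \delta_k$ is dominated by, not equal to, the Bayes risk, and Theorem~\ref{thm:lbbayesriskf}'s own proof conditions on $f^{(1)}$ rather than on $g$, so its variance term is not literally the same object; but the nonnegativity argument rescues the conclusion. Second, you still need the explicit bound $\sup_k (a_k + b_k - 1)^2 k^{-2\alpha} = \sup_k \gamma_k^2 k^{-2\alpha} \lesssim (nm)^{-2\alpha/(1+2\alpha)}$ to finish the bias part over the Sobolev ball; you gesture at it via ``signal-to-noise at frequency $k$'' but this elementary computation (done in the paper by splitting at $k \lessgtr (nm)^{1/(1+2\alpha)}$) should be spelled out to close the argument.
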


The upper bound matches the lower bound in Theorem~\ref{thm:lbminimaxf} (up to a logarithmic factor), and hence the two estimators from Theorem~\ref{thm:upperboundf} achieve the minimax rate.
But this requires knowing what values of $\alpha$ and $\tilde\alpha$ appropriately characterize, respectively, the population-level function and the subject-specific deviations. 
Furthermore, if either regularity hyperparameter is misspecified in either estimator and if the MISE of the estimator is maximized over $g^*$ in the Sobolev ball, the estimator's error rate becomes polynomially larger order than the minimax rate; further detail is contained in Sections S.6 and S.10 of the Supplementary Material.
Therefore, to establish an estimator that achieves the minimax rate without being provided $\alpha$ or $\tilde\alpha$, we define the data-driven thresholds 
\begin{equation}\label{eq:thresholdadaptf}
\begin{split}
\hk_2 &= \min\, \left\{k: \forall\ell \in \left(k, \sqrt{nm}\right],\, \|\hf^{(1)}(1,k)-\hf^{(1)}(1,\ell) \|_2^2\leq \tau_2\frac{\ell}{nm} \right\}\\
\hk_1^{(1)} &= \min\, \left\{k: \forall \ell \in \left(k, \sqrt{n}\right],\, \|\hf^{(1)}(k,\sqrt{n})-\hf^{(1)}(\ell,\sqrt{n}) \|_2^2\leq \tau_1\frac{\ell}{n} \right\},
\end{split}
\end{equation}
for some positive $\tau_1 > 4$ and $\tau_2 > 6$.
Both $\hk_2$ and $\hk_1^{(1)}$ are well defined, seeing as the corresponding sets on the right hand side of the preceding display are not empty (they contain $\sqrt{nm}$ and $\sqrt{n}$, respectively).
Though $\hk_1^{(1)}$ is specific to subject $j=1$, the threshold $\hk_2$ is the same for all subjects and hence does not change if, for example, we are estimating $f^{(2)}$ rather than $f^{(1)}$.
The following theorem states that the estimator~\eqref{eq:estf} with these data-driven thresholds can achieve the minimax rate without any knowledge of $\alpha$ or $\tilde\alpha$.

\begin{theorem}  \label{thm:adaptivef}
Take $\alpha > 0$, and recall the regularity parameter $\tilde\alpha$ of the covariance function $\tilde\Lambda$ from Section~\ref{sec:samplingmodel}.
If $m=O(\exp\{cn^{1/(1+2A)}\})$ for some $c>0$ and $A>\max\{\alpha,\tilde\alpha\}$, then there exists a constant $D_{\alpha,\tilde\alpha}>0$ such that under model \eqref{eq:gwnmodelfreq} the estimator \eqref{eq:estf} using the data-driven thresholds \eqref{eq:thresholdadaptf} satisfies 
\begin{align*}
\sup_{g^* \in \cS(\alpha, R)} \E_{g^*} \|\hf(\hk_1^{(1)},\hk_1^{(1)}\vee\hk_2)-\fo\|_2^2
\leq D_{\alpha,\tilde\alpha} \left(n^{-\frac{2\tilde\alpha}{1+2\tilde\alpha}}+(nm)^{-\frac{2\alpha}{1+2\alpha}}\right).
\end{align*}
\end{theorem}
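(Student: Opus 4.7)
The plan is to run Lepskii's balancing principle separately on the two thresholds $\hk_1^{(1)}$ and $\hk_2$, and then combine the resulting controls with the oracle risk decomposition of Theorem~\ref{thm:upperboundf}. Set the oracle choices $k_1^\star\asymp n^{1/(1+2\talpha)}$ and $k_2^\star\asymp (nm)^{1/(1+2\alpha)}$ from Theorem~\ref{thm:upperboundf}, and decompose
\begin{align*}
\|\hfo(\hk_1^{(1)},\hk_1^{(1)}\vee\hk_2)-\fo\|_2^2
&=\sum_{k\le \hk_1^{(1)}}\left(Y_k^{(1)}-f^{(1)}_k\right)^2
+\sum_{\hk_1^{(1)}<k\le \hk_1^{(1)}\vee\hk_2}\left(\hg_k-f^{(1)}_k\right)^2\\
&\quad +\sum_{k>\hk_1^{(1)}\vee\hk_2}\left(f^{(1)}_k\right)^2.
\end{align*}
The goal is to show the expectation of each summand is $\lesssim n^{-2\talpha/(1+2\talpha)}+(nm)^{-2\alpha/(1+2\alpha)}$.

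For $\hk_2$ apply the standard Lepskii dichotomy. When $\hk_2\le k_2^\star$ the defining inequality~\eqref{eq:thresholdadaptf} together with the triangle inequality yields $\|\hfo(1,\hk_2)-\hfo(1,k_2^\star)\|_2^2\le\tau_2 k_2^\star/(nm)$, which combined with the oracle bound on $\|\hfo(1,k_2^\star)-\fo\|_2^2$ from Theorem~\ref{thm:upperboundf} gives the target rate. When $\hk_2>k_2^\star$, minimality of $\hk_2$ forces the Lepskii inequality to fail at $k=k_2^\star$ for some $\ell\in(k_2^\star,\sqrt{nm}]$; the increment $\hfo(1,\ell)-\hfo(1,k_2^\star)$ is a Gaussian sum with explicit variance $\sum_{k_2^\star<k\le\ell}(\tlambda_k+1/n)/(m-1)$, and Gaussian chi-squared concentration plus a union bound over $\ell\le\sqrt{nm}$ (absorbed by $\tau_2>6$) makes this event have probability decaying faster than any polynomial in $nm$, so its contribution to the risk is negligible once multiplied by the deterministic polynomial bound on the squared loss. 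The argument for $\hk_1^{(1)}$ is strictly analogous, with noise scale $1/n$, increments involving $Y_k^{(1)}-\hg_k$, and constant $\tau_1>4$.

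The principal obstacle is reconciling the two Lepskii criteria, because $\hk_1^{(1)}$ is calibrated with upper truncation fixed at $\sqrt n$ and variance scale $1/n$, whereas the actual upper truncation in the final estimator is $\hk_1^{(1)}\vee\hk_2$. The clean way to proceed is to split on the sign of $\hk_2-\hk_1^{(1)}$: when $\hk_2<\hk_1^{(1)}$ the middle sum is vacuous, the estimator reduces to a single-subject truncation at $\hk_1^{(1)}$, and the risk is bounded by $\hk_1^{(1)}/n\lesssim n^{-2\talpha/(1+2\talpha)}$ via the Lepskii bound on $\hk_1^{(1)}$ alone; when $\hk_1^{(1)}\le\hk_2$ the middle sum is nonempty and controlled by the Lepskii criterion for $\hk_2$, while the first sum is still controlled by the criterion for $\hk_1^{(1)}$. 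The assumption $m=O(\exp\{cn^{1/(1+2A)}\})$ with $A>\max\{\alpha,\talpha\}$ enters only in the concentration step, keeping the union-bound log factors subpolynomial relative to the oracle rate and ensuring that $\sum_{k\le\sqrt{nm}}\tlambda_k$ stays of the required polynomial order. The remaining work is tedious but routine bookkeeping analogous to the proof of Theorem~\ref{thm:adaptiveg}, extended to the two-layer thresholding scheme.
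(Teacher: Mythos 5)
Your overall plan---Lepskii dichotomy for each threshold, concentration on the overshoot event, triangle-inequality chain back to an oracle estimator---matches the paper's strategy in spirit, but there are two concrete gaps that would need fixing before the argument closes.

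First, your oracle $k_2^\star\asymp(nm)^{1/(1+2\alpha)}$ is too small to support the concentration step. The overshoot probability for $\hk_2>k_2^\star$ is controlled (as in Lemma~\ref{lem:thresholdf}) by a chi-squared tail bound whose denominator involves $(k_2^\star)^{-1-2\talpha}n$; for that bound to produce $\exp\{-c k_2^\star\}$ one needs $k_2^\star\gtrsim n^{1/(1+2\talpha)}$. When $\alpha>\talpha+(1/2+\talpha)\delta$ your oracle falls strictly below $n^{1/(1+2\talpha)}$, the exponent in the concentration estimate stops decaying fast enough, and the overshoot contribution is no longer negligible after multiplying by the polynomial envelope on the squared loss. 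The paper sidesteps this by defining $k_2^*$ implicitly through the balancing inequality $B_g'(k_2)\le k_2/(nm)$ with $B_g'(k_2)\ge\sum_{\ell>k_2}\tlambda_\ell\asymp k_2^{-2\talpha}$, which automatically forces $k_2^*\gtrsim(nm)^{1/(1+2\talpha)}\ge n^{1/(1+2\talpha)}$ at no extra cost to the oracle risk (the $\ell_1$ term $k_2/(nm)$ grows only linearly). You need some such inflation of $k_2^\star$ or the overshoot case does not close.

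Second, in the branch $\hk_2<\hk_1^{(1)}$ you assert the risk is bounded by $\hk_1^{(1)}/n$. That is only the variance piece; the estimator $\hf(\hk_1^{(1)},\hk_1^{(1)})$ also incurs the tail bias $\sum_{k>\hk_1^{(1)}}(f^{(1)}_k)^2$, which on the undershoot event $\hk_1^{(1)}\le k_1^*\asymp n^{1/(1+2\talpha)}$ is of order $n^{-2(\alpha\wedge\talpha)/(1+2\talpha)}$. When $\alpha<\talpha$ that exceeds the target rate $(nm)^{-2\alpha/(1+2\alpha)}+n^{-2\talpha/(1+2\talpha)}$, so the asserted bound fails in exactly the regime where information pooling should help most. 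The paper handles the coupled undershoot event $\{\hk_1\vee\hk_2\le k_1^*\}$ by telescoping to the full oracle $\hf_{k_1^*,k_2^*}$ and invoking its bias-controlled risk, rather than stopping at the single-subject truncation. Folding the bias into your case analysis is essential; otherwise that branch does not deliver the stated rate.
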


\section{Simulation results}
\label{sec:studyall}

We next carry out two simulation studies.
As mentioned in Section~\ref{sec:mainresults}, the white noise model is the idealized version of the nonparametric regression model.
Here each study generates data from the following nonparametric regression model which is a modification of the Gaussian white noise model~\eqref{eq:gwnmodelbayes}.
Assuming $m$ subjects and $n$ observations per subject, for each subject $j=1,\ldots,m$ and observation number $i=1,\ldots,n$ we observe the random variable $Y^{(j)}_i$ generated from the following model
\begin{align*}
\begin{split}
    Y^{(j)}_i &= f^{(j)}(t_i^{(j)}) + Z^{(j)}_i, \qquad Z_i^{(j)} \iid \text{N}(0,1) \\
    f^{(1)}, \ldots, f^{(m)} \mid g, \tLambda &\iid \text{GP}(g, \tLambda), \\ 
    g \mid \Lambda &\sim \text{GP}(0, \Lambda),
\end{split}
\end{align*} 
where the grid points $t_i^{(j)} \in [0,1]$ are fixed. 
Here we use fixed, equidistant grid points to drastically reduce the number of large covariance matrices to compute while still ensuring each observation $Y_i^{(j)}$ is evaluated at its own unique grid point.
(Random grid points would require computing a new covariance matrix for each replicate data set generated from the above model.)
Both studies use the Fourier basis as the set of eigenfunctions $\psi_k$ and covariance matrices with entries $\Sigma^{(\beta)}_{i,j} = \sum_{k=1}^{N} k^{-1-2\beta} \psi_k(i/N) \psi_k(j/N)$ for $i,j=1,\ldots,N$ and various values of $\beta$, where $N=20000$.

Each study will use the thresholding estimators \eqref{eq:estg} and \eqref{eq:estf} whose initial definitions assumed the observations $Y^{(1)}, \ldots, Y^{(m)}$ were functions. 
In view of the nonparametric regression model above, we must modify the definitions of $\hat{g}_k$ and $Y_k^{(j)}$, which we do by replacing the inner product $\langle Y^{(j)}, \psi_k \rangle \coloneqq \int_0^1 Y^{(j)}(t) \psi_k(t) \dif t$ with the empirical version $\langle Y^{(j)}, \psi_k \rangle_n^{(j)} \coloneqq \sum_{i=1}^n Y^{(j)}_i \psi_k(t_i^{(j)})$.
In the following numerical studies, the thresholding estimators \eqref{eq:estg} and \eqref{eq:estf} are defined using the modified $\hat{g}_k$ and $Y_k^{(j)}$.

\subsection{Study 1}
\label{sec:study1}

This study illustrates the performance of the threshold estimator \eqref{eq:estg} for $g$ with adaptive threshold \eqref{eq:thresholdadaptg} and of the two-threshold estimator \eqref{eq:estf} for $\fo$ with adaptive thresholds \eqref{eq:thresholdadaptf}.
For estimating $g$ we compare the performance of the adaptive estimator to that of the estimator \eqref{eq:estg} with one of four fixed thresholds. 
For estimating $\fo$ we compare the performance of the two-threshold adaptive estimator to that of an estimator that uses data only from subject $j=1$. 

This study uses $\talpha=0.5$ and $\alpha \in \{0.05, 0.2, 0.5, 2\}$, which provides values less than, equal to, and greater than $\talpha$; as such we compute the four $N \times N$ covariance matrices $\Sigma^{(0.05)}$, $\Sigma^{(0.2)}$, $\Sigma^{(0.5)}$, and $\Sigma^{(2)}$ exactly once in order to generate realizations of GPs. 
This study also uses the pairs $(n,m) \in \{(20,500), (50,200), (100,100), (200,50), (500,20)\}$, where we see $nm=N/2$.
For each combination of $\alpha$, $(n,m)$, and $r \in \{1,\ldots,200\}$ (here $r$ indexes the 200 replicates), we produce a dataset $\sD_{\alpha, \talpha, n, m, r}$ as follows. 
First the realization $g_{\alpha, r}$ of $\text{GP}(0,\Lambda)$ is evaluated at the grid points $\{i/N:i=1,\ldots,N\}$;
to reduce computation time and to facilitate comparison, the same (discrete) realization of $g_{\alpha, r}$ is used regardless of the value of $(n,m)$. 
Then for each $j \in \{1,\ldots,m\}$ we use the realization of $g_{\alpha, r}$ to evaluate the function $\fj_{\alpha,n,m,r}$ at the grid points $\sT^{train}_{n,m,j} \cup \sT^{eval}$, 
where $\sT^{train}_{n,m,j} \coloneqq \{2(mi + j) / N: i=0,\ldots,n-1\}$ for each $j$ and $\sT^{eval} \coloneqq \{(20i+1)/N: i=0,\ldots,999\}$ are all mutually disjoint subsets of the set $\{i/N:i=1,\ldots,N\} \subset [0,1]$. 
Finally, for each $j \in \{1,\ldots,m\}$, we use the realized values of $\fj_{\alpha,n,m,r}$ to generate the observations $Y_{i,\alpha,\talpha,n,m,r}^{(j)}$ for all $i \in \sT^{train}_{n,m,j}$.
Thus each dataset $\sD_{\alpha, \talpha, n, m, r}$ contains $nm=10000$ observed values that the estimators are trained on. 

\begin{figure}
\includegraphics[width=0.85\textwidth]{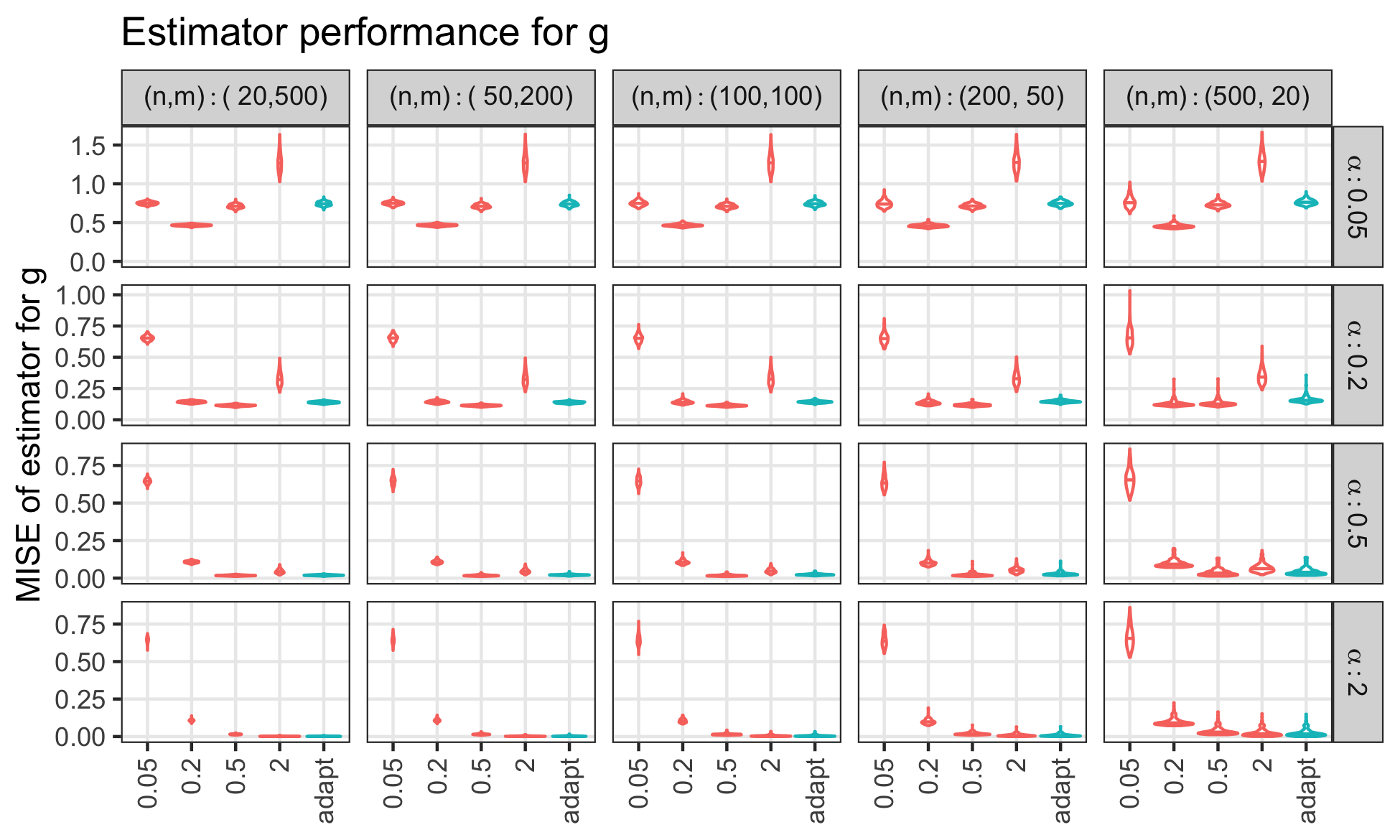}
\centering
\caption{The out-of-sample MISE \eqref{eq:emiseg} of the thresholding estimator \eqref{eq:estg} for the threshold $K = (nm)^{1/(1+2\beta)}$ for all $\beta \in \{0.05,0.2,0.5,2\}$ and for the adaptive threshold \eqref{eq:thresholdadaptg}. Here $\tilde\alpha = 0.5$. The variability comes from the 200 datasets for each $(\alpha,n,m)$ combination.}
\label{fig:pmseg_n100_m100}
\end{figure}

For estimating $g$ we compare the adaptive estimator to the nonadaptive estimator \eqref{eq:estg} with fixed threshold $K_{n,m} = (nm)^{1/(1+2\beta)}$ for all $\beta \in \{0.05,0.2,0.5,2\}$ so that for any of the four $\alpha$ values, exactly one of the fixed thresholds produces an estimator that achieves the minimax rate.
For each dataset $\sD_{\alpha, \talpha, n, m, r}$ and each of the five estimators (denoted by $\tilde{g}$) trained on the observations in $\sD_{\alpha, \talpha, n, m, r}$, we compute the empirical MISE of $\tilde{g}$ over $N/2=10000$ unobserved grid points:
\begin{align} \label{eq:emiseg}
    \text{MISE}(\tilde{g}, g_{\alpha,r}) = 
    (10000)^{-1} \sum_{i=1}^{10000} \big[\tilde{g}(t_i) - g_{\alpha,r}(t_i)\big]^2, \quad t_i=(i-0.5)/10000.
\end{align}

The (empirical) MISEs are shown in Figure~\ref{fig:pmseg_n100_m100}, where the variability comes from the 200 datasets for each $(\alpha,n,m)$ combination.
First we see for any $\alpha$ that although the value of $(n,m)$ does not seem to affect the median MISE of any estimator, the spread of any estimator's MISE increases as $m$ decreases, though to an increasingly smaller degree as $\alpha$ decreases.
Second, the $\beta=0.2$ and $\beta=0.5$ thresholding estimators never perform too poorly in any of the investigated cases, despite the thresholds not depending on the data. (We emphasize that their performance is specific to this particular simulation specification; they might perform worse for different values of $n$ and $m$.)
Finally, we find the adaptive threshold is somewhere between 100 and 6 (i.e., between the thresholds induced by $\beta=0.5$ and $\beta=2$), which turns out to be too small when $\alpha=0.05$ or $\alpha=0.2$; this could be remedied by decreasing the chosen value of the tuning parameter $\tau$ in \eqref{eq:thresholdadaptg}.

\begin{figure}
\includegraphics[width=0.85\textwidth]{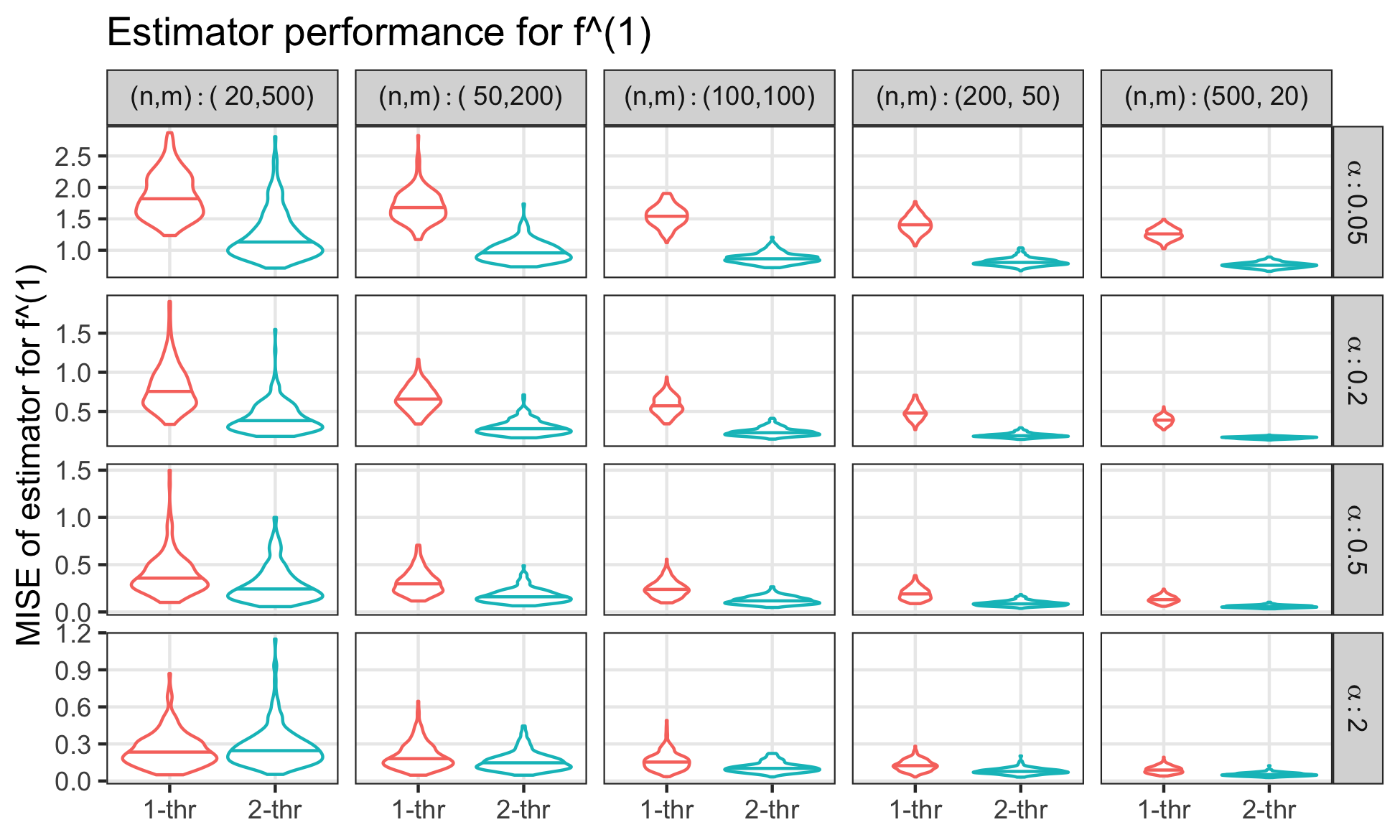}
\centering
\caption{The out-of-sample MISE \eqref{eq:emisef} of the one-threshold (i.e., single-subject) estimator \eqref{eq:est1f} and the two-threshold estimator \eqref{eq:estf} using the adaptive thresholds \eqref{eq:thresholdadaptf}.  Here $\tilde\alpha = 0.5$. The variability comes from the 200 datasets for each $(\alpha,n,m)$ combination.}
\label{fig:pmsef_n100_m100}
\end{figure}

To illustrate how much the estimation of $\fo$ improves by the addition of data from $m-1$ other subjects, we compare the two-threshold adaptive estimator (which uses data from all $m$ subjects) to the single-subject estimator 
\begin{align} \label{eq:est1f}
\hf_{\hat{k}_1}(t) = \sum_{k=1}^{\hat{k}_1} Y^{(1)}_k \psi_k(t), \text{ where } 
\hat{k}_1 = \min\left\{k: \forall \ell\in\left(k,\sqrt{n}\right], \sum_{i=k+1}^{\ell} \big(Y_k^{(1)}\big)^2 \leq 2\frac{\ell}{nm} \right\}
\end{align}
which uses data from only subject $j=1$ and whose threshold is data-driven (i.e., assumes no knowledge of $\alpha$ or $\tilde\alpha$).
For each dataset $\sD_{\alpha, \talpha, n, m, r}$ and each of the two estimators (denoted by $\hfo$) trained on the observations in $\sD_{\alpha, \talpha, n, m, r}$, we compute the empirical MISE of $\hfo$ over the $1000$ unobserved grid points in $\sT^{eval}$:
\begin{align} \label{eq:emisef}
    \text{MISE}(\hfo, \fo_{\alpha,n,m,r}) = 
    (1000)^{-1} \sum_{t \in \sT^{eval}} \Big[\hfo(t) - \fo_{\alpha, n, m, r}(t)\Big]^2.
\end{align}

The (empirical) MISEs are shown in Figure~\ref{fig:pmsef_n100_m100}, where the variability comes from the 200 datasets for each $(\alpha,n,m)$ combination.
The two-threshold estimator performs visibly better than the single-subject estimator for most of the tested $(n,m,\alpha)$ combinations.
When $\alpha=0.05$, the single-subject estimator's MISE is visibly larger than the two-threshold estimator's, but as $\alpha$ increases, the MISE difference decreases until it is essentially nullified at $\alpha=2$. 
This observation is supported by our intuition that information about $g$ can improve the inference of $\fo$ when $g$ is rougher than any of the functions $\fj$ are, and that the inferential improvement should decrease as the roughness difference does.
It is also supported by the comparison of the multi-subject rate to the single-subject rate in Section~\ref{sec:estf}, which implies that the two theoretical rates are equal at $\alpha=2$ and $\alpha=0.5$, and that the multi-subject rate is smaller than the single-subject rate at $\alpha=0.2$ and $\alpha=0.05$.
We also see at any $\alpha$ that the MISE of either estimator decreases as $n$ increases (where we also recall that $nm$ is always $10000$), which suggests the $n^{-2\talpha/(1+2\talpha)}$ term dominates the estimator's theoretical MISE for estimating $\fo$ in this simulation scenario.

\subsection{Study 2}
\label{sec:study2}

This study considers the scenario of a practitioner (e.g., a biomedical researcher) choosing the value of $n$ and $m$ for an experiment given a fixed sampling budget.
A primal goal might be to minimize the MISE of estimators of $\fo$ and $g$. 
The dual goal might instead be to minimize the experiment's cost within a predetermined range of acceptable MISE values. 

This study uses different values of $(\alpha,\talpha)$ shown in Figures \ref{fig:pthmseg} and \ref{fig:pmsef}; these values are chosen to best illustrate certain phenomena described below.
This study also uses many different values of $(n,m)$ under a budget constraint of $nm \leq 5000$.
For each combination of $(\alpha,\talpha)$, $(n,m)$, and $r=1,\ldots,50$, we produce a dataset $\sD_{\alpha, \talpha, n, m, r}$ as described in Section~\ref{sec:study1}.

Figure~\ref{fig:pthmseg} shows the mean log MISE of the threshold estimator \eqref{eq:estg} for $g$ with adaptive threshold \eqref{eq:thresholdadaptg}, whereas Figure~\ref{fig:pmsef} shows the mean log MISE the two-threshold estimator \eqref{eq:estf} for $\fo$ with adaptive thresholds \eqref{eq:thresholdadaptf};
here ``mean log MISE'' is the mean of the respective log MISE taken over the 50 datasets. 
(Figures \ref{fig:pmsef_fixta} and Figure~\ref{fig:pmsef_fixa} focus on a smaller range of $(n,m)$ values to better distinguish changes in mean log MISE values.)
Figure~\ref{fig:pthmseg} shows that increasing $n$ can often improve the estimation of $g$.
Figure~\ref{fig:pthmseg} also shows that when $m=1$, the log MISE never dips below roughly $0.1$ regardless of $\alpha$, $\talpha$, or $n$, which reflects the $m^{-1}$ term in the minimax rate.
If we look more closely, Figure~\ref{fig:pthmseg_fixta} shows that the boundaries between colors becomes increasingly horizontal as $\alpha$ increases whereas Figure~\ref{fig:pthmseg_fixa} shows that the boundaries between colors become increasingly horizontal as $\talpha$ decreases; 
both observations match the intuition that $n$ become increasingly irrelevant to estimate the mean function $g$ as it becomes smoother than the subject-specific deviations. 
Similarly, Figure~\ref{fig:pmsef} shows that increasing $m$ can often improve the estimation of $f^{(1)}$, but Figure~\ref{fig:pmsef_fixta} shows that the boundaries between colors become increasingly \emph{vertical} as $\alpha$ increases whereas Figure~\ref{fig:pmsef_fixa} shows that the boundaries between colors become increasingly vertical as $\talpha$ decreases; 
both observations match the intuition that $m$ becomes increasingly irrelevant to estimate the subject-specific functions as they becomes rougher than the mean function. 

\begin{figure}
     \begin{subfigure}[b]{0.44\textwidth}
         \centering
         \includegraphics[width=\textwidth]{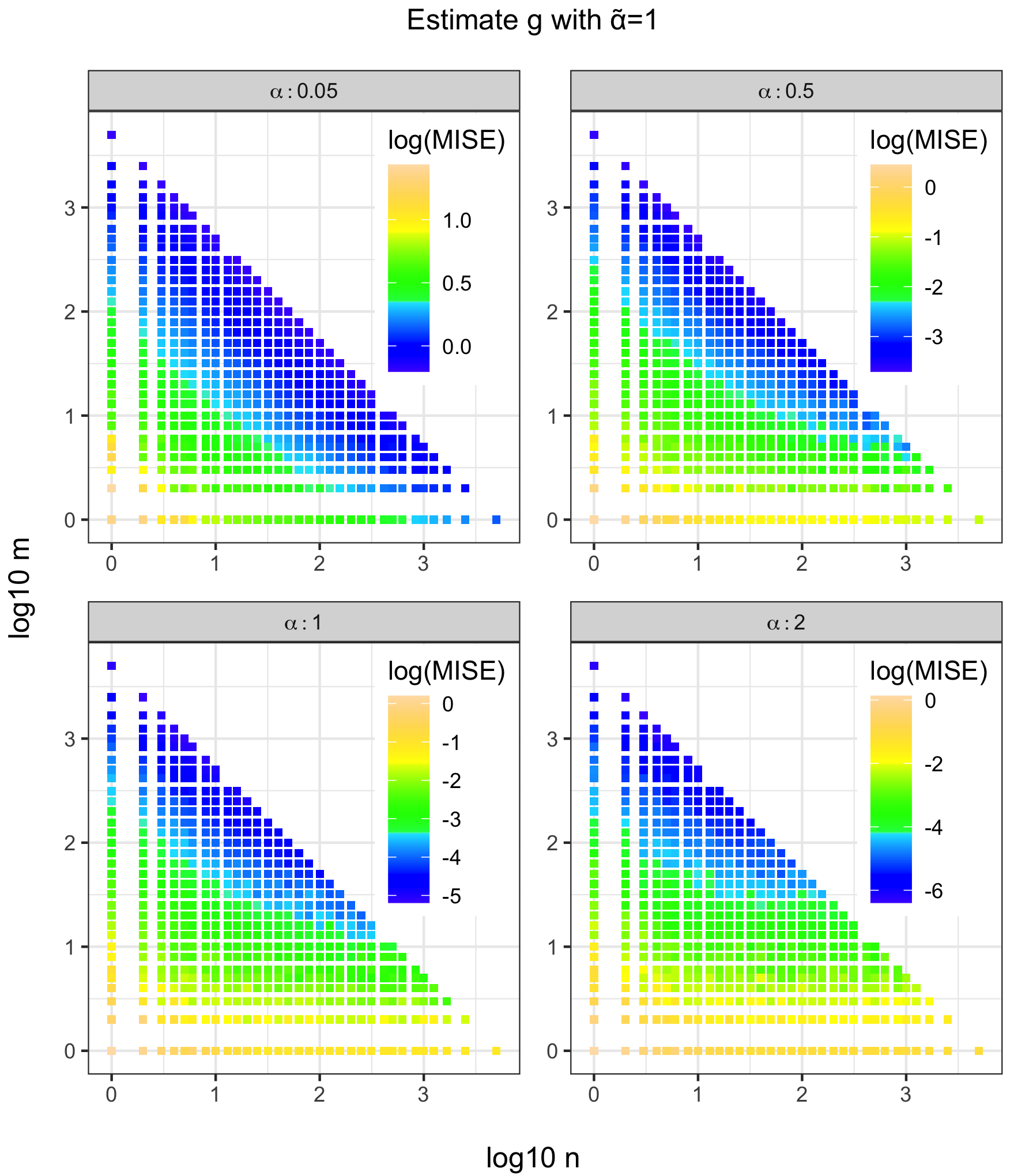}
         \caption{Fix $\talpha=1$ and vary $\alpha$.}
         \label{fig:pthmseg_fixta}
     \end{subfigure}
     \hfill
     \begin{subfigure}[b]{0.44\textwidth}
         \centering
         \includegraphics[width=\textwidth]{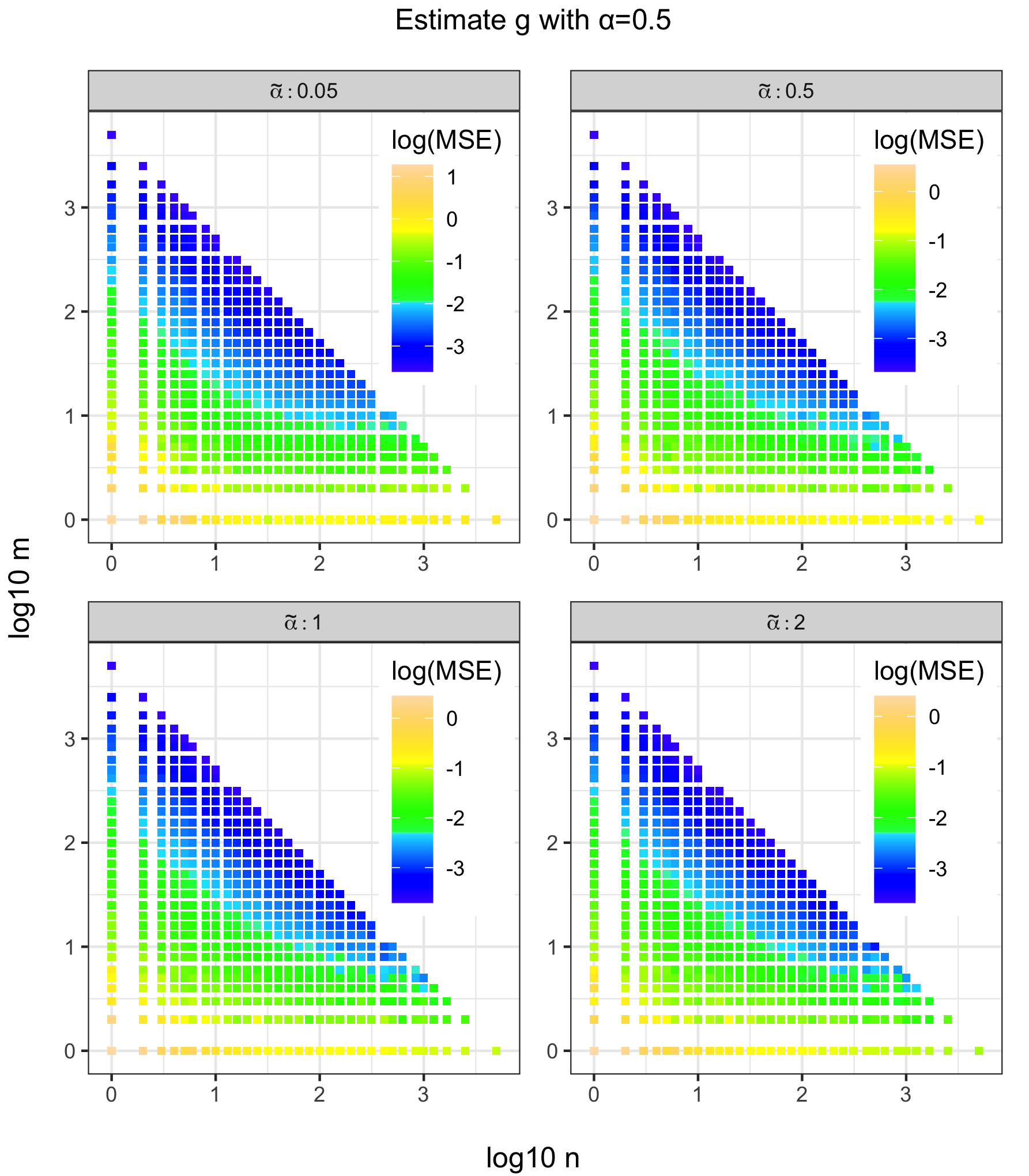}
         \caption{Fix $\alpha=0.5$ and vary $\talpha$.}
         \label{fig:pthmseg_fixa}
     \end{subfigure}
    \centering
    \caption{Mean $\log \text{MISE}$ values for the threshold estimator \eqref{eq:estg} of $g$ with adaptive threshold \eqref{eq:thresholdadaptg} for various $(n,m)$. Here ``mean $\log \text{MISE}$'' is the $\log \text{MISE}$ averaged over 50 datasets.}
    \label{fig:pthmseg}
\end{figure}

\begin{figure}
     \begin{subfigure}[b]{0.44\textwidth}
         \centering
         \includegraphics[width=\textwidth]{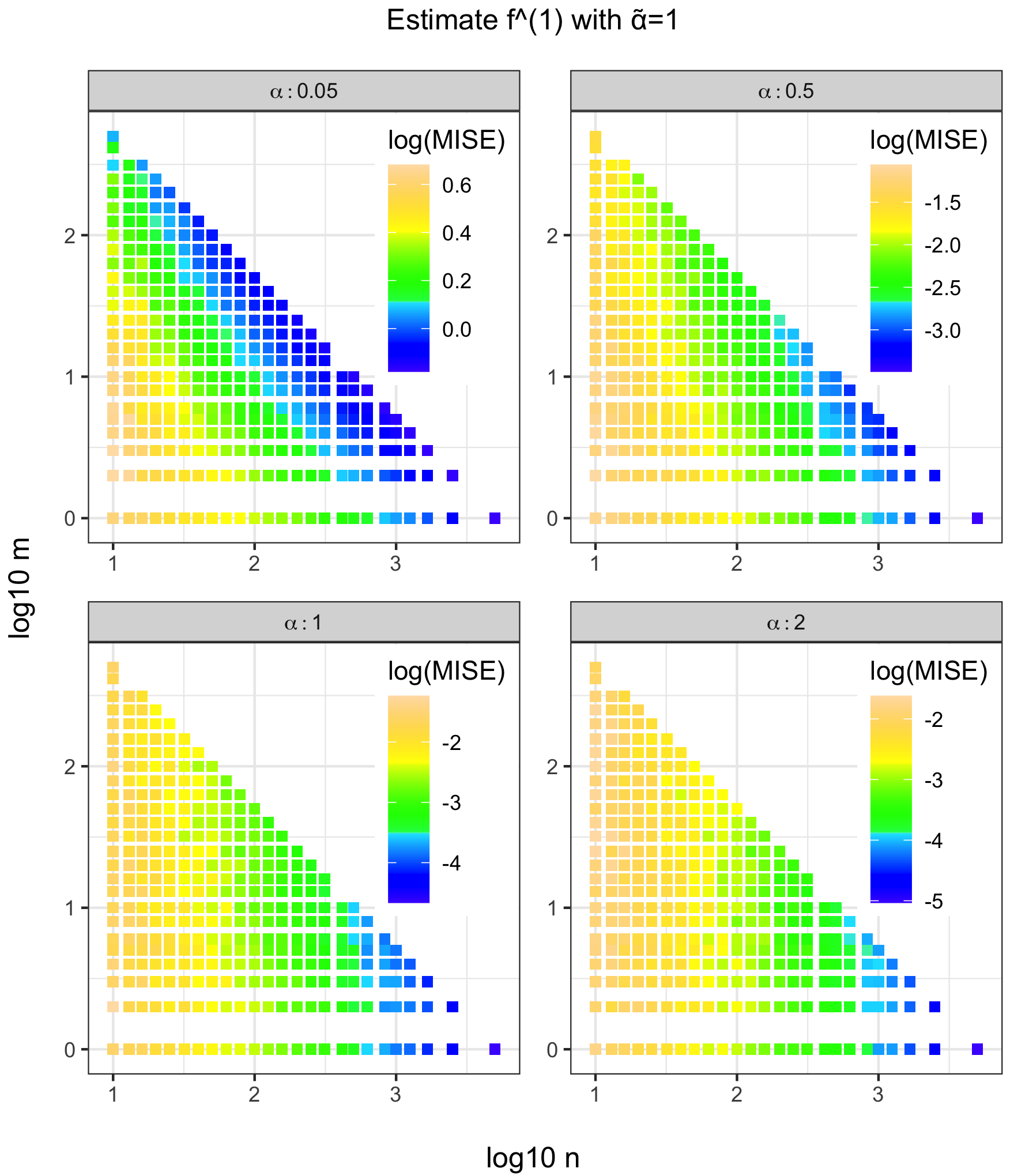}
         \caption{Fix $\talpha=1$ and vary $\alpha$.}
         \label{fig:pmsef_fixta}
     \end{subfigure}
     \hfill
     \begin{subfigure}[b]{0.44\textwidth}
         \centering
         \includegraphics[width=\textwidth]{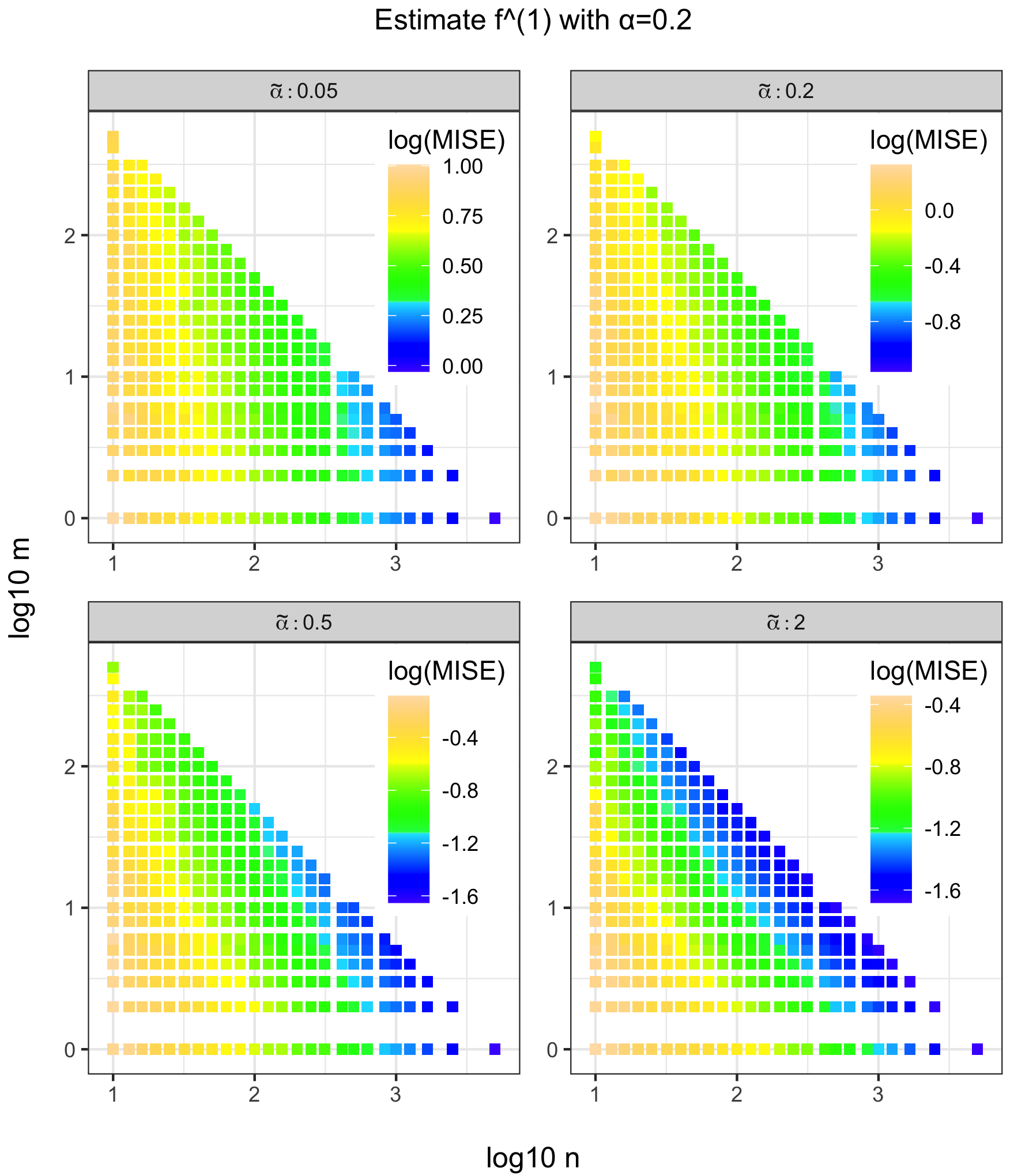}
         \caption{Fix $\alpha=0.2$ and vary $\talpha$.}
         \label{fig:pmsef_fixa}
     \end{subfigure}
    \centering
    \caption{Mean $\log \text{MISE}$ values for the two-threshold estimator \eqref{eq:estf} of $f^{(1)}$ with adaptive thresholds \eqref{eq:thresholdadaptf} for various $(n,m)$, where again the $\log \text{MISE}$ is averaged over 50 datasets.}
    \label{fig:pmsef}
\end{figure}

Now we consider how the practitioner designing the sampling scheme might use this information.
For the primal goal, she can overlay contour curves of the experiment's cost as a function of $\log n$ and $\log m$ on the appropriate subplot and then follow the contour line to the $(\log n, \log m)$ value that minimizes the mean log MISE.
For the dual goal, she can consider just the squares whose color corresponds to an acceptable mean log MISE value and then find the lowest-cost contour line that intersects one of these squares.

\section{Illustration on real data}

Similar to Section~\ref{sec:studyall}, the thresholding estimators used in this section are defined using the regression versions of $\hat{g}_k$ and $Y_k^{(j)}$.

\subsection{Pinch force experiment}
\label{sec:pinch}

We compare the single-subject estimator \eqref{eq:est1f} and the two-threshold estimator \eqref{eq:estf} for the subject-specific functions to the pinch force data originally collected by \cite{ramsay1995functional} but preprocessed in a form provided by the \texttt{fda} R package \citep{ramsay2023package}.
As explained in \cite{ramsay2002applied}, the preprocessed data recorded is $m=20$ replications of ``the force exerted on a meter during a brief pinch by the thumb and forefinger. The subject was required to maintain a certain background force on a force meter and then to squeeze the meter aiming at a specified maximum value, returning afterwards to the background level.''
The preprocessing aligns the replications so that their maximum value occurs at the same time.
Each replication is measured at 151 equally spaced time points; we scale the time points to fit in the unit interval. 

We fit both the single-subject estimator and the two-threshold estimator to the data. 
More specifically, for any $j \in \{1,\ldots,m\}$, each estimator for $\fj$ is trained on the data $Y_i^{(j')}$ for $j' \in \{1,\ldots,m\}$ and $i \in \mathcal{N}^{train} \coloneqq \{1,\ldots,151\} \setminus \mathcal{N}^{test}$, where $\mathcal{N}^{test} \coloneqq \{3i-1: i=1,\ldots,50\}$.
The top half of Figure~\ref{fig:pinchdataestf} shows the data with the estimators fit to the training data. 
Because the noise variance is small, for the estimators we set all $\tau$ values to $0.01$. 
The largest modes of the replicates are similar to each other and the two estimators (the single-subject estimator and the two-threshold estimator) fit similarly to each other at these modes, but the data has a lot of variance to the left and right of the modes.
The bottom half of Figure~\ref{fig:pinchdataestf} zooms into these areas, where we can better see that the two-threshold estimator tends to be smoother (i.e. less wiggly) than the single-subject estimator in these areas. 
For each $j=1,\ldots,m$ we numerically evaluate each estimator's predictive performance on the remaining $|\mathcal{N}^{test}|$ time points using the root mean squared prediction error
\begin{align} \label{eq:RMSPE}
    \text{RMSPE}^{(j)} = 
    \sqrt{\frac{1}{|\mathcal{N}^{test}|} \sum_{i \in \mathcal{N}^{test}} \Big( \tilde{f}^{(j)}(t_i) - Y_i^{(j)} \Big)^2}
\end{align}
where $t_i = (i-1)/150$
and $\tilde{f}^{(j)}(\cdot)$ is one of the two estimators computed using the training data.
Figure~\ref{fig:pinchrmspediff} shows that the two-threshold estimator produces smaller RMSPE values for 18 of the 20 replicates.
This provides further evidence that the two-threshold estimator should be used even in finite-sample scenarios. 

\begin{figure}
    \includegraphics[width=0.9\textwidth]{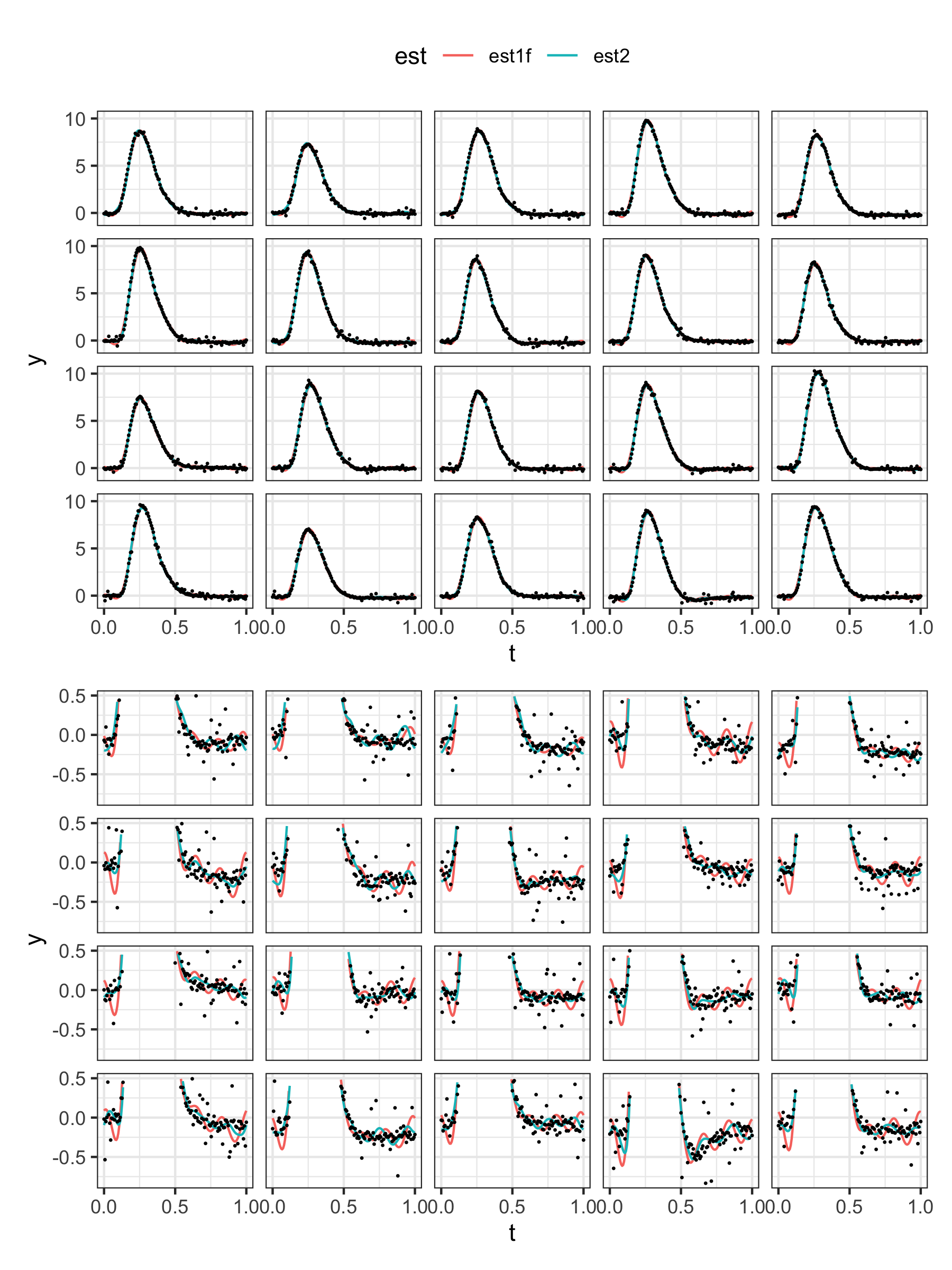}
    \centering
    \caption{Pinch force data (points), single-subject estimator \eqref{eq:est1f} (red line), and two-threshold estimator \eqref{eq:estf} using the adaptive thresholds \eqref{eq:thresholdadaptf} (blue line). Each panel displays one of $m=20$ replicates. Top: all data. Bottom: zooms into ``lower signal'' regions.}
    \label{fig:pinchdataestf}
\end{figure}

\begin{figure}
    \includegraphics[width=0.3\textwidth]{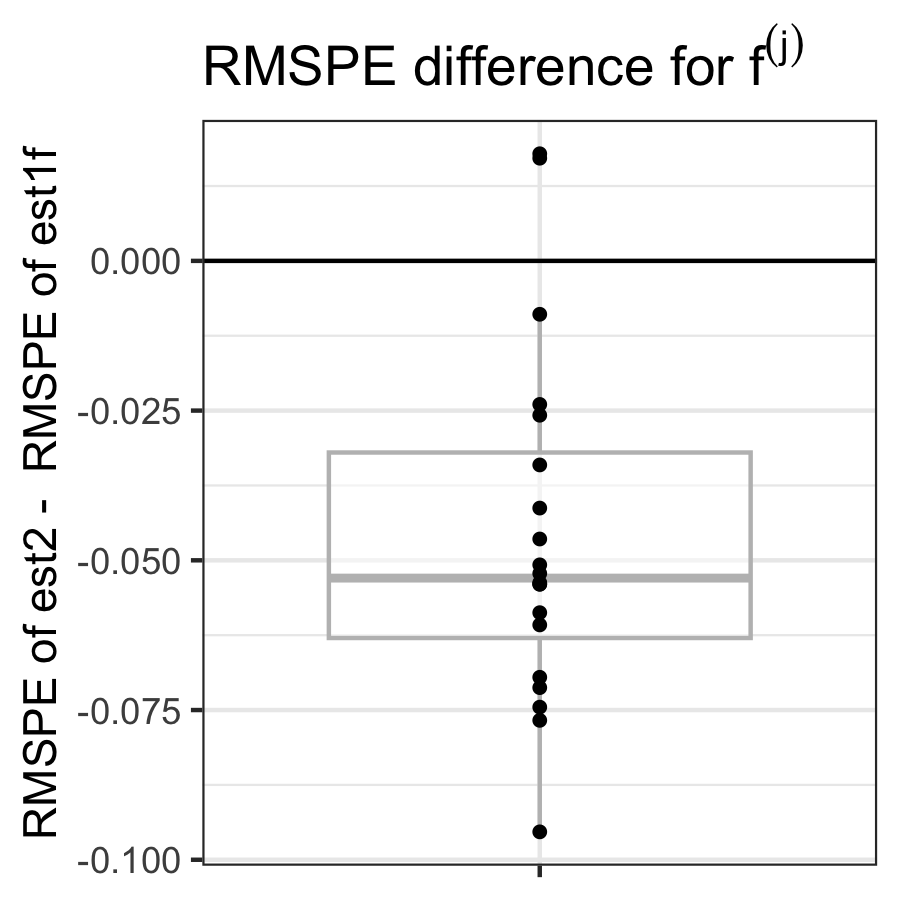}
    \centering
    \caption{Root mean squared prediction error \eqref{eq:RMSPE} difference between the single-subject estimator \eqref{eq:est1f} and two-threshold adaptive estimator of $\fo,\ldots,f^{(20)}$.}
    \label{fig:pinchrmspediff}
\end{figure}

\subsection{Orthosis experiment}
\label{sec:orthosis}

Similar to Section~\ref{sec:pinch}, we compare the single-subject estimator \eqref{eq:est1f} and the two-threshold estimator \eqref{eq:estf} for $\fj$ to the orthosis dataset, which was acquired by the Laboratoire Sport et Performance Motrice, Grenoble University, France. 
The original experiment studied how different externally applied moments to the human knee affect the processes underlying movement generation.
The data recorded were the resultant moments of seven young male volunteers stepping in place for $m=10$ cycles under four experimental conditions: a control condition (without orthosis), an orthosis condition (with the orthosis only), and two conditions (spring 1 and spring 2) where a spring is loaded into the orthosis. 
Each cycle's moments are recorded at 256 equally spaced time points; we scale the time points to fit in the unit interval. 
Figure~\ref{fig:orthosisdata} shows the full dataset.
\cite{cahouet2002static} contains further details on the experiment.

For each subject-condition pair, we fit both the single-subject estimator \eqref{eq:est1f} and the two-threshold estimator \eqref{eq:estf} to the data. 
More specifically, for any $j \in \{1,\ldots,m\}$, each estimator for $\fj$ is trained on the data $Y_i^{(j')}$ for $j' \in \{1,\ldots,m\}$ and $i \in \mathcal{N}^{train} \coloneqq \{1,\ldots,256\} \setminus \mathcal{N}^{test}$, where $\mathcal{N}^{test} \coloneqq \{5i: i=1,\ldots,51\}$.
All estimators use $\tau=6$.
For each $j=1,\ldots,m$ we numerically evaluate each estimator's predictive performance on the remaining $|\mathcal{N}^{test}|$ time points using the root mean squared prediction error \eqref{eq:RMSPE}
where $t_i = 0.002 + 0.0039(i-1)$
and $\tilde{f}^{(j)}(\cdot)$ is one of the two estimators computed using the training data.
Figure~\ref{fig:orthosisRMSPE} shows that for most of the 28 subject-condition pairs, the two-threshold estimator has a smaller RMSPE than does the single-subject estimator.
There are a handful of subject-condition pairs for which the opposite is true; these pairs correspond to cases where the replicates are quite different from each other and hence prediction for a replicate is not helped by pooling information from the other replicates.

\begin{figure}
\includegraphics[width=0.9\textwidth]{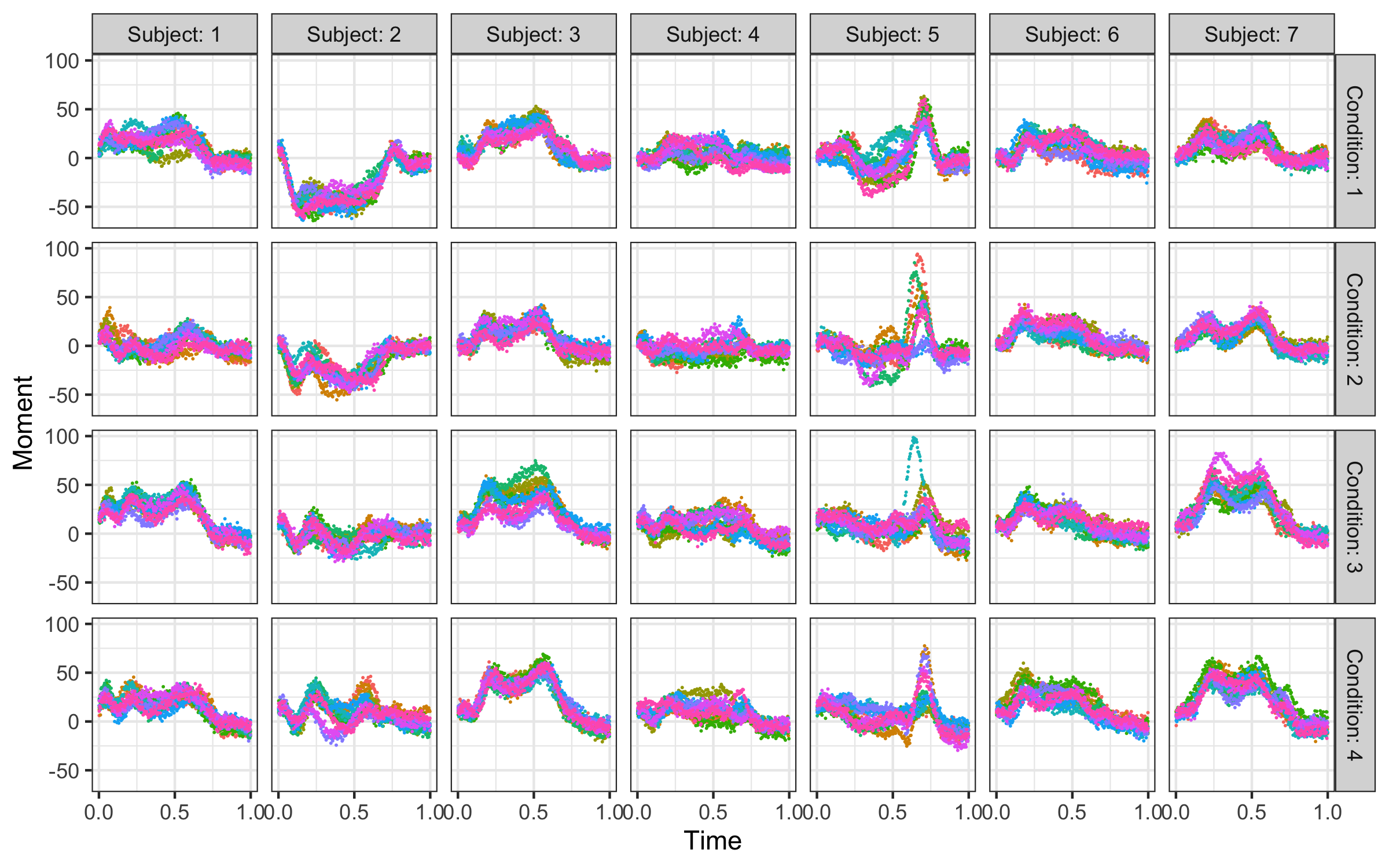}
\centering
\caption{Orthosis data. Each panel contains 10 replicates each consisting of 256 points.}
\label{fig:orthosisdata}
\end{figure}

\begin{figure}
\includegraphics[width=0.9\textwidth]{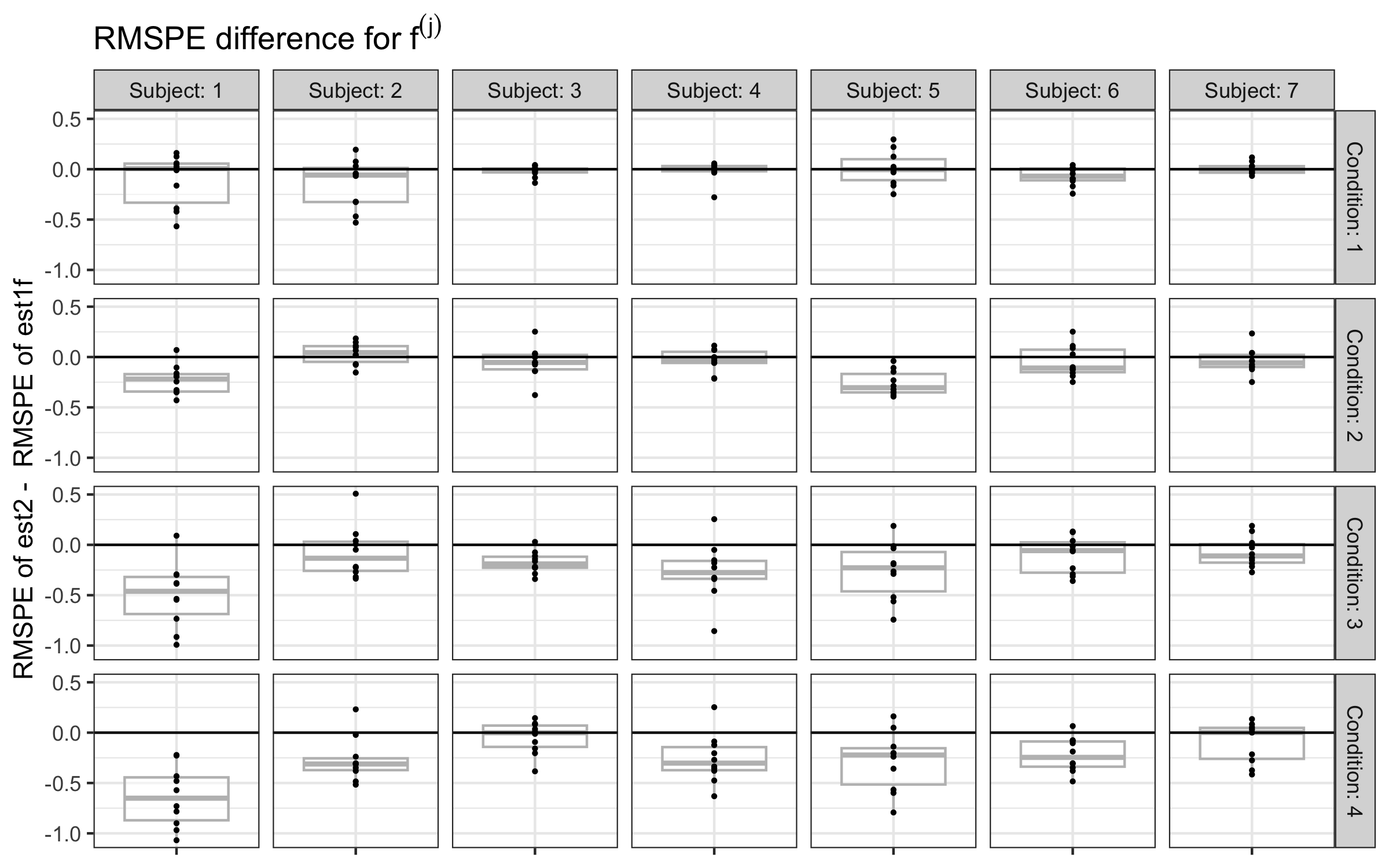}
\centering
\caption{Root mean squared prediction error \eqref{eq:RMSPE} difference between the single-subject estimator \eqref{eq:est1f} and two-threshold adaptive estimator of $\fo,\ldots,f^{(10)}$ for each subject-condition pair.}
\label{fig:orthosisRMSPE}
\end{figure}

\section{Discussion}
\label{sec:discussion}

This paper established the rate of convergence of the $L_2$ minimax risk (when the mean function is assumed to be a fixed member of a Sobolev class of functions) and minimum average risk (when the mean function is equipped with a mean-zero Gaussian process prior) for learning the population and subject-specific functions of interest in a nested Gaussian process model. 
Because our established rates are faster than the single-subject rates, we conclude that the information pooling process can strongly improve estimation of either the subject-specific functions and mean function. 
Our rates also provide practical insights into the trade-off between the sample sizes at the subject level and the observation level under a fixed sampling budget.
We show that these insights hold not only theoretically when the two sample sizes are large but also empirically in some small-sample scenarios.

Our work considered real-valued observations, but many nonparametric applications involve discrete-valued outputs.
Binary observations, for example, would involve $(0,1)$-valued functions to be estimated. 
Though some existing works have studied estimation rates for binary observations in the single-subject case, to our knowledge there is no published work studying such rates under nested sampling designs.  
Due to the increasing use of such designs in nonparametric settings for discrete-valued outputs, it would be of interest to establish minimax rates for binary observations which would in turn provide a foundation to establish rates for other discrete-valued observations. 

\section*{Acknowledgement}
AH is partly supported by NSF grant DMS-2013930 and NIGMS grant R01-GM135440.
LM is partly supported by NSF grants DMS-2013930, DMS-2152999, and NIGMS grant R01-GM135440.
This project has received funding from the European Research Council (ERC) under the European Union's Horizon 2020 research and innovation programme (grant agreement No. 101041064).

\bibliographystyle{chicago} 
\bibliography{gpbib}

\section*{Acknowledgement}
AH is partly supported by NSF grant DMS-2013930 and NIGMS grant R01-GM135440.
LM is partly supported by NSF grants DMS-2013930, DMS-2152999, and NIGMS grant R01-GM135440.
This project has received funding from the European Research Council (ERC) under the European Union's Horizon 2020 research and innovation programme (grant agreement No. 101041064).

\bibliographystyle{chicago} 
\bibliography{gpbib}

\begin{thebibliography}{}

\bibitem[\protect\citeauthoryear{Bak and Koo}{Bak and
  Koo}{2023}]{bak2023minimax}
Bak, K.-Y. and J.-Y. Koo (2023).
\newblock Minimax estimation in multi-task regression under low-rank
  structures.
\newblock {\em Journal of Nonparametric Statistics\/}~{\em 35\/}(1), 122--144.

\bibitem[\protect\citeauthoryear{Bunea, Ombao, and Auguste}{Bunea
  et~al.}{2006}]{bunea2006minimax}
Bunea, F., H.~Ombao, and A.~Auguste (2006).
\newblock Minimax adaptive spectral estimation from an ensemble of signals.
\newblock {\em IEEE transactions on signal processing\/}~{\em 54\/}(8),
  2865--2873.

\bibitem[\protect\citeauthoryear{Cahou{\"e}t, Luc, and David}{Cahou{\"e}t
  et~al.}{2002}]{cahouet2002static}
Cahou{\"e}t, V., M.~Luc, and A.~David (2002).
\newblock Static optimal estimation of joint accelerations for inverse dynamics
  problem solution.
\newblock {\em Journal of biomechanics\/}~{\em 35\/}(11), 1507--1513.

\bibitem[\protect\citeauthoryear{Chau and von Sachs}{Chau and von
  Sachs}{2016}]{chau2016functional}
Chau, J. and R.~von Sachs (2016).
\newblock Functional mixed effects wavelet estimation for spectra of replicated
  time series.
\newblock {\em Electronic Journal of Statistics\/}~{\em 10\/}(2), 2461--2510.

\bibitem[\protect\citeauthoryear{Giacofci, Lambert-Lacroix, and
  Picard}{Giacofci et~al.}{2018}]{giacofci2018minimax}
Giacofci, M., S.~Lambert-Lacroix, and F.~Picard (2018).
\newblock Minimax wavelet estimation for multisample heteroscedastic
  nonparametric regression.
\newblock {\em Journal of Nonparametric Statistics\/}~{\em 30\/}(1), 238--261.

\bibitem[\protect\citeauthoryear{Gin{\'e} and Nickl}{Gin{\'e} and
  Nickl}{2016}]{gine}
Gin{\'e}, E. and R.~Nickl (2016).
\newblock {\em Mathematical foundations of infinite-dimensional statistical
  models}.
\newblock Cambridge university press.

\bibitem[\protect\citeauthoryear{Koudstaal and Yao}{Koudstaal and
  Yao}{2018}]{koudstaal2018multiple}
Koudstaal, M. and F.~Yao (2018).
\newblock From multiple gaussian sequences to functional data and beyond: a
  stein estimation approach.
\newblock {\em Journal of the Royal Statistical Society Series B: Statistical
  Methodology\/}~{\em 80\/}(2), 319--342.

\bibitem[\protect\citeauthoryear{Lepskii}{Lepskii}{1992}]{lepskii1992asymptotically}
Lepskii, O. (1992).
\newblock Asymptotically minimax adaptive estimation. i: Upper bounds.
  optimally adaptive estimates.
\newblock {\em Theory of Probability \& Its Applications\/}~{\em 36\/}(4),
  682--697.

\bibitem[\protect\citeauthoryear{Lepskii}{Lepskii}{1993}]{lepskii1993asymptotically}
Lepskii, O. (1993).
\newblock Asymptotically minimax adaptive estimation. ii. schemes without
  optimal adaptation: Adaptive estimators.
\newblock {\em Theory of Probability \& Its Applications\/}~{\em 37\/}(3),
  433--448.

\bibitem[\protect\citeauthoryear{Lindley and Smith}{Lindley and
  Smith}{1972}]{lindley1972bayes}
Lindley, D.~V. and A.~F. Smith (1972).
\newblock Bayes estimates for the linear model.
\newblock {\em Journal of the Royal Statistical Society Series B: Statistical
  Methodology\/}~{\em 34\/}(1), 1--18.

\bibitem[\protect\citeauthoryear{Novick, Jackson, Thayer, and Cole}{Novick
  et~al.}{1972}]{novick1972estimating}
Novick, M.~R., P.~H. Jackson, D.~T. Thayer, and N.~S. Cole (1972).
\newblock Estimating multiple regressions in m groups: A cross-validation
  study.
\newblock {\em British Journal of Mathematical and Statistical
  Psychology\/}~{\em 25\/}(1), 33--50.

\bibitem[\protect\citeauthoryear{Ramsay}{Ramsay}{2023}]{ramsay2023package}
Ramsay, J. (2023).
\newblock {\em fda: Functional Data Analysis}.
\newblock R package version 6.1.4.

\bibitem[\protect\citeauthoryear{Ramsay, Wang, and Flanagan}{Ramsay
  et~al.}{1995}]{ramsay1995functional}
Ramsay, J., X.~Wang, and R.~Flanagan (1995).
\newblock A functional data analysis of the pinch force of human fingers.
\newblock {\em Journal of the Royal Statistical Society: Series C (Applied
  Statistics)\/}~{\em 44\/}(1), 17--30.

\bibitem[\protect\citeauthoryear{Ramsay and Silverman}{Ramsay and
  Silverman}{2002}]{ramsay2002applied}
Ramsay, J.~O. and B.~W. Silverman (2002).
\newblock {\em Applied functional data analysis: methods and case studies}.
\newblock Springer.

\bibitem[\protect\citeauthoryear{Wang, Oliva, Schneider, and P{\'o}czos}{Wang
  et~al.}{2016}]{wang2016nonparametric}
Wang, X., J.~B. Oliva, J.~G. Schneider, and B.~P{\'o}czos (2016).
\newblock Nonparametric risk and stability analysis for multi-task learning
  problems.
\newblock In {\em IJCAI}, pp.\  2146--2152.

\bibitem[\protect\citeauthoryear{Williams and Rasmussen}{Williams and
  Rasmussen}{2006}]{williams2006gaussian}
Williams, C.~K. and C.~E. Rasmussen (2006).
\newblock {\em Gaussian processes for machine learning}.
\newblock MIT press Cambridge, MA.

\end{thebibliography}

\bigskip
\begin{center}
{\large\bf SUPPLEMENTARY MATERIAL}
\end{center}

\renewcommand*{\thetheorem}{S.\arabic{theorem}}
\renewcommand\theequation{S.\arabic{equation}}
\renewcommand\thesection{S.\arabic{section}}

\appendix

The supplemental material contains the proofs of Theorems 1-8. 
Here we order the eight proofs in the main text by first pairing the related results for $f^{(1)}$ and $g$. 
Because a proof for $f^{(1)}$ typically contains much of the technical content contained in the corresponding proof for $g$, we present the former before the latter. 
Once the related results for $f^{(1)}$ and $g$ are paired, we present the resulting four pairs in the order they appear in the main text. 

\section{Proof of Theorem~5: Bayes risk $\fo$}
\label{sec:proofbayesriskf}

\begin{proof}[Proof of Theorem~5]

This proof considers $m+1$ samples $f^{(1)},...,f^{(m+1)}$ for notational convenience, but this does not influence the derived results qualitatively.
With respect to the distribution $\pi_\alpha=\pi_{\Lambda}$ on $g$, denote the marginal distribution of $f^{(1)}$ by $\pi^{(1)}$ and the expectation of the distribution $Y^{(1)},\ldots,Y^{(m+1)}|f^{(1)}$ by $E_{f^{(1)}}$.

Our goal is to find the estimator $\hf$ minimizing
\begin{align*}
    \E^{\pi_{\alpha}} \E_g \|\hf-f^{(1)}\|_2^2 
= \E^{\pi^{(1)}} E_{\fo} \|\hf- \fo\|_2^2 
= \int E_{f^{(1)}}\|\hf-f^{(1)}\|_2^2\dif \pi^{(1)}(f^{(1)}).
\end{align*}
The above Bayes risk is minimized by the posterior mean for $f^{(1)}|Y^{(1)},\ldots,Y^{(m+1)}$. 
It can also be decomposed into a squared bias and variance term
\begin{align*}
\int E_{f^{(1)}}\|\hf-f^{(1)}\|_2^2\dif \pi^{(1)}(f^{(1)}) = \int B_{\fo}(\hf) \dif \pi^{(1)}(f^{(1)}) + \int V_{\fo}(\hf) \dif \pi^{(1)}(f^{(1)})
\end{align*}
where we use the notations $B_{\fo}(\hf) = \| E_{f^{(1)}}\hf-f^{(1)}\|_2^2$ and $V_{\fo}(\hf) = E_{f^{(1)}}\| \hf- E_{f^{(1)}}\hf\|_2^2$.

Next we consider the sequence representation of $Y^{(j)}$ with respect to matching eigenbasis $\psi_1,\psi_2,\ldots,$ of $\Lambda$ and $\tLambda$, 
\begin{align*}
Y^{(j)}_k=\sum_{k=1}^{\infty} f^{(j)}_k+\sqrt{\frac{1}{n}}Z_k^{(j)},
\end{align*}
where $Z_k^{(j)} \iid \text{N}(0,1)$, $f^{(j)}_k=\langle f^{(j)},\psi_k \rangle$, and $Y^{(j)}_k= \int \psi_k(t) dY^{(j)}_t$. 
We also introduce the notation $g_k=\langle g,\psi_k \rangle$. 
From now on we slightly abuse our notations and denote by $f^{(j)}$, $g$, and $Y^{(j)}$ both the functional and the $\ell_2$ sequence representation of the (random) functions.

The conditional independence of $f^{(j)}|g$, $j=2,\ldots,m+1$ implies
\begin{align*}
Y^{(2)},\ldots,Y^{(m+1)}|g \sim
\bigotimes_{j=2}^{m+1} \bigotimes_{k=1}^{\infty} \text{N}\Big(g_k,\tlambda_k+1/n\Big)
\quad \text{and} \quad \bar{Y}|g \sim \bigotimes_{k=1}^{\infty} \text{N}\bigg(g_k, \frac{\tlambda_k+1/n}{m}\bigg)
\end{align*}
where $\bar{Y} \coloneqq m^{-1}\sum_{j=2}^{m+1}Y^{(j)}$.
Finally, the marginal distribution of $f^{(1)}$ is
\begin{align}
f^{(1)}\sim \bigotimes_{k=1}^{\infty} \text{N}\Big(0,\lambda_k+\tlambda_k\Big).\label{eq:prior:f}
\end{align}

\subsection{Posterior for $f^{(1)}$}
\label{sec:posteriorf}
In this section we derive the posterior distribution for $f^{(1)}$ given $(Y^{(j)})_{j=1,\ldots,m+1}$. For notational convenience we omit the superindex $(1)$ from $f^{(1)}$, i.e., use the notation $f=\fo$.

As a first step we derive the joint conditional distribution of $f$ and $g$ given $Y^{(1)},\ldots,Y^{(m+1)}$ using Bayes' formula
\begin{align}
&\pi(f,g \mid Y^{(1)},\ldots,Y^{(m+1)})\propto \pi(g)\pi(f|g)p(Y^{(1)}|f)\prod_{j=2}^{m+1} p(Y^{(j)}|g)\nonumber\\
&\propto \exp\Bigg\{-\frac{1}{2}\sum_{k=1}^{\infty} \frac{g_k^2}{\lambda_k}+\frac{(f_k-g_k)^2}{\tlambda_k}+n(Y^{(1)}_k-f_k)^2 + \frac{1}{\tlambda_k+\frac{1}{n}}\sum_{j=2}^{m+1} (Y^{(j)}_k-g_k)^2 \Bigg\}.\label{eq:joint}
\end{align}

We will derive the posterior for $f$ by marginalizing out $g$ from the above formula. 
Letting $c_k = \lambda_k^{-1}+\tlambda_k^{-1}+m(\tlambda_k+1/n)^{-1}$ and $d_k = \tlambda_k^{-1}f_k+(\tlambda_k+1/n)^{-1}m \bar{Y}_k$, 
we note
\begin{align*}
g_k^2 c_k - 2g_k d_k 
= c_k \big( g_k- d_k c_k^{-1} \big)^2 - d_k^2 c_k^{-1}.
\end{align*}
Letting $a_k \coloneqq \tlambda_k^{-1} m(\tlambda_k+1/n)^{-1} c_k^{-1}$ and $b_k \coloneqq \big\{\lambda_k^{-1}+m(\tlambda_k+1/n)^{-1}\big\} c_k^{-1}$, by \eqref{eq:joint} we have 
\begin{align*}
&\pi(f,g \mid Y^{(1)},\ldots,Y^{(m+1)}) \\
&\propto \exp\bigg\{-\frac{1}{2}\sum_{k=1}^{\infty}c_k \big( g_k- d_k c_k^{-1} \big)^2 - d_k^2c_k^{-1} + f_k^2\big(\tlambda_k^{-1}+n\big) - 2f_k Y^{(1)}_k n \bigg\} \\
&\propto \exp\bigg\{-\frac{1}{2}\sum_{k=1}^{\infty}c_k \big( g_k- d_k c_k^{-1} \big)^2 \bigg\} \exp\Bigg\{ -\frac{1}{2}\sum_{k=1}^{\infty}\big(n+\tlambda_k^{-1} b_k \big) \bigg( f_k -\frac{Y^{(1)}_k n+\bar{Y}_k a_k}{n+\tlambda_k^{-1}b_k}\bigg)^2 \Bigg\}.
\end{align*}
Since $\pi(f,g|Y^{(1)},\ldots,Y^{(m+1)})=\pi(g|f,Y^{(1)},\ldots,Y^{(m+1)})\pi(f|Y^{(1)},\ldots,Y^{(m+1)})$, we conclude $g|f,Y^{(1)},\ldots,Y^{(m+1)}$ has a Gaussian distribution 
and $f|Y^{(1)},\ldots,Y^{(m+1)}$ has a Gaussian distribution given by 
\begin{align*}
f|Y^{(1)},\ldots,Y^{(m+1)}\sim \bigotimes_{k=1}^{\infty} \text{N}\Bigg(\frac{Y^{(1)}_k n+\bar{Y}_ka_k}{n+\tlambda_k^{-1}b_k},
\frac{1}{n+\tlambda_k^{-1}b_k} \Bigg).
\end{align*}

\subsection{Bias term}
\label{sec:biasterm}
This section provides the asymptotic rate for the integrated squared bias $\E^{\pi^{(1)}} B_{f}(\hf)$ where $\hf$ is the posterior mean derived above.
Noting $E_f Y^{(1)}_k = \cdots = E_f Y^{(m+1)}_k = f_k$, we have 
\begin{align*}
E_f\hf_k &= f_k
\bigg(\frac{n + a_k}{n+\tlambda_k^{-1}b_k} \bigg) 
= f_k \Bigg(1 - \frac{1}{n\tlambda_k+1+n(m+1)\lambda_k} \Bigg),
\end{align*}
where $a_k$ and $b_k$ are defined in Section~\ref{sec:posteriorf}, and thus
\begin{align}
B_{f}(\hf) 
=\sum_{k=1}^{\infty}\frac{f_k^2}{(n\tlambda_k+1+n(m+1)\lambda_k)^2 } 
\asymp \sum_{k=1}^{\infty}\frac{f_k^2}{(n\tlambda_k+1+nm\lambda_k)^2 }. \label{eq:denominator}
\end{align}

To simplify computation, we partition $\mathbb{N}$ into three intervals depending on which term in the denominator dominates (i.e., which denominator term is the largest). 
In view of $\lambda_k \asymp k^{-1-2\alpha}$ and $\tlambda_k \asymp k^{-1-2\talpha}$, the term $nm\lambda_k$ dominates on the interval $[1, t_1]$, where $t_1=t_1(n,m,\alpha,\tilde\alpha)$, the term $n\tlambda_k$ dominates on $(t_1, t_2]$, where $t_2=t_2(n,m,\alpha,\tilde\alpha)$ might be equal to $t_1$ corresponding to an empty interval, and for large integers the constant term dominates. We deal with these three intervals separately.
For $k \in [1,t_1]$, i.e., for $k \in \mathbb{N}$ where $nm\lambda_k$ dominates the denominator, we have 
\begin{align*}
\sum_{k=1}^{t_1}\frac{f_k^2}{(n\tlambda_k+1+nm\lambda_k)^2}
\asymp (nm)^{-2}\sum_{k=1}^{t_1} f_k^2 \lambda_k^{-2}.
\end{align*}
For $k \in ({t_1}, {t_2}]$ i.e. for $k \in \mathbb{N}$ where $n \tlambda_k$ dominates the denominator, we have
\begin{align*}
\sum_{k=t_1+1}^{t_2} \frac{f_k^2}{(n\tlambda_k+1+nm\lambda_k)^2}
\asymp n^{-2} \sum_{k=t_1+1}^{t_2} f_k^2 \tlambda_k^{-2}.
\end{align*}
Finally, for $k \in [t_2,\infty)$ i.e. for $k \in \mathbb{N}$ where $1$ dominates the denominator, we have
\begin{align*}
\sum_{k=t_2}^{\infty} \frac{f_k^2}{(n\tlambda_k+1+nm\lambda_k)^2}
\asymp \sum_{k=t_2}^{\infty} f_k^2.
\end{align*}
Next we substitute the values $\lambda_k\asymp k^{-1-2\alpha}$ and $\tlambda_k\asymp k^{-1-2\talpha}$, resulting in
\begin{align*}
B_{f}(\hf) &\asymp (nm)^{-2}\sum_{k=1}^{t_1} f_k^2 k^{2+4\alpha} + n^{-2} \sum_{k=t_1+1}^{t_2} f_k^2 k^{2+4\tilde\alpha} + \sum_{k=t_2}^{\infty} f_k^2.
\end{align*}
Since \eqref{eq:prior:f} implies $\E^{\pi^{(1)}} f_k^2 = \lambda_k+\tilde{\lambda}_k\asymp k^{-1-2(\alpha \wedge \talpha)}$, we get
\begin{align*}
\E^{\pi^{(1)}} B_{f}(\hf) &\asymp (nm)^{-2}\sum_{k=1}^{t_1} k^{1+4\alpha-2(\alpha\wedge\talpha)} + n^{-2} \sum_{k=t_1+1}^{t_2} k^{1+4\tilde\alpha-2(\alpha\wedge\talpha)} + \sum_{k=t_2}^{\infty} k^{-1-2(\alpha\wedge\talpha)}.
\end{align*}

Now we find the values of the interval boundaries $t_1$ and $t_2$, which depend on $n,m,\alpha,\talpha$.
In view of $\lambda_k\asymp k^{-1-2\alpha}$ and $\tlambda_k\asymp k^{-1-2\talpha}$, there is always $K$ large enough that $1 > (nmk^{-1-2\alpha} \vee nk^{-1-2\talpha})$ for all $k \geq K$, which implies $t_1 \leq t_2$ at any $n,m,\alpha,\talpha$.
Note $t_1 = t_2$ if and only if $nk^{-1-2\talpha} \leq (1 \vee nmk^{-1-2\alpha})$ for all $k$.
This inequality holds for all $k$ if $\talpha \geq \alpha$, i.e. if $\tlambda_k$ decays faster than $\lambda_k$ does.
If $\talpha < \alpha$, the inequality holds if and only if $k$ is outside the interval $(m^{1/(2\alpha-2\talpha)}, n^{1/(1+2\talpha)})$, which itself is empty if and only if $\alpha \leq \talpha+(1/2+\talpha)\delta$ where $\delta \coloneqq (\log m / \log n)$.
In summary, $t_1=t_2$ if and only if $\alpha \leq \talpha+(1/2+\talpha)\delta$.

Then we distinguish two cases according to the value of $\alpha$ and $\talpha$. 
For $\alpha \leq \talpha+(1/2+\talpha)\delta$ the boundaries are $t_1=t_2=(nm)^{1/(1+2\alpha)}$, in which case $\E^{\pi^{(1)}} B_{f}(\hf) \asymp (nm)^{-2(\alpha \wedge \talpha)/(1+2\alpha)}$. 
For the remaining case $\alpha > \talpha+(1/2+\talpha)\delta$ the boundaries are $t_1=m^{1/(2\alpha-2\talpha)}$ and $t_2= n^{1/(1+2\talpha)}$, in which case $\E^{\pi^{(1)}} B_{f}(\hf) \asymp n^{-2\talpha/(1+2\talpha)}$. 
We summarize the two cases as
\begin{equation}\label{UB: bias}
\E^{\pi^{(1)}} B_{f}(\hf)\asymp
\begin{cases}
(nm)^{-\frac{2(\alpha\wedge\talpha)}{1+2\alpha}} & \text{for } \talpha+(1/2+\talpha)\delta > \alpha,\\
n^{-\frac{2\talpha}{1+2\talpha}} & \text{for } \talpha+(1/2+\talpha)\delta\leq \alpha. 
\end{cases} 
\end{equation}

\subsection{Variance term}
\label{sec:varianceterm}
This section provides the asymptotic rate for the integrated variance $\E^{\pi^{(1)}} \V_{f}(\hf)$. 
First recall that $V(X+Y) = V(X) + V(Y) + 2Cov(X,Y)$ and $V(X+Y) \leq 2V(X) + 2V(Y)$ for arbitrary random variables $X$ and $Y$. 
Furthermore, in view of $V_f(\bar{Y}_k)=(\tlambda_k+1/n)m^{-1}+\tlambda_k$, $V_f(Y^{(1)}_k)=1/n$, and $Cov_f(\bar{Y}_k, Y^{(1)}_k) \geq 0$ we have 
\begin{align*}
V_f(\hf)
\asymp
\sum_{k=1}^{\infty} \frac{n+a_k^2\{ (\tlambda_k+1/n)m^{-1}+\tlambda_k\}}{(n+\tlambda_k^{-1}b_k)^2}.
\end{align*}
We multiply both the numerator and denominator by  $\lambda_k^2\tlambda_k^2(\lambda_k^{-1}+\tlambda_k^{-1}+m(\tlambda_k+1/n)^{-1})^2$. Then the denominator is handled the same way as for the bias term, while the numerator (after straightforward but somewhat cumbersome computations) is bounded from below and above by a multiple of 
\begin{align*}
    A_{k,n,m}:=n\tilde\lambda_k^2+ m^2 \lambda_k^2(\tilde\lambda_k+1/n)^{-1}(n\tilde\lambda_k+1/m),
\end{align*}
which, by combining the above computations, leads to
\begin{align*}
V_{f}(\hf) \asymp \sum_{k=1}^{\infty}\frac{A_{k,n,m}}{(nm\lambda_k+n\tlambda_k+1)^2}.
\end{align*}

The denominator in the preceding panel is the same as the one in \eqref{eq:denominator}.
Thus we follow the same strategy of partitioning $\mathbb{N}$ into three intervals. 
For $k \in [1,t_1]$, i.e., for $k \in \mathbb{N}$ where $nm\lambda_k$ dominates the denominator, we have 
\begin{align*}
\sum_{k=1}^{t_1}\frac{A_{k,n,m}}{(n\tlambda_k+1+nm\lambda_k)^2}
\asymp \frac{1}{nm^2}\sum_{k=1}^{t_1}\tlambda_k^2\lambda_k^{-2}+ \frac{1}{n}\sum_{k=1}^{t_1}\frac{\tilde\lambda_k+1/(nm)}{\tilde\lambda_k+1/n}.
\end{align*}
For $k \in ({t_1}, {t_2}]$ i.e. for $k \in \mathbb{N}$ where $n \tlambda_k$ dominates the denominator, we have
\begin{align*}
\sum_{k=t_1+1}^{t_2} \frac{A_{k,n,m}}{(n\tlambda_k+1+nm\lambda_k)^2}
\asymp \frac{t_2-t_1}{n}+ \frac{m^2}{n}\sum_{k=t_1+1}^{t_2}\lambda_k^2\tlambda_k^{-2}.
\end{align*}
Finally, for $k \in [t_2,\infty)$ i.e. for $k \in \mathbb{N}$ where $1$ dominates the denominator, we have
\begin{align*}
\sum_{k=t_2}^{\infty} \frac{A_{k,n,m}}{(n\tlambda_k+1+nm\lambda_k)^2}
\asymp n\sum_{k=t_2}^{\infty}\tilde\lambda_k^2+ nm\sum_{k=t_2}^{\infty} \lambda_k^2+ n^2m^2\sum_{k=t_2}^{\infty} \lambda_k^2\tlambda_k .
\end{align*}
Next we substitute the values $\lambda_k\asymp k^{-1-2\alpha}$ and $\tlambda_k\asymp k^{-1-2\talpha}$, resulting in
\begin{align*}
V_{f}(\hf) 
&\asymp \frac{1}{nm^2}(t_1^{1+4(\alpha-\talpha)}\vee 1)+\frac{1}{n}\sum_{k=1}^{t_1}\frac{k^{-1-2\talpha}+1/(nm)}{k^{-1-2\talpha}+1/n}+\frac{t_2-t_1}{n}\\
&\quad+\frac{m^2}{n}\sum_{k=t_1}^{t_2}  k^{4(\talpha-\alpha)}+n t_2^{-1-4\tilde\alpha}+nmt_2^{-1-4\alpha}+n^2m^2 t_2^{-2-4\alpha-2\talpha}.
\end{align*}

Then similarly to the bias term in the previous section we distinguish two cases. 
For $\alpha \leq \talpha+(1/2+\talpha)\delta$ the boundaries are $t_1=t_2=(nm)^{1/(1+2\alpha)} \geq n^{1/(1+2\talpha)}$. By noting 
\begin{align*}
    \frac{1}{n}\sum_{k=1}^{t_1}\frac{k^{-1-2\talpha}+1/(nm)}{k^{-1-2\talpha}+1/n}
    \asymp \sum_{k=1}^{n^{1/(1+2\talpha)}} \frac{1}{n} +\sum_{k=n^{1/(1+2\talpha)}}^{t_1}\Big(k^{-1-2\talpha}+\frac{1}{nm}\Big)\asymp n^{-\frac{2\talpha}{1+2\talpha}}
\end{align*}
and since $(nm)^{-2\talpha/(1+2\alpha)} \leq n^{-2\talpha/(1+2\talpha)}$ in this case, we get 
\begin{align*}
V_{f}(\hf) 
&\asymp \frac{(nm)^{\frac{1+4(\alpha-\talpha)}{1+2\alpha}}}{nm^2}+n^{-\frac{2\talpha}{1+2\talpha}}+(nm)^{-\frac{2\alpha}{1+2\alpha}} +n(nm)^{-\frac{1+4\talpha}{1+2\alpha}}\\
&\qquad+nm(nm)^{-\frac{1+4\alpha}{1+2\alpha}}+ n^2m^2 (nm)^{-\frac{2+4\alpha+2\talpha}{1+2\alpha}} \\
&\asymp n^{-\frac{2\talpha}{1+2\talpha}} + (nm)^{-\frac{2(\alpha\wedge\talpha)}{1+2\alpha}}
\asymp n^{-\frac{2\talpha}{1+2\talpha}} + (nm)^{-\frac{2\alpha}{1+2\alpha}}.
\end{align*}
Note $(nm)^{-2\alpha/(1+2\alpha)} \geq n^{-2\talpha/(1+2\talpha)}$ if and only if $\alpha \leq \talpha (1+\delta+2\talpha\delta)^{-1}$, which is a subcase of $\alpha \leq \talpha+(1/2+\talpha)\delta$.
Thus the rate in the preceding panel can be written as 
\begin{align*}
    \Big[n^{-\frac{2\talpha}{1+2\talpha}} + (nm)^{-\frac{2\alpha}{1+2\alpha}}\Big] 1_{\alpha \leq \talpha+(1/2+\talpha)\delta} \asymp
    \begin{cases}
    (nm)^{-\frac{2\alpha}{1+2\alpha}} & \text{for $\alpha\leq \frac{\talpha}{1+\delta+2\talpha\delta}$,}\\
    n^{-\frac{2\talpha}{1+2\talpha}} & \text{for $\frac{\talpha}{1+\delta+2\talpha\delta} \leq \alpha \leq \talpha+(1/2+\talpha)\delta$.}
    \end{cases} 
\end{align*}
For the remaining case $\alpha > \talpha+(1/2+\talpha)\delta$ the boundaries are  $t_1=m^{1/(2\alpha-2\talpha)}$ and $t_2= n^{1/(1+2\talpha)}$. 
Here $(nm)^{1/(1+2\alpha)} \leq n^{1/(1+2\talpha)}$, which implies $m^{1/(2\alpha-2\talpha)}\leq n^{1/(1+2\talpha)}$. 
Then
\begin{align*}
V_{f}(\hf) &\asymp  \frac{ m^{\frac{1+4(\alpha-\talpha)}{2(\alpha-\talpha)}} }{nm^2}+ n^{-1}m^{\frac{1}{2(\alpha-\talpha)}}+\frac{n^{1/(1+2\talpha)}}{n}+ \frac{m^2}{n} m^{\frac{1+4(\talpha-\alpha)}{2(\alpha-\talpha)}}\\
&\qquad+n n^{-\frac{1+4\talpha}{1+2\talpha}}+nm n^{-\frac{1+4\alpha}{1+2\talpha}}+ n^2m^2 n^{-\frac{2+4\alpha+2\talpha}{1+2\talpha}}\nonumber\\
&\asymp n^{-\frac{2\talpha}{1+2\talpha}}.
\end{align*}
We conclude the two cases in the display below
\begin{equation}\label{UB: variance}
V_{f}(\hf) = \E^{\pi^{(1)}} V_{f}(\hf) \asymp
\begin{cases}
(nm)^{-\frac{2\alpha}{1+2\alpha}} & \text{for } \alpha\leq \talpha (1+\delta+2\talpha\delta)^{-1}, \\
n^{-\frac{2\talpha}{1+2\talpha}}  & \text{for } \alpha\geq \talpha (1+\delta+2\talpha\delta)^{-1}. 
\end{cases} 
\end{equation}

\subsection{Conclusion}
From \eqref{UB: bias} and \eqref{UB: variance} we conclude $\E^{\pi^{(1)}} B_f(\hf) \lesssim V_{f}(\hf)$ and thus
\begin{equation}  \label{eq:gwnbayesrisklowerboundf1}
\inf_{\hf}\int E_{f^{(1)}}\|\hf-f^{(1)}\|_2^2\dif \pi^{(1)}(f^{(1)})
\asymp
V_{f}(\hf)
\asymp (nm)^{-\frac{2\alpha}{1+2\alpha}} + n^{-\frac{2\talpha}{1+2\talpha}}.
\end{equation}
\end{proof}

\section{Proof of Theorem~1: Bayes risk for $g$}
\label{sec:proofbayesriskg}

\begin{proof}[Proof of Theorem~1]
First we consider the Bayes risk associated with the GP prior $\pi=\pi_{\alpha}$ on $g$ with regularity hyperparameter $\alpha$, i.e.
\begin{align*}
\E^{\pi} \E_g\|\hg-g\|_2^2
= \int \E_g\|\hg-g\|_2^2 \dif\pi(g).
\end{align*}
The above Bayes risk is minimized by the posterior mean for $g$ given the data $Y^{(1)},\ldots,Y^{(m)}$. The likelihood of the sufficient statistic $\tY_k \coloneqq m^{-1} \sum_{j=1}^m Y_k^{(j)}$ is $\text{N}(g_k, m^{-1} (\tlambda_k + n^{-1}))$.
Under the prior $g_k \sim \text{N}(0, \lambda_k)$, the posterior distribution of $g_k$ given $\tY_k$ is Gaussian with mean 
\begin{equation}  \label{eq:gwnpostmeang}
    (\lambda_k^{-1} m^{-1} (\tlambda_k + n^{-1}) + 1)^{-1} \tY_k \eqqcolon \hat{g}_k.
\end{equation}
Under the standard bias-variance decomposition, the above integrated MISE becomes
\begin{align*}
    \E^{\pi} \E_g\|\hg-g\|_2^2 = \E^{\pi} \| \E_g\hg-g\|_2^2 + \E^{\pi} \E_g\| \hg-\E_g\hg\|_2^2
\end{align*}
and for simplicity we use the notations $B_g(\hg) \coloneqq \| \E_g\hg-g\|_2^2$ and $V_g(\hg) \coloneqq \E_g\| \hg- \E_g\hg\|_2^2$. We deal with these two terms separately.

For the squared bias term, we get
\begin{align} \label{eq:gwnbayesriskgbias}
    B_g(\hg)
    = \sum_{k=1}^{\infty} \Big(\frac{n\tlambda_k + 1}{nm\lambda_k + n\tlambda_k + 1}\Big)^2 g_k^2,
\end{align}
where we recognize the denominator is the same as the one in \eqref{eq:denominator} and thus we follow the same strategy of partitioning $\mathbb{N}$ into three intervals. 
For $t_1$ and $t_2$ as before, we have 
\begin{align*}
    B_g(\hg)
    \asymp m^{-2} \sum_{k=1}^{t_1} \frac{\tlambda_k^2 + n^{-2}}{\lambda_k^2} g_k^2
    + \sum_{k=t_1}^{t_2} \Big(1+n^{-2} \tlambda_k^{-2} \Big) g_k^2
    + \sum_{k=t_2}^{\infty} \Big(n^2 \tlambda_k^2 + 1\Big) g_k^2.
\end{align*}
Then using $\E^{\pi} g_k^2 = \lambda_k$ and substituting $\lambda_k \asymp k^{-1-2\alpha}$ and $\tlambda_k \asymp k^{-1-2\talpha}$, we get 
\begin{align*}
    \E^{\pi} B_g(\hg)
    \asymp m^{-2} \sum_{k=1}^{t_1} \big(k^{2\alpha-1-4\tilde\alpha} + n^{-2} k^{1+2\alpha} \big) 
    + \sum_{k=t_1}^{t_2} \big(k^{-1-2\alpha}+n^{-2} k^{1+4\tilde\alpha-2\alpha}\big) \\
    + \sum_{k=t_2}^{\infty} \big(n^2 k^{-3-4\tilde\alpha-2\alpha} + k^{-1-2\alpha}\big).
\end{align*}
Then we distinguish two cases. For $\alpha \leq \talpha+(1/2+\talpha)\delta$ the boundaries are $t_1=t_2=(nm)^{1/(1+2\alpha)} \geq n^{1/(1+2\talpha)}$, in which case $\E^{\pi} B_g(\hg) \asymp (nm)^{-2\alpha/(1+2\alpha)}$.
For the remaining case $\alpha > \talpha+(1/2+\talpha)\delta$ the boundaries are $t_1=m^{1/(2\alpha-2\talpha)}$ and $t_2= n^{1/(1+2\talpha)}$, in which case $\E^{\pi} B_g(\hg) \asymp m^{-\alpha/(\alpha-\talpha)}$. 
We summarize these two cases by writing 
\begin{align}  \label{eq:gwnbayesriskgintegratedbias}
    \E^{\pi}B_g(\hg)
    &\asymp 
    \begin{cases}
        m^{-\alpha/(\alpha-\talpha)} &\qquad \text{if } \alpha > \talpha + \delta (1+2\talpha)/2, \\
        (nm)^{-2\alpha/(1+2\alpha)} &\qquad \text{otherwise.}
    \end{cases}
\end{align}
Note $m^{-\alpha/(\alpha-\talpha)} = (nm)^{-2\alpha/(1+2\alpha)}$ at the boundary $\alpha = \talpha + \delta (1+2\talpha)/2$.

For the variance term $V_g(\hg)$, by \eqref{eq:gwnpostmeang} and $\V [\tY_k \mid g_k] = m^{-1} (\tlambda_k + n^{-1})$ we have
\begin{align*}
    V_g(\hg)
    = \sum_{k=1}^{\infty} \frac{m^{-1} (\tlambda_k + n^{-1})}{(\lambda_k^{-1} m^{-1} (\tlambda_k + n^{-1}) + 1)^2}
    = n^2 m \sum_{k=1}^{\infty} \frac{\lambda_k^2(\tlambda_k + n^{-1})}{(n\tlambda_k+1+nm\lambda_k)^2},
\end{align*}
where again we recognize the denominator and thus follow the same partitioning strategy as before. 
For $t_1$ and $t_2$ as before, and substituting $\lambda_k \asymp k^{-1-2\alpha}$ and $\tlambda_k \asymp k^{-1-2\talpha}$, we get
\begin{align*}
    V_g(\hg)
    \asymp m^{-1} \sum_{k=1}^{t_1} \big(k^{-1-2\talpha} + n^{-1}\big) 
    + m \sum_{k=t_1}^{t_2} \big(k^{-1-4\alpha+2\talpha}+n^{-1} k^{4\tilde\alpha-4\alpha}\big) \\
    + n m \sum_{k=t_2}^{\infty} \big(n k^{-3-4\alpha-2\tilde\alpha} + k^{-2-4\alpha}\big).
\end{align*}
Then we distinguish two cases. For $\alpha \leq \talpha+(1/2+\talpha)\delta$ the boundaries are $t_1=t_2=(nm)^{1/(1+2\alpha)} \geq n^{1/(1+2\talpha)}$, in which case $V_g(\hg) \asymp m^{-1} + (nm)^{-2\alpha/(1+2\alpha)}$. 
For the remaining case $\alpha > \talpha+(1/2+\talpha)\delta$ the boundaries are $t_1=m^{1/(2\alpha-2\talpha)}$ and $t_2= n^{1/(1+2\talpha)}$, in which case $V_g(\hg) \asymp m^{-1}$. 
Combining the results of the two scenarios, we get 
\begin{align} \label{eq:gwnbayesriskgvar}
    V_g(\hg) = \E^{\pi} V_g(\hg) \asymp 
    \begin{cases}
        m^{-1} &\qquad \text{if } \alpha > \talpha + \delta (1+2\talpha)/2 \\
        m^{-1} + (nm)^{-2\alpha/(1+2\alpha)} &\qquad \text{otherwise.}
    \end{cases}
\end{align}
Then $V_g(\hg) \gtrsim \E^{\pi} B_g(\hg)$ since $m^{-\alpha/(\alpha-\talpha)} < m^{-1}$ if $\alpha > \talpha + \delta (1+2\talpha)/2$.

From \eqref{eq:gwnbayesriskgintegratedbias} and \eqref{eq:gwnbayesriskgvar}, we conclude 
\begin{align} \label{eq:MISE:Bayesg}
\int  \E_g\|\hg-g\|_2^2\dif\pi(g)\asymp m^{-1}+(nm)^{-\frac{2\alpha}{1+2\alpha}}.
\end{align}
\end{proof}

\section{Proof of Theorem~6: Minimax LB for $\fo$}
\label{sec:proofminimaxriskf}

\begin{proof}[Proof of Theorem~6]
Let $\pi = \pi_{\alpha}$ be the GP prior on $g$ with regularity hyperparameter $\alpha$.
To reduce visual clutter we let $f$ and $\hf$ respectively denote $\fo$ and $\hfo$, $\|\cdot\|$ denote the $L_2$ norm, and let $R_n \coloneqq 2\sqrt{\log n}$. First note
\begin{align*}
   \inf_{\hf} \sup_{g \in \cS(\alpha, R_n)} \E_g\|\hf-f\|^2 
    &\geq \inf_{\hf}\int_{\cS(\alpha, R_n)} \E_g\|\hf-f\|^2 \dif\pi(g) \\
    &=\inf_{\hf}\bigg\{\int \E_g\|\hf-f\|^2 \dif\pi(g)-
    \int_{\cS(\alpha, R_n)^c} \E_g\|\hf-f\|^2 \dif\pi(g)\bigg\}.
\end{align*}
If $\hf$ minimizes the above expression, then $\E_g\|\hf\|^2 \leq 3R_n^2 + \sum_{k=1}^{\infty} \tlambda_k$; since otherwise $ \E_g\|\hf-f\|^2 \geq (1/2)R_n^2 -\sum_{k=1}^{\infty}\tilde\lambda_k \gtrsim R_n^2$ for all $g\in \cS(\alpha, R_n)$, which leads to a contradiction in view of the upper bound derived in Theorem~\ref{thm:upperboundf}. 
Since $\sum_{k=1}^{\infty} \tlambda_k=O(1)=o(R_n)$, we get $\E_g\|\hf\|^2 \leq \big(3+o(1) \big)R_n^2$, which together with
\eqref{eq:gwnbayesrisklowerboundf1} and $\E_g\|\hf-f\|^2 \leq 2\E_g\|\hf\|^2+2\E_g\|f\|^2$ imply that the right hand side of the preceding display is bounded below by a multiple of
\begin{align*}
n^{-\frac{2\talpha}{1+2\talpha}} + (nm)^{-\frac{2\alpha}{1+2\alpha}} - C\int_{\cS(\alpha, R_n)^c}\left(R_n^2+\E_g\|f\|^2\right)\dif\pi(g),
\end{align*}
for some universal constant $C>0$. We show below that for $R_n=\rho\sqrt{\log n}$, 
\begin{align} 
\int_{\cS(\alpha, R_n)^c} \left(R_n^2+\E_g\|f\|^2\right) \dif\pi(g)=O(n^{-C_\rho}),\label{eq:help}
\end{align}
for any $C_\rho<\rho^2/4$ and hence the integral term is negligible for $\rho \geq 2$, finishing the proof of the statement.

To prove the preceding panel, we let $\|g\|_{\cS_\alpha}^2 = \sum_k g_k^2 k^{2\alpha}$ and first note 
\begin{align*}
\int_{\cS(\alpha, R_n)^c}\E_g\|f\|^2\dif\pi(g)
&= \sum_{r=0}^\infty \int_{2^{r}R_n^2\leq \|g\|_{\cS_\alpha}^2\leq 2^{r+1}R_n^2 }\E_g\|f\|^2\dif\pi(g)\\
&\lesssim R_n^2\sum_{r=0}^\infty  2^{r+1} \pi\{\cS(\alpha, 2^{r/2} R_n)^c\},
\end{align*}
where in the last line we have used that $\E_g\|f\|^2=\|g\|^2+\sum_{k=1}^\infty\tlambda_k\lesssim \|g\|_{S_\alpha}^2+O(1)$.
To upper bound $\pi\{\cS(\alpha, 2^{r/2} R_n)^c\}$, we note $g_j^2\stackrel{d}{=}\lambda_j Z_j^2$, with $Z_j\stackrel{iid}{\sim}\text{N}(0,1)$ and recall Theorem~3.1.9 from \cite{gine}, which states
\begin{align*}
P\bigg(\sum_{j=1}^\infty \lambda_j (Z_j^2-1)\geq t\bigg) \leq \exp\bigg\{-\frac{t^2}{4\sum_{j=1}^\infty \lambda_j+ t\lambda_1}\bigg\} 
\end{align*}
(in the above theorem only a finite sequence is considered, but the proof goes through with an infinite sequence of $Z_j$s as well). Hence 
\begin{align}
\pi\{\cS(\alpha, 2^{r/2} R_n)^c\}&= P\bigg(\sum_{j=1}^\infty \lambda_j (Z_j^2-1)\geq 2^r R_n^2 -\sum_{j=1}^\infty \lambda_j\bigg)\nonumber\\
&\leq  \exp\bigg\{-\frac{2^{2r-1} R_n^4}{4\sum_{j=1}^\infty \lambda_j+ 2^rR_n^2 \lambda_1}\bigg\}\leq \exp\{-2^{r-2}R_n^2\}.\label{eq: UB:complement}
\end{align}
Combining the preceding displays, we get
\begin{align*}
\int_{\cS(\alpha, R_n)^c}\E_g\|f\|^2\dif\pi(g)
\lesssim R_n^2\sum_{r=0}^\infty 2^{r+1} \exp\{-2^{r-2}R_n^2\}
\lesssim R_n^2 \exp\{-R_n^2/4\}
=o(n^{-C_\rho}),
\end{align*}
for any $C_\rho<\rho^2/4$. Similarly, we get that
\begin{align*}
    \int_{\cS(\alpha, R_n)^c} R_n^2 \dif\pi(g) \lesssim R_n^2 \exp\{-R_n^2\} = o(n^{-C_\rho}),
\end{align*}
finishing the proof of the statement.
\end{proof}

\section{Proof of Theorem~2: Minimax LB for $g$}
\label{sec:proofminimaxriskg}

\begin{proof}[Proof of Theorem~2]
Let $\pi = \pi_{\alpha}$ be the GP prior on $g$ with regularity hyperparameter $\alpha$.
Similarly to before, to reduce visual clutter we let $\|\cdot\|$ denote the $L_2$ norm and set $R_n \coloneqq 2\sqrt{\log n}$. Then, as before, note that 
\begin{align*}
    \inf_{\hg} \sup_{g \in \cS(\alpha, R_n)} \E_g\|\hg-g\|^2 
    &\geq \inf_{\hg}\Big\{\int \E_g\|\hg-g\|^2 \dif\pi(g)-
    \int_{\cS(\alpha, R_n)^c} \E_g\|\hg-g\|^2 \dif\pi(g)\Big\}.
\end{align*}
If $\hg$ minimizes the above expression, then $\E_g\|\hg\|^2\leq 3R_n^2$; since otherwise $\E_g\|\hat{g}-g\|^2\geq 2^{-1}\E_g\|\hat{g}\|^2-\|g\|^2\geq (3/2-1)R_n^2$ for all $g\in \cS(\alpha, R_n)$, which would mean $\hg$ does not minimize the expression in view of the upper bound derived in Theorem~\ref{thm:upperboundg}. 
This inequality, \eqref{eq:MISE:Bayesg}, and $\E_g\|\hg-g\|^2 \leq 2\E_g\|\hg\|^2+2\|g\|^2$ together imply that the right hand side of the preceding display is bounded below by a multiple of
\begin{align}
    m^{-1}+(nm)^{-\frac{2\alpha}{1+2\alpha}}- C\int_{\cS(\alpha, R_n)^c}\left(R_n^2+\|g\|^2\right)\dif\pi(g),\label{eq:LB:g}
\end{align}
for some universal constant $C>0$. 
We let $\|g\|_{\cS_\alpha}^2 = \sum_k g_k^2 k^{2\alpha}$ and note
\begin{align*}
\int_{\cS(\alpha, R_n)^c}\|g\|^2\dif\pi(g)
&= \sum_{r=0}^\infty \int_{2^{r}R_n^2\leq \|g\|_{\cS_\alpha}^2\leq 2^{r+1}R_n^2 }\|g\|^2\dif\pi(g)
\leq R_n^2\sum_{r=0}^\infty 2^{r+1} \pi\{\cS(\alpha, 2^{r/2} R_n)^c\}.
\end{align*}
Furthermore, similarly to \eqref{eq:help} (where in the first step we have used that $\|f\|^2\lesssim \|g\|^2+\sum_k\tilde\lambda_k$) the last term in \eqref{eq:LB:g} is $O(n^{-C_\rho})$ for any $C_\rho<\rho^2/4$. Since this term is negligible compared to the first two terms in \eqref{eq:LB:g}, it concludes our proof.
\end{proof}

\section{Proof of the posterior mean portion of Theorem~7: Posterior-mean UB for $\fo$}
\label{sec:proof_ubpostmean_f}

\begin{proof}
For notational convenience we omit the super index from $f^{(1)}$, i.e. we write $f=f^{(1)}$, and consider sample size $m+1$ instead of $m$.
We let $\hat{f}_k$ be component $k$ of the mean of the posterior distribution of $f|Y^{(1)},\ldots,Y^{(m+1)}$ derived in Section~\ref{sec:posteriorf}, i.e., we let 
\begin{align*}
    \hat{f}_k \coloneqq \frac{Y^{(1)}_k n+\bar{Y}_ka_k}{n+\tlambda_k^{-1}b_k},
\end{align*}
where $\bar{Y}_k = m^{-1} \sum_{j=2}^{m+1} Y_k^{(j)}$, $c_k \coloneqq \lambda_k^{-1}+\tlambda_k^{-1}+m(\tlambda_k+1/n)^{-1}$, 
\begin{align*}
    a_k = \tlambda_k^{-1} m(\tlambda_k+1/n)^{-1} c_k^{-1} \quad\text{and}\quad b_k = \big\{\lambda_k^{-1}+m(\tlambda_k+1/n)^{-1}\big\} c_k^{-1}.
\end{align*}

For any given function $g$, we denote by $\pi_g^{(1)}$ the distribution of $f^{(1)}|g$.
We also recall $\E_{f,g}$ is the conditional expectation given $f^{(1)}$ and $g$.
Then 
\begin{align*}
    \E_{g}(\hat{f}_k-f_k)^2 
    = \E^{\pi_g^{(1)}} \E_{f,g}(\hat{f}_k-f_k)^2 
    &= \E^{\pi_g^{(1)}} \Big[ \big(\E_{f,g}\hat{f}_k-f_k\big)^2  + \E_{f,g}\big(\E_{f,g}\hat{f}_k-\hat{f}_k\big)^2  \Big]
\end{align*}
where $\E_{f,g}\big(\E_{f,g}\hat{f}_k-\hat{f}_k\big)^2 = (n+\tlambda_k^{-1}b_k)^{-2} \big[n + a_k^2m^{-1}(\tlambda_k+1/n)\big]$ does not depend on either $f$ or $g$, and 
\begin{align*}
	(n+\tlambda_k^{-1}b_k)^2 \E^{\pi_g^{(1)}}\big[\E_{f,g}\hat{f}_k-f_k\big]^2
	&= \E^{\pi_g^{(1)}}\Big[\big(\tlambda_k^{-1}b_k\big)^2f_k^2 + a_k g_k^2 - 2\tlambda_k^{-1}b_k a_k f_kg_k \Big] \\
	&= \big(\tlambda_k^{-1}b_k - a_k\big)^2 g_k^2 + \tlambda_k^{-1}b_k^2.
\end{align*}

We have $\E_{g}(\hat{f}_k-f_k)^2= \gamma_k^2 g_k^2 + \delta_k$, where we introduce the terms
\begin{align*}
    \gamma_k 
    = \frac{\tlambda_k^{-1}b_k - a_k }{n+\tlambda_k^{-1}b_k}
    = \frac{1}{n \tlambda_k + 1 + n(m+1)\lambda_k}
    \quad \text{and} \quad
    \delta_k = \frac{n + a_k^2m^{-1}(\tlambda_k+1/n) + \tlambda_k^{-1}b_k^2}{(n+\tlambda_k^{-1}b_k)^2}.
\end{align*}
By plugging in $\lambda_k=k^{-1-2\alpha}$ into $\gamma_k$ in the preceding panel, we get
\begin{align*}
    \gamma_k k^{-\alpha} < \frac{ k^{-\alpha} }{1+nmk^{-1-2\alpha}}<
    \begin{cases}
        (nm)^{-1}k^{1+\alpha} &\text{if } k \leq (nm)^{1/(1+2\alpha)},\\
        k^{-\alpha} &\text{if } k \geq (nm)^{1/(1+2\alpha)},
    \end{cases}
\end{align*}
where both terms on the right hand side are $O((nm)^{-\alpha/(1+2\alpha)})$. 
We show below that 
\begin{align}
\sum_{k=1}^{\infty} \delta_k &\lesssim (nm)^{-\frac{2\alpha}{1+2\alpha}} + n^{-\frac{2\talpha}{1+2\talpha}}\label{eq: UB:postmean:2}
\end{align}
holds. 
The two preceding panels together then imply that
\begin{align*}
    \sup_{g \in \cS(\alpha,R)} \sum_{k=1}^{\infty} (\gamma_k^2 g_k^2 + \delta_k) = \sup_{g \in \cS(\alpha,R)} \sum_{k=1}^{\infty} \gamma_k^2 k^{-2\alpha} g_k^2 k^{2\alpha} + \sum_{k=1}^{\infty} \delta_k 
    \leq C_R \Big[(nm)^{-\frac{2\alpha}{1+2\alpha}} + n^{-\frac{2\talpha}{1+2\talpha}}\Big]
\end{align*}
for some universal constant $C_R>0$ only depending on $R$, providing our statement.

It remains to prove \eqref{eq: UB:postmean:2}. Note that 
\begin{align*}
    \delta_k 
    &= \frac{n\{\lambda_k^{-1}+\tlambda_k^{-1}+\frac{m}{\tlambda_k+1/n}\}^2 + \tlambda_k^{-2}\frac{m}{\tlambda_k+1/n}+ \tlambda_k^{-1}\big\{\lambda_k^{-1}+\frac{m}{\tlambda_k+1/n}\big\}^2}{\Big[n\lambda_k^{-1}+\tlambda_k^{-1}\lambda_k^{-1}+(m+1)n\tlambda_k^{-1}\Big]^2}.
\end{align*}
As in Section~\ref{sec:biasterm}, we partition the natural numbers into two or three intervals which are defined according to the dominating term in the denominator.

For $\alpha \leq \talpha+(1/2+\talpha)\delta$, the boundaries are $t_1=t_2=(nm)^{1/(1+2\alpha)} \geq n^{1/(1+2\talpha)}$.
Note also that $k^{-1-2\talpha} < n^{-1}$ is equivalent to $k > n^{1/(1+2\talpha)}$. Then, for $k \leq t_1$,
\begin{align*}
 \sum_{k=1}^{t_1} \delta_k 
    &\asymp \sum_{k=1}^{t_1} \frac{n\tlambda_k^2\{\lambda_k^{-2}+\tlambda_k^{-2}+\frac{m^2}{(\tlambda_k+1/n)^2} \} + \frac{m}{\tlambda_k+1/n} + \tlambda_k\big\{\lambda_k^{-2}+\frac{m^2}{(\tlambda_k+1/n)^2}\big\}}{(nm)^2} \\
      &\asymp \sum_{k=1}^{t_1} \frac{\tlambda_k^2\lambda_k^{-2}}{nm^2} + \frac{1}{nm^2} + \frac{\tlambda_k^2}{n(\tlambda_k+1/n)^{2}} + \frac{1}{n^2m(\tlambda_k+1/n)} + \frac{\tlambda_k\lambda_k^{-2}}{(nm)^2} + \frac{\tlambda_k}{n^2(\tlambda_k+1/n)^{2}}  \\
      &\leq  \sum_{k=1}^{t_1} \Bigg[ \frac{k^{4\alpha-4\talpha}}{nm^2} + \frac{1}{nm} + \frac{k^{1+4\alpha-2\talpha}}{(nm)^2} \Bigg] \\
      &\qquad+ \sum_{k=1}^{n^{1/(1+2\talpha)}} \Bigg[ \frac{1}{n} +\frac{k^{1+2\talpha}}{n^2m}+ \frac{k^{1+2\talpha}}{n^2} \Bigg] + \sum_{k=n^{1/(1+2\talpha)}}^{t_1} \bigg[ nk^{-2-4\talpha} +\frac{1}{nm}+k^{-1-2\talpha} \bigg] \\
    &\lesssim n^{-2\talpha/(1+2\talpha)} + (nm)^{-2\alpha/(1+2\alpha)}.
\end{align*}
For $k \geq t_2=t_1= (nm)^{1/(1+2\alpha)} \geq n^{1/(1+2\talpha)}$ we have $\tlambda_k+1/n \asymp 1/n$ and thus
\begin{align*}
    \sum_{k=t_2}^{\infty} \delta_k 
    &\asymp \sum_{k=t_2}^{\infty} \frac{n\tlambda_k^2\{\lambda_k^{-2}+\tlambda_k^{-2}+\frac{m^2}{(\tlambda_k+1/n)^2}\} + \frac{m}{\tlambda_k+1/n} + \tlambda_k\big\{\lambda_k^{-2}+\frac{m^2}{(\tlambda_k+1/n)^2}\big\}}{\lambda_k^{-2}} \\
    &\asymp \sum_{k=t_2}^{\infty} n\tlambda_k^2 + n\lambda_k^2 + n^3m^2\tlambda_k^2 \lambda_k^2 + nm\lambda_k^2 + \tlambda_k + n^2m^2\tlambda_k\lambda_k^2 \\
    &=\sum_{k= (nm)^{1/(1+2\alpha)}}^{\infty} nk^{-2-4\talpha}+ n^3m^2k^{-4-4\talpha-4\alpha} + nm k^{-2-4\alpha} +  k^{-1-2\talpha} + n^2m^2 k^{-3-4\alpha-2\talpha} \\
    &\lesssim n^{-2\talpha/(1+2\talpha)} + (nm)^{-2\alpha/(1+2\alpha)}.
\end{align*}

For the remaining case $\alpha > \talpha+(1/2+\talpha)\delta$ the boundaries are $t_1=m^{1/(2\alpha-2\talpha)}$ and $t_2= n^{1/(1+2\talpha)}$. Note that $n^{(2\talpha-4\alpha)/(1+2\talpha)} m^2 < n^{-2\talpha/(1+2\talpha)}$ in this case.
Then, for $k \leq t_1$,
\begin{align*}
    \sum_{k=1}^{t_1} \delta_k 
    &\asymp \sum_{k=1}^{t_1} \frac{n\tlambda_k^2\{\lambda_k^{-2}+\tlambda_k^{-2}+\frac{m^2}{(\tlambda_k+1/n)^2}\} + \frac{m}{\tlambda_k+1/n} + \tlambda_k\big\{\lambda_k^{-2}+\frac{m^2}{(\tlambda_k+1/n)^2}\big\}}{(nm)^2} \\
    &\asymp \sum_{k=1}^{t_1} \frac{\tlambda_k^2\lambda_k^{-2}}{nm^2} + \frac{1}{nm^2} + \frac{1}{n} +\frac{\tlambda_k^{-1}}{n^2m} + \frac{\tlambda_k\lambda_k^{-2}}{(nm)^2} + \frac{\tlambda_k^{-1}}{n^2}\\ 
    &<\sum_{k=1}^{m^{1/(2\alpha-2\talpha)}} \frac{k^{4(\alpha-\talpha)}}{nm^2} + \frac{2}{n} + \frac{k^{1+4\alpha-2\talpha}}{(nm)^2} +\frac{2k^{1+2\talpha}}{n^2}
    \lesssim n^{-2\talpha/(1+2\talpha)} .
\end{align*}
For $t_1 < k < t_2$ we have 
\begin{align*}
    \sum_{k=t_1+1}^{t_2}\delta_k 
    &\asymp \sum_{k=t_1+1}^{t_2} \frac{n\{\lambda_k^{-2}+\tlambda_k^{-2}+m^2\tlambda_k^{-2}\} + \tlambda_k^{-2}m\tilde\lambda_k^{-1} + \tlambda_k^{-1}\big\{\lambda_k^{-2}+m^2\tlambda_k^{-2}\big\}}{n^2 \lambda_k^{-2}} \\
    &\asymp \sum_{k=t_1+1}^{t_2} n^{-1} + n^{-1}\tlambda_k^{-2}\lambda_k^2 + n^{-1}m^2\tlambda_k^{-2}\lambda_k^2 + n^{-2}m^2\tlambda_k^{-3}\lambda_k^2 + n^{-2} \tlambda_k^{-1} \\
    &\lesssim  \sum_{k=m^{1/(2\alpha-2\talpha)}}^{n^{1/(1+2\talpha)}} n^{-1}+n^{-1}m^2k^{4(\talpha-\alpha)}+ n^{-2}m^2 k^{1+6\talpha-4\alpha}+n^{-2}k^{1+2\talpha}\lesssim n^{-2\talpha/(1+2\talpha)}.
\end{align*}
Finally, for $k \geq t_2 > (nm)^{1/(1+2\alpha)}$,
\begin{align*}
    \sum_{k=t_2}^{\infty}\delta_k 
    &\asymp \sum_{k=t_2}^{\infty} n\tlambda_k^2\lambda_k^2\{\lambda_k^{-2}+\tlambda_k^{-2}+m^2n^2\} + nm\lambda_k^2 + \tlambda_k\lambda_k^2\big\{\lambda_k^{-2}+m^2n^2\big\} \\
    &\asymp \sum_{k=t_2}^{\infty} n\tlambda_k^2 + n\lambda_k^2 + n^3m^2\tlambda_k^2 \lambda_k^2 + nm\lambda_k^2 + \tlambda_k + n^2m^2\tlambda_k\lambda_k^2 \\
    &=\sum_{k=n^{1/(1+2\talpha)}}^{\infty} nk^{-2-4\talpha}+ n^3m^2k^{-4-4\talpha-4\alpha} + n(m+1)k^{-2-4\alpha} + k^{-1-2\talpha} + n^2m^2k^{-3-2\talpha-4\alpha}\\
    &\lesssim n^{-2\talpha/(1+2\talpha)} + (nm)^{-2\alpha/(1+2\alpha)}, 
\end{align*}
finishing the proof of the statement.
\end{proof}

\section{Posterior-mean LB for misspecified $\fo$}
\label{sec:proof_lbpostmeanmisspecified_f}

\begin{theorem} 
    Let us consider model \eqref{eq:gwnmodelfreq}. 
    Let us set $\zeta_k \asymp k^{-1-2\beta}$ and $\tzeta_k \asymp k^{-1-2\tbeta}$ to be the (possibly) misspecified versions of the eigenvalues $\lambda_k$ and $\tlambda_k$ and denote by $\hat{f}_{\beta,\tilde\beta}^{(1)}$ the corresponding, (possibly) misspecified posterior mean.\\
  If   $\beta=\alpha$, then for any $\tilde\beta>0$ we have 
    \begin{align*}
        \sup_{g \in \cS(\alpha,R)} \E_g \|\hfo_{\alpha,\tilde\beta}-\fo\|_2^2
        & 
        \gtrsim n^{-\frac{2(\tilde\beta\wedge\talpha)}{1+2\tilde\beta}} + (nm)^{-\frac{2\alpha}{1+2\alpha}}.
    \end{align*}
    If $\tilde\beta = \tilde\alpha$, then for any $\beta>0$ we have 
    \begin{align*}
        \sup_{g \in \cS(\alpha,R)} \E_g \|\hfo_{\beta,\tilde\alpha}-\fo\|_2^2 &\gtrsim
        \begin{cases}
            n^{-\frac{2\talpha}{1+2\talpha}} + (nm)^{-\frac{2(\beta\wedge\alpha)}{1+2\beta}} & \text{for } \beta \leq \talpha + \delta(1/2+\talpha), \\
            n^{-\frac{2(\alpha\wedge\talpha)}{1+2\talpha}}  & \text{for } \beta > \talpha + \delta(1/2+\talpha).
        \end{cases} 
    \end{align*}
\end{theorem}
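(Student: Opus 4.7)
The plan is to mirror the correctly-specified posterior-mean analysis of Section~\ref{sec:proof_ubpostmean_f}, tracking how the misspecified eigenvalues $\zeta_k,\tzeta_k$ used in constructing the estimator interact with the true eigenvalues $\lambda_k,\tlambda_k$ appearing in the data-generating mechanism. First I would repeat the Bayes computation of Section~\ref{sec:posteriorf} verbatim but with $\lambda_k,\tlambda_k$ replaced by $\zeta_k,\tzeta_k$, obtaining the misspecified posterior mean
\begin{equation*}
\hfo_{\beta,\tilde\beta,k} = \frac{nY_k^{(1)} + a_k^{*}\,\bar{Y}_k}{n + \tzeta_k^{-1} b_k^{*}},
\end{equation*}
where $a_k^{*},b_k^{*},c_k^{*}$ are defined from $\zeta_k,\tzeta_k$ exactly as $a_k,b_k,c_k$ were defined from $\lambda_k,\tlambda_k$. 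Writing $\hf_k=\alpha_k Y_k^{(1)}+\beta_k \bar{Y}_k$ for the resulting weights and computing the MSE under the \emph{true} model (so $\V_g Y_k^{(1)}=\tlambda_k+1/n$ and $\V_g \bar{Y}_k=(\tlambda_k+1/n)/m$) gives
\begin{equation*}
\E_g(\hf_k-f_k)^2 = (1-\alpha_k-\beta_k)^2 g_k^2 + (\alpha_k-1)^2 \tlambda_k + \alpha_k^2/n + \beta_k^2(\tlambda_k+1/n)/m,
\end{equation*}
and the same algebraic identity used in Theorem~\ref{thm:upperboundf} gives $1-\alpha_k-\beta_k = 1/(n\tzeta_k+1+n(m+1)\zeta_k)$.

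Since each of the four summands is non-negative, I lower bound the MSE by handling the \emph{bias piece} and the three \emph{variance pieces} separately. For the bias I take the adversarial single-spike $g$ with $g_{k^\ast}=Rk_\ast^{-\alpha}$ (so $g\in\cS(\alpha,R)$), giving bias contribution $R^2 k_\ast^{-2\alpha}/(n\tzeta_{k^\ast}+1+n(m+1)\zeta_{k^\ast})^2$, and I choose $k^\ast$ so that the denominator is $\asymp 1$. In Case~1 this gives $k^\ast\asymp (nm)^{1/(1+2\alpha)}$ and the bias contribution $(nm)^{-2\alpha/(1+2\alpha)}$. In Case~2, the optimal $k^\ast$ depends on whether $(nm)^{1/(1+2\beta)}$ or $n^{1/(1+2\talpha)}$ is the larger of the two thresholds, which is precisely the condition $\beta\lessgtr\talpha+\delta(1/2+\talpha)$; the bias then contributes $(nm)^{-2\alpha/(1+2\beta)}$ or $n^{-2\alpha/(1+2\talpha)}$ in the two regimes, respectively.

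For the variance pieces I would partition $\mathbb{N}$ into the three intervals on which $nm\zeta_k$, $n\tzeta_k$, and $1$ respectively dominate the denominator $n\tzeta_k+1+n(m+1)\zeta_k$, exactly as in Section~\ref{sec:proof_ubpostmean_f}, and compute the asymptotic order of each of the three variance terms on each interval. The mechanisms producing the stated rates are: (i) in Case~1, the noise term $\alpha_k^2/n$ summed to $k\asymp n^{1/(1+2\tilde\beta)}$ contributes $n^{-2\tilde\beta/(1+2\tilde\beta)}$ (under-shrinkage when $\tilde\beta\leq\talpha$), while the $f$-variance term $(\alpha_k-1)^2\tlambda_k$ contributes $n^{-2\talpha/(1+2\tilde\beta)}$ when $\tilde\beta>\talpha$ (over-shrinkage fails to track the true signal variance), these combining to $n^{-2(\tilde\beta\wedge\talpha)/(1+2\tilde\beta)}$; (ii) in Case~2, the term $(\alpha_k-1)^2\tlambda_k$ produces the correctly-specified rate $n^{-2\talpha/(1+2\talpha)}$ in both sub-regimes, and in the sub-regime $\beta\leq\talpha+\delta(1/2+\talpha)$ the term $\beta_k^2(\tlambda_k+1/n)/m$ summed over $k$ in the range $[n^{1/(1+2\talpha)},(nm)^{1/(1+2\beta)}]$ (where one checks that $\beta_k\asymp 1$) produces the additional $(nm)^{-2\beta/(1+2\beta)}$ contribution covering the subcase $\alpha>\beta$ in which the bias argument alone is insufficient.

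The main technical obstacle is the partitioned bookkeeping of the three variance terms across the three $k$-intervals: within each sub-regime the dominant contribution depends on the ordering of $\alpha,\beta,\talpha,\tilde\beta$ through several interrelated comparisons (e.g., whether the crossover $n^{1/(1+2\tilde\beta)}$ falls within a given interval, or whether $\zeta_k^{-1}$ dominates $m(\tzeta_k+1/n)^{-1}$ inside $c_k^{*}$), so a careful sub-case analysis is needed to verify each dominant-term claim and to match each sub-regime to the announced rate. Combining the bias and variance lower bounds and taking their maximum across sub-cases then yields the theorem.
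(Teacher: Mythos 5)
Your proposal is correct and takes essentially the same approach as the paper: the four-term MSE decomposition you write down is algebraically identical to the paper's $\gamma_k^2 g_k^2 + \delta_k$ representation (your three variance pieces sum exactly to $\delta_k$), and both arguments use a single-spike adversarial $g$ for the bias and the same three-interval partition of $k$ according to which term dominates $n\tzeta_k + 1 + n(m+1)\zeta_k$. The specific variance piece you credit for each rate occasionally differs from the one the paper extracts from its $\delta_k$ numerator, but this is immaterial since the pieces are all nonnegative and the claimed rates do come out of the terms you name.
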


\begin{proof}
In view of Section~\ref{sec:posteriorf}, the posterior mean $\hat{f}_{\beta,\tilde\beta}^{(1)} \equiv (\hat{f}_k^{(1)})_{k=1}^{\infty}$ has coordinates
\begin{align*}
    \hat{f}_k^{(1)} \coloneqq \frac{Y^{(1)}_k n+\bar{Y}_ka_k}{n+\tzeta_k^{-1}b_k},
\end{align*}
with $\bar{Y}_k = m^{-1} \sum_{j=2}^{m+1} Y_k^{(j)}$,
$a_k = \tzeta_k^{-1} m(\tzeta_k+1/n)^{-1} c_k^{-1}$, 
$b_k = \big\{\zeta_k^{-1}+m(\tzeta_k+1/n)^{-1}\big\} c_k^{-1}$ and  
$c_k \coloneqq \zeta_k^{-1}+\tzeta_k^{-1}+m(\tzeta_k+1/n)^{-1}$. Then as before, for any given function $g$, we denote by $\pi_g^{(1)}$ the distribution of $f^{(1)}|g$ (which depends on the true $\alpha$ and $\talpha$ values).
We also omit the index $(1)$ from the notation $f^{(1)}$ for convenience and consider sample size $m+1$ instead of $m$, as in the preceding sections.
Then by the bias-variance decomposition
\begin{align*}
    \E_{g}(\hat{f}_k-f_k)^2 
    = \E^{\pi_g^{(1)}} \E_{f,g}(\hat{f}_k-f_k)^2 
    &= \E^{\pi_g^{(1)}} \Big[ \big(\E_{f,g}\hat{f}_k-f_k\big)^2  + \E_{f,g}\big(\E_{f,g}\hat{f}_k-\hat{f}_k\big)^2  \Big]
\end{align*}
where $\E_{f,g}\big(\E_{f,g}\hat{f}_k-\hat{f}_k\big)^2 = (n+\tzeta_k^{-1}b_k)^{-2} \big[n + a_k^2m^{-1}(\tlambda_k+1/n)\big]$ not depending on either $f$ or $g$, and 
\begin{align*}
	(n+\tzeta_k^{-1}b_k)^2 \E^{\pi_g^{(1)}}\big[\E_{f,g}\hat{f}_k-f_k\big]^2
	&= \E^{\pi_g^{(1)}}\Big[\big(\tzeta_k^{-1}b_k\big)^2f_k^2 + a_k^2 g_k^2 - 2\tzeta_k^{-1}b_k a_k f_kg_k \Big] \\
	&= \big(\tzeta_k^{-1}b_k - a_k\big)^2 g_k^2 + \tzeta_k^{-2}b_k^2\tlambda_k.
\end{align*}
Then, similarly to the well-posed case, $\E_{g}(\hat{f}_k-f_k)^2 \asymp \gamma_k^2 g_k^2 + \delta_k$, where $\gamma_k \coloneqq [n \tzeta_k + 1 + n(m+1)\zeta_k]^{-1}$
and 
\begin{align*}
    \delta_k &\coloneqq \frac{n + a_k^2m^{-1}(\tlambda_k+1/n) + \tzeta_k^{-2}b_k^2\tlambda_k}{(n+\tzeta_k^{-1}b_k)^2} \\
    &= \frac{n\{\zeta_k^{-1}+\tzeta_k^{-1}+\frac{m}{\tzeta_k+1/n} \}^2 + \tzeta_k^{-2}\frac{m(\tlambda_k+1/n)}{(\tzeta_k+1/n)^2} + \tzeta_k^{-2}\tlambda_k\big\{\zeta_k^{-1}+\frac{m}{\tzeta_k+1/n} \big\}^2}{\Big[n\zeta_k^{-1}+\tzeta_k^{-1}\zeta_k^{-1}+(m+1)n\tzeta_k^{-1}\Big]^2}.
\end{align*}

\subsection{If $\beta=\alpha$} 
In this case $\zeta_k=\lambda_k$.
Let us distinguish two cases according to the value of $\alpha$. If $\alpha \leq \tbeta+(1/2+\tbeta)\delta$, then $(nm)^{1/(1+2\alpha)}\geq n^{1/(1+2\tilde\beta)}$, hence the term $(m+1)n\tzeta_k^{-1}$ dominates the denominator in $\delta_k$ for $k\leq n^{1/(1+2\tilde{\beta})}$. Therefore,
\begin{align*}
    \sum_{k=1}^{\infty} \delta_k 
    \gtrsim \sum_{k=1}^{n^{1/(1+2\tbeta)}} \frac{(n+\tzeta_k^{-2}\tlambda_k )\frac{m^2}{(\tzeta_k+1/n)^{2}}}{\big[(m+1)n\tzeta_k^{-1}\big]^2}\asymp \sum_{k=1}^{n^{1/(1+2\tbeta)}}  \frac{n+k^{1+4\tbeta-2\talpha}}{n^2}
    \gtrsim n^{-\frac{2(\tbeta\wedge\talpha)}{1+2\tbeta}}.
\end{align*}
Furthermore, if $g(\cdot) = g_{\ell}\psi_{\ell}(\cdot)$ with $g_{\ell}^2=R^2 \ell^{-2\alpha}$ and $\ell=\ell_{n,m}=(nm)^{1/(1+2\alpha)}$, then $g \in S(\alpha,R)$ and $\gamma_\ell \asymp 1$, which gives us
\begin{align*}
    \E_g \|\hfo_{\alpha,\tilde\beta}-\fo\|_2^2 \gtrsim \gamma_{\ell}^2g_{\ell}^2 \asymp R^2 \ell^{-2\alpha}=R^2 (nm)^{-2\alpha/(1+2\alpha)}.
\end{align*}

Next we consider the case $\alpha \geq \tbeta+(1/2+\tbeta)\delta$, where for $k\geq n^{1/(1+2\tbeta)}$ the term $\tzeta_k^{-1}\zeta_k^{-1}$ dominates the sum in the denominator of $\delta_k$. Then
\begin{align*}
    \sum_{k=1}^{\infty} \delta_k 
    \gtrsim \sum_{k=n^{1/(1+2\tbeta)}}^{\infty} \frac{n\zeta_k^{-2}+ \tzeta_k^{-2}\tlambda_k\zeta_k^{-2}}{\tzeta_k^{-2}\zeta_k^{-2}} 
    \gtrsim  \sum_{k=n^{1/(1+2\tbeta)}}^{\infty} nk^{-2-4\tbeta}+k^{-1-2\talpha}
    \asymp n^{-\frac{2(\talpha\wedge\tbeta)}{1+2\tbeta}}. 
\end{align*}
Furthermore, note that $(nm)^{-2\alpha/(1+2\alpha)}\leq (nm)^{-2\tbeta/(1+2\tbeta)}\leq  n^{-2\tbeta/(1+2\tbeta)}$, hence concluding the proof of the statement.

\subsection{If $\tbeta=\talpha$} 
In this case  $\tlambda_k \asymp \tzeta_k$ and similarly to above we distinguish two cases according to the value of $\beta$.  If $\beta \leq \talpha+(1/2+\talpha)\delta$,  then $(nm)^{1/(1+2\beta)}\geq n^{1/(1+2\tilde\alpha)}$, hence for $k\leq n^{1/(1+2\talpha)}$ the term $(m+1)n\tzeta_k^{-1}$ dominates the denominator in $\delta_k$, implying
 \begin{align*}
    \sum_{k=1}^{\infty} \delta_k 
    \gtrsim \sum_{k=1}^{n^{1/(1+2\talpha)}} \frac{n\frac{m^2}{(\tzeta_k+1/n)^{2}}}{\big[(m+1)n\tzeta_k^{-1}\big]^2}\asymp \sum_{k=1}^{n^{1/(1+2\talpha)}}  n^{-1}\asymp n^{-\frac{2\talpha}{1+2\talpha}}.
\end{align*}
Furthermore, for $k\geq (nm)^{1/(1+2\beta)}$ the term $\tzeta_k^{-1}\zeta_k^{-1}$ denominates the sum, hence
\begin{align*}
    \sum_{k=1}^{\infty} \delta_k 
    \gtrsim \sum_{k=(nm)^{1/(1+2\beta)}}^{\infty} \frac{ \tzeta_k^{-2}\frac{m(\tlambda_k+1/n)}{(\tzeta_k+1/n)^2} }{\big[\tzeta_k^{-1}\zeta_k^{-1}\big]^2}\asymp \sum_{k=(nm)^{1/(1+2\beta)}}^{\infty} n \zeta_k^{2}  \asymp (nm)^{-\frac{2\beta}{1+2\beta}}.
\end{align*}
Finally, if $g(\cdot) = g_{\ell}\psi_{\ell}(\cdot)$ with $g_{\ell}^2=R^2 \ell^{-2\alpha}$ and $\ell=\ell_{n,m}=(nm)^{1/(1+2\beta)}$, then $g \in S(\alpha,R)$ and $\gamma_\ell \asymp 1$, which gives us
\begin{align*}
    \E_g \|\hfo_{\beta,\tilde\alpha}-\fo\|_2^2\gtrsim\gamma_{\ell}^2g_{\ell}^2\asymp R^2 \ell^{-2\alpha}=R^2 (nm)^{-2\alpha/(1+2\beta)}.
\end{align*}

In the remaining case $\beta \geq \talpha+(1/2+\talpha)\delta$, for $k\geq n^{1/(1+2\talpha)}$ the term $\tzeta_k^{-1}\zeta_k^{-1}$ dominates the sum in the denominator of $\delta_k$. Then
\begin{align*}
    \sum_{k=1}^{\infty} \delta_k 
    \gtrsim \sum_{k=n^{1/(1+2\talpha)}}^{\infty} \frac{n\zeta_k^{-2}}{\tzeta_k^{-2}\zeta_k^{-2}} 
    \gtrsim  \sum_{k=n^{1/(1+2\talpha)}}^{\infty} nk^{-2-4\talpha}
    \asymp n^{-\frac{2\talpha}{1+2\talpha}}. 
\end{align*}
Furthermore, if $g(\cdot) = g_{\ell}\psi_{\ell}(\cdot)$ with $g_{\ell}^2=R^2 \ell^{-2\alpha}$ and $\ell=\ell_{n}=n^{1/(1+2\talpha)}\geq (nm)^{1/(1+2\beta)}$, then $g \in S(\alpha,R)$ and $\gamma_\ell \asymp 1$, which gives us
\begin{align*}
    \E_g \|\hfo_{\beta,\tilde\alpha}-\fo\|_2^2\gtrsim\gamma_{\ell}^2g_{\ell}^2\asymp R^2 \ell^{-2\alpha}=R^2 (nm)^{-2\alpha/(1+2\beta)},
\end{align*}
finishing the proof of the statement.
\end{proof}

\section{Proof of posterior mean portion of Theorem~3: Posterior-mean UB for $g$}
\label{sec:proof_ubpostmean_g}

\begin{proof}
Letting $\hg$ be the posterior mean under prior $\pi_\alpha = \pi_{\Lambda}$ and recalling the notation $V_g$ and $B_g$ from Section~\ref{sec:proofbayesriskg}, we will show
\begin{align*}
    \sup_{g \in \cS(\alpha, R)} V_g(\hg) + B_g(\hg) \lesssim m^{-1} + (nm)^{-\frac{2\alpha}{1+2\alpha}}.
\end{align*}
Since \eqref{eq:gwnbayesriskgvar} implies the term $V_g(\hg) \asymp m^{-1} + (nm)^{-2\alpha/(1+2\alpha)}$ and does not depend on $g$, it suffices to show $\sup_{g \in \cS(\alpha, R)} B_g(\hg) \lesssim m^{-1} + (nm)^{-2\alpha/(1+2\alpha)}$, where by Section~\ref{sec:proofbayesriskg}  
\begin{align*}
    B_g(\hg)
    \asymp m^{-2} \sum_{k=1}^{t_1} \Big(k^{4\alpha-4\tilde\alpha} + n^{-2} k^{2+4\alpha} \Big)g_k^2
    + \sum_{k=t_1}^{t_2} \Big(1+n^{-2} k^{2+4\tilde\alpha}\Big) g_k^2
    + \sum_{k=t_2}^{\infty} \Big(n^2 k^{-2-4\tilde\alpha} + 1\Big) g_k^2
\end{align*}
with $t_1$ and $t_2$ defined in Section~\ref{sec:proofbayesriskf}. 
If we express the preceding panel as $B_g(\hg) \asymp \sum_{k=1}^{t_1} a_k g_k^2 + \sum_{k=t_1}^{t_2} b_k g_k^2 + \sum_{k=t_2}^{\infty} c_k g_k^2$, 
we will show below that 
\begin{align*}
    a_k k^{-2\alpha} &= m^{-2} \Big(k^{2\alpha-4\tilde\alpha} + n^{-2} k^{2+2\alpha} \Big) \qquad \text{for all }k \in [1, t_1], \\
    b_k k^{-2\alpha} &= k^{-2\alpha}+n^{-2} k^{2+4\tilde\alpha-2\alpha} \qquad \text{for all }k \in (t_1, t_2], \\
    c_k k^{-2\alpha} &= n^2 k^{-2-4\tilde\alpha-2\alpha} + k^{-2\alpha} \qquad \text{for all }k \geq t_2
\end{align*}
are all bounded above by $2 \{m^{-1} \vee (nm)^{-2\alpha/(1+2\alpha)}\}$.
This will allow us to conclude $B_g(\hg) \lesssim \{m^{-1} \vee (nm)^{-2\alpha/(1+2\alpha)}\} \sum_{k=1}^{\infty} g_k^2 k^{2\alpha}$, where $\sup_{g \in \cS(\alpha,R)} \sum_{k=1}^{\infty} g_k^2 k^{2\alpha} \leq R^2$ by definition of the Sobolev ball. 

We distinguish two cases. For $\alpha \leq \talpha+(1/2+\talpha)\delta$ the boundaries are $t_1=t_2=(nm)^{1/(1+2\alpha)} \geq n^{1/(1+2\talpha)}$, in which case 
\begin{align*}
    a_k k^{-2\alpha} &\leq m^{-2} \left(1 \vee (nm)^{(2\alpha-4\tilde\alpha)/(1+2\alpha)}\right) + (nm)^{-2\alpha/(1+2\alpha)} \qquad \text{for all }k \in [1, t_1], \\
    c_k k^{-2\alpha} &\leq 2 (nm)^{-2\alpha/(1+2\alpha)} \qquad \text{for all }k \geq t_2.
\end{align*}
If $\alpha>2\talpha$, then $m^{-2} k^{2\alpha-4\tilde\alpha} \leq (nm)^{-2\alpha/(1+2\alpha)}$ for all $k \leq t_1$.
For the remaining case $\alpha > \talpha+(1/2+\talpha)\delta$ the boundaries are $t_1=m^{1/(2\alpha-2\talpha)}$ and $t_2= n^{1/(1+2\talpha)}$, in which case we have $n^{-2} < m^{-(1+2\tilde\alpha)/(\alpha-\tilde\alpha)}$ and thus
\begin{align*}
    a_k k^{-2\alpha} &\leq m^{-2} \left(1 \vee m^{1-\frac{\tilde\alpha}{\alpha-\talpha}}\right) + m^{-2} m^{1-\frac{\tilde\alpha}{\alpha-\talpha}} \leq 2 m^{-\frac{\alpha}{\alpha-\talpha}} \qquad \forall k \in [1, t_1], \\
    b_k k^{-2\alpha} &\leq m^{-\frac{\alpha}{\alpha-\talpha}} + n^{-2} \left(m^{\frac{1+2\tilde\alpha-\alpha}{\alpha-\talpha}} \vee n^{\frac{2+4\tilde\alpha-2\alpha}{1+2\talpha}}\right) \leq m^{-\frac{\alpha}{\alpha-\talpha}} + \left(m^{-\frac{\alpha}{\alpha-\talpha}} \vee n^{-\frac{2\alpha}{1+2\talpha}}\right) \; \forall k \in (t_1, t_2], \\
    c_k k^{-2\alpha} &\leq 2n^{-\frac{2\alpha}{1+2\talpha}}  \qquad \forall k \geq t_2,
\end{align*}
where the inequalities $n^{-2\alpha/(1+2\talpha)} < (nm)^{-2\alpha/(1+2\alpha)}$ and $m^{-\alpha/(\alpha-\talpha)} < m^{-1}$ hold.
\end{proof}

\section{Posterior-mean LB for misspecified $g$}
\label{sec:proof_lbpostmeanmisspecified_g}

\begin{theorem}
    For $\beta>0$, let $\pi_{\beta}$ be the mean-zero Gaussian process distribution whose covariance function $\Lambda_{\beta}$ shares the same eigenfunctions as the covariance function $\tilde\Lambda$ and has eigenvalues $\zeta_k \asymp k^{-1-2\beta}$ for all $k\in\mathbb{N}$. 
    If $\tilde{g}_{\beta}$ is the posterior mean associated with the prior $\pi_{\beta}$, then under model \eqref{eq:gwnmodelfreq} we have
    \begin{align*}
        \sup_{g \in \cS(\alpha,R)} \E_g \|\tilde{g}_{\beta}-g\|_2^2
        \gtrsim 
        m^{-1} + (nm)^{-2(\beta\wedge\alpha)/(1+2\beta)}.
    \end{align*}
\end{theorem}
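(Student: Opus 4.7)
The plan is to mirror the structure of the misspecified $\fo$ proof in Section~\ref{sec:proof_lbpostmeanmisspecified_f} together with the well-specified $g$ analysis in Section~\ref{sec:proofbayesriskg}. Since $\tilde{Y}_k = m^{-1}\sum_{j=1}^m Y_k^{(j)} \sim \text{N}(g_k, m^{-1}(\tilde\lambda_k+1/n))$ is sufficient for $g_k$, the misspecified posterior mean has coordinates $\hat g_k = (1+\zeta_k^{-1}m^{-1}(\tilde\lambda_k+1/n))^{-1}\tilde Y_k$. A direct bias-variance decomposition gives $\E_g(\hat g_k-g_k)^2 = \gamma_k^2 g_k^2 + \delta_k$, where
\begin{align*}
\gamma_k = \frac{\zeta_k^{-1}m^{-1}(\tilde\lambda_k+1/n)}{1+\zeta_k^{-1}m^{-1}(\tilde\lambda_k+1/n)},\qquad \delta_k = \frac{m^{-1}(\tilde\lambda_k+1/n)}{\bigl(1+\zeta_k^{-1}m^{-1}(\tilde\lambda_k+1/n)\bigr)^{2}}.
\end{align*}
Since $\delta_k$ does not depend on $g$, it suffices to lower bound $\sum_k \delta_k$ and, separately, exhibit a single worst-case $g \in \cS(\alpha,R)$ for the bias piece.

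For the $m^{-1}$ contribution, I would restrict to the low-frequency range $k \leq K_0 \asymp \{n^{1/(1+2\tilde\alpha)}\}\wedge\{m^{1/(2(\beta-\tilde\alpha)_+)}\}$, on which $\tilde\lambda_k+1/n\asymp \tilde\lambda_k$ and the denominator $1+\zeta_k^{-1}m^{-1}\tilde\lambda_k \asymp 1$ (using $\zeta_k\asymp k^{-1-2\beta}$ and $\tilde\lambda_k\asymp k^{-1-2\tilde\alpha}$). Then $\delta_k\asymp \tilde\lambda_k/m$, and summing over a range that always includes at least a constant number of indices yields $\sum_{k\leq K_0}\delta_k\asymp m^{-1}\sum_{k\leq K_0}\tilde\lambda_k\asymp m^{-1}$.

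For the $(nm)^{-2(\beta\wedge\alpha)/(1+2\beta)}$ contribution I would split into two cases. When $\beta\leq \alpha$, I would use the variance again, but this time on the intermediate range $n^{1/(1+2\tilde\alpha)}\leq k\leq (nm)^{1/(1+2\beta)}$, on which $\tilde\lambda_k+1/n\asymp 1/n$ and $\zeta_k^{-1}m^{-1}/n \lesssim 1$, so that $\delta_k\asymp 1/(nm)$. Summing over this range of length $\asymp (nm)^{1/(1+2\beta)}$ yields $(nm)^{-2\beta/(1+2\beta)}$, as desired. When $\beta > \alpha$, I would instead use the bias: set $g(\cdot)=g_\ell \psi_\ell(\cdot)$ with $\ell\asymp (nm)^{1/(1+2\beta)}$ and $g_\ell^2 = R^2\ell^{-2\alpha}$, which lies in $\cS(\alpha,R)$. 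At this $\ell$, $\zeta_\ell^{-1}m^{-1}(\tilde\lambda_\ell+1/n)\asymp 1+n\tilde\lambda_\ell \gtrsim 1$, so $\gamma_\ell \asymp 1$ and $\gamma_\ell^2 g_\ell^2 \asymp R^2(nm)^{-2\alpha/(1+2\beta)}$. Combining the two cases gives the claimed $(nm)^{-2(\beta\wedge\alpha)/(1+2\beta)}$ term, and adding to the $m^{-1}$ lower bound completes the argument.

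The main bookkeeping obstacle is tracking which of the three competing terms in the denominator of $\delta_k$ dominates across the various subranges of $k$, especially the interaction between the misspecification regime $\beta$ vs.\ $\tilde\alpha$ (which governs when $\zeta_k^{-1}m^{-1}\tilde\lambda_k$ is small) and the cutoff $n^{1/(1+2\tilde\alpha)}$ separating the $\tilde\lambda_k$- and $1/n$-dominated regimes. Once these subranges are chosen carefully enough that the denominator in $\delta_k$ stays $\asymp 1$ on each, the remaining calculations reduce to elementary polynomial sums analogous to those already carried out in Sections~\ref{sec:proof_lbpostmeanmisspecified_f} and~\ref{sec:proofbayesriskg}.
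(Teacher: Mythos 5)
Your proposal follows the paper's high-level plan --- variance lower bound handles $\beta\le\alpha$, a single-mode bias construction handles $\beta>\alpha$ --- but the execution differs in two respects, one of which is an improvement and one of which needs a small patch.

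On the bias side your argument is cleaner than the paper's. The paper plants the mass at $\ell = t_1$, where $t_1$ is either $(nm)^{1/(1+2\beta)}$ or $m^{1/(2(\beta-\tilde\alpha))}$ depending on whether $\beta\le\tilde\alpha + (1/2+\tilde\alpha)\delta$, and checks that $d_{t_1}t_1^{-2\alpha}\ge (nm)^{-2\alpha/(1+2\beta)}$ separately in the two cases. Your choice $\ell\asymp (nm)^{1/(1+2\beta)}$ works uniformly because $\zeta_\ell^{-1}m^{-1}(\tilde\lambda_\ell+1/n)\asymp n\tilde\lambda_\ell+1\ge 1$, hence $\gamma_\ell\asymp 1$ regardless of the position of $\ell$ relative to $n^{1/(1+2\tilde\alpha)}$. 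That genuinely removes a case split.

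On the variance side the paper simply invokes the exact formula $V_g(\tilde g_\beta)\asymp m^{-1}+(nm)^{-2\beta/(1+2\beta)}$ obtained in the proof of Theorem~1 with $\alpha$ replaced by $\beta$, whereas you re-derive a lower bound by restricting $\sum_k\delta_k$ to two sub-ranges on which the denominator in $\delta_k$ is $\asymp 1$. This is sound in spirit, but the claim that the intermediate range $[n^{1/(1+2\tilde\alpha)},\,(nm)^{1/(1+2\beta)}]$ has length $\asymp (nm)^{1/(1+2\beta)}$ fails precisely when $\beta>\tilde\alpha$ and $m$ is small: if $m\lesssim n^{2(\beta-\tilde\alpha)/(1+2\tilde\alpha)}$ the interval is empty (or of negligible length) and contributes nothing. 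You should add the one-line observation that in exactly this regime $m\le n^{2(\beta-\tilde\alpha)/(1+2\tilde\alpha)}\le n^{2\beta}$ and hence $m^{-1}\gtrsim (nm)^{-2\beta/(1+2\beta)}$, so the $m^{-1}$ lower bound you already obtained from the low-frequency range carries the claimed rate by itself. With that remark the argument is complete.
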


\begin{proof}
Recalling the notation $V_g$ and $B_g$ from Section~\ref{sec:proofbayesriskg}, we will show
\begin{align*}
    \sup_{g \in \cS(\alpha, R)} V_g(\tilde{g}_{\beta}) + B_g(\tilde{g}_{\beta}) \gtrsim m^{-1} + (nm)^{-2(\beta\wedge\alpha)/(1+2\beta)}. 
\end{align*}

From \eqref{eq:gwnbayesriskgvar} we have $V_g(\tilde{g}_{\beta}) \asymp m^{-1} + (nm)^{-2\beta/(1+2\beta)}$ which does not depend on $g$.
If $\beta \leq \alpha$, this result implies the preceding panel holds. 
Now it remains to show for $\beta > \alpha$ that 
\begin{align*}
    \sup_{g \in \cS(\alpha, R)} B_g(\tilde{g}_{\beta}) \gtrsim (nm)^{-2\alpha/(1+2\beta)}.
\end{align*}

By \eqref{eq:gwnbayesriskgbias} we have
\begin{align*} 
    B_g(\tilde{g}_{\beta})
    = \sum_{k=1}^{\infty} \left(\frac{n\tlambda_k + 1}{nm\zeta_k + n\tlambda_k + 1}\right)^2 g_k^2.
\end{align*}
We follow the same strategy as in Section~\ref{sec:proofbayesriskg} of partitioning $\mathbb{N}$ into three intervals:
\begin{align*}
    B_g(\tilde{g}_{\beta})
    &\asymp m^{-2} \sum_{k=1}^{t_1} \Big(k^{4\beta-4\tilde\alpha} + n^{-2} k^{2+4\beta} \Big)g_k^2
    + \sum_{k=t_1}^{t_2} \Big(1+n^{-2} k^{2+4\tilde\alpha}\Big) g_k^2
    + \sum_{k=t_2}^{\infty} \Big(n^2 k^{-2-4\tilde\alpha} + 1\Big) g_k^2 \\
    &> m^{-2} \Big(t_1^{4\beta-4\tilde\alpha} + n^{-2} t_1^{2+4\beta} \Big)g_{t_1}^2.
\end{align*}
Denote this lower bound as $d_{t_1} g_{t_1}^2$.
Because the preceding inequality implies \[\sup_{g \in \cS(\alpha, R)} B_g(\tilde{g}_{\beta}) \geq \sup_{g \in \cS(\alpha, R)} d_{t_1} g_{t_1}^2 = d_{t_1} t_1^{-2\alpha} R^2,\] it suffices to show $d_{t_1} t_1^{-2\alpha} \geq (nm)^{-2\alpha/(1+2\beta)}$.
We distinguish two cases. 
For $\beta \leq \talpha+(1/2+\talpha)\delta$ the boundary is $t_1=(nm)^{1/(1+2\beta)}$, in which case 
\begin{align*}
    d_{t_1} t_1^{-2\alpha} \geq (nm)^{-2} t_1^{2+4\beta-2\alpha} = (nm)^{-\frac{2\alpha}{1+2\beta}}. 
\end{align*}
For the remaining case $\beta > \talpha+(1/2+\talpha)\delta$ the boundary is $t_1=m^{\frac{1}{2(\beta-\talpha)}}$, in which case
\begin{align*}
    d_{t_1} t_1^{-2\alpha} \geq m^{-2} t_1^{4\beta-4\tilde\alpha-2\alpha} = m^{-\frac{\alpha}{\beta-\tilde\alpha}} > (nm)^{-\frac{2\alpha}{1+2\beta}}. 
\end{align*}
Hence $d_{t_1} t_1^{-2\alpha} \geq (nm)^{-2\alpha/(1+2\beta)}$ whenever $\beta > \alpha$, which finishes the proof. 
\end{proof}

\section{Proof of threshold estimator portion of Theorem~7: Threshold estimator UB for $\fo$}
\label{sec:proof_ubthresh_f}

\begin{proof}
We consider the estimator $\hat{f}^{(1)}_{k_1,k_2}(\cdot) = \sum_{k=1}^{\infty} \hat{f}_k^{(1)} \psi_k(\cdot)$ where
\begin{equation*}
\hat{f}_k^{(1)} =
\begin{cases}
Y_k^{(1)} & \text{if }k\leq k_1, \\
\hat{g}_k & \text{if }k_1< k\leq k_2, \\
0 & \textit{if }k_2<k.
\end{cases}
\end{equation*}
For notational convenience we omit the super index from $f^{(1)}$, i.e., we write $f=f^{(1)}$, and consider sample size $m+1$. 
Recall from Section~\ref{sec:proofbayesriskg} that $\hg_k|g_k\sim\text{N}\big(g_k, \tlambda_k/m+1/(nm)\big)$.
We introduce the following notations
\begin{align*}
B_g'(k_2) \coloneqq \sum_{\ell> k_2} \big(g_\ell^2+\tilde{\lambda}_\ell\big) \quad \text{and} \quad
V_g'(k_1,k_2) \coloneqq \frac{k_1}{n}+\sum_{k=k_1+1}^{k_2}\left\{\tlambda_k\Big(1+\frac{1}{m}\Big)+\frac{1}{nm}\right\}.
\end{align*}
First note
\begin{align}
\E_g\|f-\hf_{k_1,k_2}\|_2^2 &= \E^{\pi_g^{(1)}} \E_{f,g}\|f-\hf_{k_1,k_2}\|_2^2\nonumber \\
&= \E^{\pi_g^{(1)}} \|\E_{f,g}\hf_{k_1,k_2}-f\|_2^2 + \E^{\pi_g^{(1)}}\E_{f,g}\|\E_{f,g}\hf_{k_1,k_2}-\hf_{k_1,k_2}\|_2^2 \nonumber \\
&= \sum_{k=k_1+1}^{k_2} \E^{\pi_g^{(1)}} (f_k-g_k)^2 +  \sum_{k=k_2+1}^{\infty}\E^{\pi_g^{(1)}} f_k^2 + \sum_{k=1}^{k_1}\frac{1}{n} + \sum_{k=k_1+1}^{k_2}\E_g (\hg_k-g_k)^2 \nonumber \\
&= \sum_{k=k_1+1}^{k_2} \tlambda_k+\sum_{k=k_2+1}^{\infty} \big(g_k^2+\tlambda_k\big)+\frac{k_1}{n}+\sum_{k=k_1+1}^{k_2}\left(\frac{\tlambda_k}{m}+\frac{1}{nm}\right)\label{eq:misef}\\
&= B_g'(k_2)+V_g'(k_1,k_2) \nonumber.
\end{align}

By combining the preceding panels and using $\tlambda_k\asymp k^{-1-2\talpha}$, for any $g \in \cS(\alpha,R)$ we get
\begin{align*}
    B_g'(k_2) \leq c_R k_2^{-2(\alpha\wedge\talpha)} \quad \text{and} \quad V_g'(k_1,k_2) \lesssim \frac{k_1}{n} + \frac{\max\{0,k_2-k_1\}}{nm} + k_1^{-2\talpha},
\end{align*}
and thus
\begin{align*}
    \sup_{g \in \cS(\alpha,R)} \E_g \|\hf_{k_1,k_2}-f\|_2^2
    &\lesssim \frac{k_1}{n} + \frac{\max\{0,k_2-k_1\}}{nm}  + k_1^{-2\talpha} +  k_2^{-2\alpha}.
\end{align*}
Plugging in $k_1 \asymp n^{1/(1+2\talpha)}$ and 
\begin{align*}
    k_2 \asymp (k_1 \vee (nm)^{1/(1+2\alpha)}) \asymp 
    \begin{cases}
        (nm)^{1/(1+2\alpha)} & \text{if } \alpha < \talpha + \delta(1/2+\talpha), \\
        n^{1/(1+2\talpha)} & \text{if } \alpha \geq \talpha + \delta(1/2+\talpha),
    \end{cases} 
\end{align*}
into the preceding display we arrive at the stated upper bound $(nm)^{-\frac{2\alpha}{1+2\alpha}} + n^{-\frac{2\talpha}{1+2\talpha}}$.
\end{proof}

\section{Threshold estimator LB for misspecified $\fo$}
\label{sec:proof_lbthreshmisspecified_f}

\begin{theorem} 
    Assuming model \eqref{eq:gwnmodelfreq}, with a slight abuse of notation let $\tf_{\beta,\tilde\beta}$ denote the estimator \eqref{eq:estf} using the thresholds $k_1 \asymp n^{1/(1+2\tilde\beta)}$ and $k_2 \asymp (k_1 \vee (nm)^{1/(1+2\beta)})$. 
    If $\beta=\alpha$, then for any $\tilde\beta>0$ we have 
    \begin{align*}
        \sup_{g \in \cS(\alpha,R)} \E_g \|\hfo_{\alpha,\tilde\beta}-\fo\|_2^2
        & 
        \asymp n^{-\frac{2(\tilde\beta\wedge\talpha)}{1+2\tilde\beta}} + (nm)^{-\frac{2\alpha}{1+2\alpha}}.
    \end{align*}
    If $\tilde\beta = \tilde\alpha$, then for any $\beta>0$ we have 
    \begin{align*}
        \sup_{g \in \cS(\alpha,R)} \E_g \|\hfo_{\beta,\tilde\alpha}-\fo\|_2^2 &\asymp 
        \begin{cases}
            n^{-\frac{2\talpha}{1+2\talpha}} + (nm)^{-\frac{2(\beta\wedge\alpha)}{1+2\beta}} & \text{for } \beta \leq \talpha + \delta(1/2+\talpha), \\
            n^{-\frac{2(\alpha\wedge\talpha)}{1+2\talpha}}  & \text{for } \beta > \talpha + \delta(1/2+\talpha).
        \end{cases} 
    \end{align*}
\end{theorem}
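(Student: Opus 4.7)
The plan is to leverage the exact decomposition
\begin{equation*}
\E_g\|\fo-\hf_{k_1,k_2}\|_2^2 = B_g'(k_2)+V_g'(k_1,k_2)
= \sum_{k>k_2}\bigl(g_k^2+\tlambda_k\bigr) + \frac{k_1}{n}+\sum_{k=k_1+1}^{k_2}\Bigl(\tlambda_k\bigl(1+\tfrac{1}{m}\bigr)+\tfrac{1}{nm}\Bigr)
\end{equation*}
derived in Section~S.9. The upper bound on the stated rates follows from directly substituting $k_1\asymp n^{1/(1+2\tilde\beta)}$ and $k_2\asymp k_1\vee (nm)^{1/(1+2\beta)}$ into the general $\sup_{g\in\cS(\alpha,R)}$ bound $k_1/n+(k_2-k_1)_+/(nm)+k_1^{-2\talpha}+k_2^{-2\alpha}$ already produced in Section~S.9 (and then checking in each regime that the misspecified leftover terms are absorbed), so the substance of the proof is the matching lower bound.

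For the lower bound I would exploit two distinct sources of error. First, $V_g'(k_1,k_2)$ does not depend on $g$ and yields the unconditional lower bound $k_1/n + (k_2-k_1)_+/(nm)$; moreover, the part $\sum_{k>k_2}\tlambda_k$ of $B_g'(k_2)$ combined with $\sum_{k=k_1+1}^{k_2}\tlambda_k/m$ contributes $\gtrsim \sum_{k>k_1}\tlambda_k \asymp k_1^{-2\talpha}$ regardless of $g$. Second, I would make the $g$-dependent part $\sum_{k>k_2} g_k^2$ of $B_g'$ large by choosing the adversarial single-mode function $g(\cdot)=g_{k^\star}\psi_{k^\star}(\cdot)$ with $k^\star\asymp k_2$ and $g_{k^\star}^2=R^2 (k^\star)^{-2\alpha}$; this lies in $\cS(\alpha,R)$ and gives $\sup_{g\in\cS(\alpha,R)}\sum_{k>k_2}g_k^2\gtrsim k_2^{-2\alpha}$. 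Combining, $\sup_{g\in\cS(\alpha,R)}\E_g\|\fo-\hf_{k_1,k_2}\|_2^2\gtrsim k_1/n+k_1^{-2\talpha}+(k_2-k_1)_+/(nm)+k_2^{-2\alpha}$.

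It then remains to plug in the two misspecified regimes. When $\beta=\alpha$ we have $k_1/n\asymp n^{-2\tilde\beta/(1+2\tilde\beta)}$ and $k_1^{-2\talpha}\asymp n^{-2\talpha/(1+2\tilde\beta)}$, which sum to $n^{-2(\tilde\beta\wedge\talpha)/(1+2\tilde\beta)}$; if $\alpha<\tilde\beta+(1/2+\tilde\beta)\delta$ then $k_2\asymp(nm)^{1/(1+2\alpha)}$ and $k_2^{-2\alpha}\asymp(nm)^{-2\alpha/(1+2\alpha)}$, yielding exactly the stated rate, while if $\alpha\geq\tilde\beta+(1/2+\tilde\beta)\delta$ then $k_2=k_1$, and an elementary comparison of exponents (using $\alpha\geq\tilde\beta$) shows $(nm)^{-2\alpha/(1+2\alpha)}\leq n^{-2\tilde\beta/(1+2\tilde\beta)}\leq n^{-2(\tilde\beta\wedge\talpha)/(1+2\tilde\beta)}$, so that the stated rate collapses to $n^{-2(\tilde\beta\wedge\talpha)/(1+2\tilde\beta)}$ and is still attained by the variance alone. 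When $\tilde\beta=\talpha$, the variance gives $n^{-2\talpha/(1+2\talpha)}$; in the regime $\beta\leq\talpha+(1/2+\talpha)\delta$, $k_2\asymp(nm)^{1/(1+2\beta)}$, and the two contributions $(k_2-k_1)/(nm)\gtrsim (nm)^{-2\beta/(1+2\beta)}$ and $k_2^{-2\alpha}\gtrsim(nm)^{-2\alpha/(1+2\beta)}$ merge into $(nm)^{-2(\alpha\wedge\beta)/(1+2\beta)}$; in the regime $\beta>\talpha+(1/2+\talpha)\delta$, $k_2=k_1\asymp n^{1/(1+2\talpha)}$, and $n^{-2\talpha/(1+2\talpha)}+k_2^{-2\alpha}\asymp n^{-2(\alpha\wedge\talpha)/(1+2\talpha)}$.

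The main obstacle is not any single estimate but the careful case analysis: one must check, in each of the four regimes, that the adversarial bias $k_2^{-2\alpha}$ together with the deterministic variance pieces $k_1/n$, $k_1^{-2\talpha}$, and $(k_2-k_1)_+/(nm)$ aggregates to exactly the asymptotic rate claimed, and that the upper bound does not contain stray terms of higher order. The bookkeeping of exponents across the dichotomy $\alpha$ versus $\tilde\beta+(1/2+\tilde\beta)\delta$ (respectively $\beta$ versus $\talpha+(1/2+\talpha)\delta$) is where sign mistakes are easiest, so I would present the comparison $(nm)^{-2\alpha/(1+2\alpha)}$ versus $n^{-2\tilde\beta/(1+2\tilde\beta)}$ as a short lemma to avoid repeating the algebra.
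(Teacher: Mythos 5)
Your proposal matches the paper's Section S.10 proof in both structure and substance: both exploit the exact identity $\E_g\|f-\hf_{k_1,k_2}\|_2^2 = B_g'(k_2)+V_g'(k_1,k_2)$ from Section S.9, observe that the only $g$-dependent term is $\sum_{k>k_2}g_k^2$ whose supremum over $\cS(\alpha,R)$ is attained at a single mode near $k_2$ (giving $\asymp k_2^{-2\alpha}$), and then plug the misspecified thresholds into $k_1/n + (k_2-k_1)_+/(nm) + k_1^{-2\talpha} + k_2^{-2\alpha}$ with the same four-way case analysis across the boundary $\tilde\beta+(1/2+\tilde\beta)\delta$ (resp.\ $\talpha+(1/2+\talpha)\delta$). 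The paper folds upper and lower bounds into one asymptotic equality rather than arguing them separately, but this is a presentational difference, not a different route.
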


\begin{proof}
We use the same notations as in the previous sections. 
Then, in view of \eqref{eq:misef}, for any two thresholds $k_1 \leq k_2$ we can infer
\begin{align*}
    \sup_{g \in \cS(\alpha,R)} \E_g\|f-\hf_{k_1,k_2}\|_2^2 
    &= \frac{k_1}{n}+ \frac{k_2-k_1}{nm}+\sum_{k=k_1+1}^{\infty} \tlambda_k+\sum_{k=k_1+1}^{k_2}\frac{\tlambda_k}{m} + \sup_{g \in \cS(\alpha,R)} \sum_{k=k_2+1}^{\infty} g_k^2 \\
    &\asymp \frac{k_1}{n}+ \frac{k_2-k_1}{nm}+k_1^{-2\tilde\alpha}+ k_2^{-2\alpha}.
\end{align*}
Let us now use the thresholds $k_1 = n^{1/(1+2\tilde\beta)}$ and $k_2 = (k_1 \vee (nm)^{1/(1+2\beta)})$.

\subsection{For $\beta=\alpha$} 

If $2n^{1/(1+2\tilde\beta)} \leq (nm)^{1/(1+2\alpha)}$, then $k_2 - k_1 \asymp k_2 = (nm)^{1/(1+2\alpha)}$, and thus
\begin{align*}
    \frac{k_1}{n}+ \frac{k_2-k_1}{nm}+k_1^{-2\tilde\alpha}+ k_2^{-2\alpha} 
    \asymp n^{-\frac{2\tilde\beta}{1+2\tilde\beta}} + n^{-\frac{2\tilde\alpha}{1+2\tilde\beta}} + (nm)^{-\frac{2\alpha}{1+2\alpha}}.
\end{align*}
If $2n^{1/(1+2\tilde\beta)} > (nm)^{1/(1+2\alpha)}$ (i.e., if $\alpha > \tilde\beta + (1/2+\tilde\beta)\delta$), then $k_2\asymp k_1 = n^{1/(1+2\tilde\beta)}$ and thus
\begin{align*}
    \frac{k_1}{n} + \frac{k_2-k_1}{nm} + k_1^{-2\tilde\alpha}+ k_2^{-2\alpha} 
    &\asymp n^{-\frac{2\tilde\beta}{1+2\tilde\beta}} + n^{-\frac{2\talpha}{1+2\tilde\beta}}
\end{align*}
in view of the inequalities $n^{-\frac{2\tilde\beta}{1+2\tilde\beta}} > (nm)^{-\frac{2\tilde\beta}{1+2\tilde\beta}} > (nm)^{-\frac{2\alpha}{1+2\alpha}}$.

\subsection{For $\tilde\beta=\tilde\alpha$} 

If $2n^{1/(1+2\talpha)} \leq (nm)^{1/(1+2\beta)}$, then $k_2 - k_1 \asymp k_2 = (nm)^{1/(1+2\beta)}$, and thus
\begin{align*}
    \frac{k_1}{n}+ \frac{k_2-k_1}{nm}+k_1^{-2\tilde\alpha}+ k_2^{-2\alpha} 
    \asymp n^{-\frac{2\talpha}{1+2\talpha}} + (nm)^{-\frac{2\beta}{1+2\beta}} + (nm)^{-\frac{2\alpha}{1+2\beta}}.
\end{align*}
If $2n^{1/(1+2\talpha)} > (nm)^{1/(1+2\beta)}$ (i.e., if $\beta > \tilde\alpha + (1/2+\tilde\alpha)\delta$), then $k_2\asymp k_1 = n^{1/(1+2\tilde\alpha)}$ and thus
\begin{align*}
    \frac{k_1}{n}+ \frac{k_2-k_1}{nm}+k_1^{-2\tilde\alpha}+ k_2^{-2\alpha} 
    \asymp n^{-\frac{2\talpha}{1+2\talpha}} + n^{-\frac{2\alpha}{1+2\talpha}}.
\end{align*}
\end{proof}

\section{Proof of the threshold estimator portion of Theorem~3: Threshold estimator rate for $g$}
\label{sec:proof_ubthresh_g}

Here we prove the threshold estimator portion of Theorem~3, which is a special case of the following theorem.
\begin{theorem} 
    Let $\beta>0$. 
    If $\tilde{g}_{\beta}$ is the estimator \eqref{eq:estg} using the threshold $K \asymp (nm)^{1/(1+2\beta)}$, then under model \eqref{eq:gwnmodelfreq} we have
    \begin{align*}
        \sup_{g \in \cS(\alpha,R)} \E_g \|\tilde{g}_{\beta}-g\|_2^2
        \asymp 
        m^{-1} + (nm)^{-2(\beta\wedge\alpha)/(1+2\alpha)}.
    \end{align*}
\end{theorem}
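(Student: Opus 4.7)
The plan is to do a bias–variance decomposition of $\E_g\|\tg_\beta - g\|_2^2$ and observe that, unlike in the posterior-mean proofs, the thresholding estimator has a very clean structure: the coefficient estimators $\hg_k = m^{-1}\sum_{j=1}^m Y_k^{(j)}$ satisfy $\hg_k \mid g_k \sim \text{N}(g_k,\, (\tlambda_k + 1/n)/m)$ independently across $k$, so the decomposition separates cleanly into coordinatewise pieces and no matrix algebra is needed. I expect this to be the easiest of the four upper-bound results in the paper.

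First I would write
\begin{equation*}
\E_g\|\tg_\beta - g\|_2^2 \;=\; \underbrace{\sum_{k>K} g_k^2}_{\text{bias}^2} \;+\; \underbrace{\sum_{k=1}^K \frac{\tlambda_k + 1/n}{m}}_{\text{variance}},
\end{equation*}
note that the variance is deterministic (independent of $g$), and split it as $m^{-1}\sum_{k=1}^K \tlambda_k + K/(nm)$. Using $\tlambda_k \asymp k^{-1-2\talpha}$ with $\talpha>0$, the sum $\sum_{k=1}^K \tlambda_k$ is bounded above by the convergent series $\sum_{k\geq 1}\tlambda_k$ and below by $\tlambda_1>0$, so it is $\asymp 1$. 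Hence the variance is $\asymp m^{-1} + K/(nm)$. For the bias, the Sobolev-ball definition gives $\sum_{k>K} g_k^2 \leq K^{-2\alpha}\sum_{k>K} g_k^2 k^{2\alpha} \leq R^2 K^{-2\alpha}$ uniformly over $g\in\cS(\alpha,R)$.

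Substituting the prescribed threshold $K \asymp (nm)^{1/(1+2\beta)}$ then yields
\begin{equation*}
\sup_{g\in\cS(\alpha,R)} \E_g\|\tg_\beta - g\|_2^2 \;\lesssim\; m^{-1} + (nm)^{-\frac{2\beta}{1+2\beta}} + (nm)^{-\frac{2\alpha}{1+2\beta}} \;\asymp\; m^{-1} + (nm)^{-\frac{2(\alpha\wedge\beta)}{1+2\beta}},
\end{equation*}
giving the upper bound. For the matching lower bound, I would use two worst-case $g$'s: taking $g\equiv 0\in\cS(\alpha,R)$ isolates the deterministic variance and recovers $m^{-1}$ (plus the $K/(nm)$ piece), and taking the single-mode function $g(\cdot)=g_{K+1}\psi_{K+1}(\cdot)$ with $g_{K+1}^2 = R^2(K+1)^{-2\alpha}$ keeps $g\in\cS(\alpha,R)$ while forcing bias of order $(nm)^{-2\alpha/(1+2\beta)}$; together these give the matching lower bound up to constants.

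The only genuine subtlety, and the place I would be most careful, is the well-specified versus misspecified distinction in the exponent: when $\beta\leq\alpha$ (oversmoothing by choosing too small a threshold) the variance term $(nm)^{-2\beta/(1+2\beta)}$ dominates the bias $(nm)^{-2\alpha/(1+2\beta)}$, while when $\beta>\alpha$ the opposite holds, and the two cases unify into the $(\alpha\wedge\beta)$ exponent. Everything else is routine; I would not need any concentration inequalities or prior-tail arguments, since the variance does not depend on $g$ and the Sobolev constraint controls the tail of $g$ deterministically.
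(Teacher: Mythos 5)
Your proof is correct and follows essentially the same bias--variance decomposition as the paper's own argument in Section~S.11: the variance is $g$-independent and $\asymp m^{-1}+K/(nm)$, the worst-case squared bias over the Sobolev ball is $\asymp K^{-2\alpha}$ with the matching lower bound supplied by a single-mode $g$ at index $K+1$, and substituting $K\asymp(nm)^{1/(1+2\beta)}$ yields the rate. Note that your denominator exponent $1+2\beta$ agrees with the paper's own final display in the proof; the $1+2\alpha$ printed in the theorem statement appears to be a typographical slip (the two coincide only when $\beta=\alpha$, which is the case needed for Theorem~3), and one small terminological point is that small $\beta$ gives a \emph{large} threshold $K$, which is undersmoothing rather than oversmoothing -- though this does not affect the mathematics.
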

The threshold estimator portion of Theorem~3 is obtained by setting $\beta=\alpha$.

\begin{proof}
We consider estimators of the form \eqref{eq:estg} 
where $\hg_k = m^{-1}\sum_{j=1}^m Y^{(j)}_k$. 
Recall $\hg_k|g_k \sim \text{N}(g_k, m^{-1}(\tlambda_k+n^{-1}))$.
For any positive integer $K$ we have the following decomposition into a squared bias and variance term 
\begin{align*}
    \E_g\|\tilde{g}^K-g\|_2^2
    = B_g(K) + V_g(K)
\end{align*}
where we define
\begin{align*}
    B_g(K) \coloneqq \|\E_g \tilde{g}^K - g\|_2^2 = \sum_{k>K} g_{k}^2, \quad
    V_g(K) \coloneqq \E_g \|\E_g \tilde{g}^K - \tilde{g}^K\|_2^2 = \sum_{k=1}^{K} m^{-1} (\tlambda_k + n^{-1}).
\end{align*}
Because the eigenvalues satisfy $\tlambda_k\asymp k^{-1-2\talpha}$, we have $\sum_{k=1}^K \tlambda_k =O(1)$ and thus also
$V_g(K) \asymp m^{-1} + (nm)^{-1}K$.
From the two inequalities
\begin{align*}
    \sup_{g \in \cS(\alpha,R)} B_g(K)
    \leq  K^{-2\alpha}\sup_{g \in \cS(\alpha,R)} \sum_{k>K} g_k^2 k^{2\alpha} 
    = R^2 K^{-2\alpha}
\end{align*}
and 
\begin{align*}
    \sup_{g \in \cS(\alpha,R)} B_g(K) \geq \sup_{g \in \cS(\alpha,R)} g_{K+1}^2 = R^2 (K+1)^{-2\alpha},
\end{align*}
we get 
\begin{align*}
    \sup_{g \in \cS(\alpha,R)} B_g(K) + V_g(K) \asymp m^{-1} + (nm)^{-1} K + K^{-2\alpha}.
\end{align*}
Plugging in the threshold $K \asymp (nm)^{1/(1+2\beta)}$ gives the stated rate $m^{-1} + (nm)^{-2\beta/(1+2\beta)} + (nm)^{-2\alpha/(1+2\beta)} \asymp m^{-1} + (nm)^{-2(\beta\wedge\alpha)/(1+2\beta)}$.
\end{proof}

\section{Proof of Theorem~8: Adaptation for $\fo$}
\label{sec:proofadaptationfgwn}

First we recall (a simplified version of) Theorem 3.1.9 of \cite{gine} to be used in the proof of Theorem~\ref{thm:adaptivef} and consider sample size $m+1$ instead of $m$, as earlier.
\begin{lemma}\label{lem:concent}
For $Z_1,\ldots,Z_n \iid \text{N}(0,1)$ and a decreasing sequence $\gamma_1\geq...\geq \gamma_n>0$, we have
\begin{align*}
\prob \Big(\sum_{i=1}^n \gamma_i(Z_i^2-1)\geq t\Big)\leq \exp\Big\{- \frac{t^2}{4(\sum_{i=1}^n \gamma_i^2+ t\gamma_1)}\Big\}.
\end{align*}
\end{lemma}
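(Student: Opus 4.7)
The plan is to follow the classical Chernoff/Laplace-transform strategy used in Laurent--Massart type bounds, since the statement is precisely such an inequality for a weighted sum of centered $\chi^2_1$ variables. Let $S = \sum_{i=1}^n \gamma_i(Z_i^2-1)$. For any $\lambda \in (0, 1/(2\gamma_1))$, Markov's inequality gives $\prob(S \geq t) \leq e^{-\lambda t}\, \E e^{\lambda S}$, and by independence $\E e^{\lambda S} = \prod_{i=1}^n \E e^{\lambda \gamma_i(Z_i^2-1)}$.

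Next I would compute each factor. Using the standard Gaussian identity $\E e^{\lambda \gamma_i Z_i^2} = (1-2\lambda\gamma_i)^{-1/2}$ valid for $2\lambda\gamma_i < 1$ (which holds for all $i$ since $\gamma_i \leq \gamma_1$), one obtains
\begin{equation*}
\log \E e^{\lambda \gamma_i (Z_i^2 - 1)} = -\tfrac{1}{2}\log(1-2\lambda\gamma_i) - \lambda\gamma_i.
\end{equation*}
The elementary inequality $-\tfrac{1}{2}\log(1-u) - \tfrac{u}{2} \leq \tfrac{u^2/4}{1-u}$ for $u \in [0,1)$, applied with $u = 2\lambda\gamma_i$, yields
\begin{equation*}
\log \E e^{\lambda \gamma_i (Z_i^2 - 1)} \leq \frac{\lambda^2 \gamma_i^2}{1 - 2\lambda \gamma_i} \leq \frac{\lambda^2 \gamma_i^2}{1 - 2\lambda\gamma_1},
\end{equation*}
where the last step uses the monotonicity assumption $\gamma_i \leq \gamma_1$. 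Summing over $i$ gives $\log \E e^{\lambda S} \leq \lambda^2 V / (1 - 2\lambda\gamma_1)$ with $V := \sum_i \gamma_i^2$.

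Combining with the Chernoff bound, $\prob(S \geq t) \leq \exp\{\lambda^2 V / (1 - 2\lambda\gamma_1) - \lambda t\}$ for every admissible $\lambda$. The final step is to pick the (near-)optimizing $\lambda$; the choice $\lambda = t / \bigl(2(V + t\gamma_1)\bigr)$, which lies in $(0, 1/(2\gamma_1))$, makes $1 - 2\lambda\gamma_1 = V/(V+t\gamma_1)$ and reduces the exponent to $-t^2 / \bigl(4(V + t\gamma_1)\bigr)$, matching the claim. The main (minor) obstacle is only bookkeeping: verifying that this $\lambda$ is admissible and that the scalar inequality $-\tfrac{1}{2}\log(1-u) - \tfrac{u}{2} \leq \tfrac{u^2/4}{1-u}$ holds on $[0,1)$ (a one-line Taylor/convexity check). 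Since the lemma is stated as a simplified version of Theorem~3.1.9 in \cite{gine}, an equally valid ``proof'' is simply to invoke that theorem with the finite sequence $(\gamma_i)_{i=1}^n$; the Chernoff argument above is the natural self-contained alternative.
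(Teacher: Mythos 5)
Your proof is correct and complete. The paper itself offers no proof of this lemma: it is stated as a recalled (simplified) version of Theorem~3.1.9 of the cited reference, so the ``official'' argument is just an appeal to that theorem, which you correctly note is also an acceptable route. Your self-contained Chernoff derivation is the standard Laurent--Massart-type argument and every step checks out: the moment generating function identity $\E e^{\lambda\gamma_i Z_i^2}=(1-2\lambda\gamma_i)^{-1/2}$ is valid on the stated range; the scalar inequality $-\tfrac12\log(1-u)-\tfrac{u}{2}\le \tfrac{u^2/4}{1-u}$ follows term by term from comparing the series $\sum_{k\ge 2}u^k/(2k)$ with $\sum_{k\ge 2}u^k/4$; the monotonicity $\gamma_i\le\gamma_1$ is used exactly where needed to get a common denominator $1-2\lambda\gamma_1$; and the choice $\lambda=t/\bigl(2(\sum_i\gamma_i^2+t\gamma_1)\bigr)$ is admissible and yields exponent $\lambda^2(\sum_i\gamma_i^2+t\gamma_1)-\lambda t=-t^2/\bigl(4(\sum_i\gamma_i^2+t\gamma_1)\bigr)$, matching the claim. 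The only cosmetic remark is that the bound is of interest only for $t>0$ (for $t\le 0$ the right-hand side is $\ge 1$ or the statement is vacuous), which is how the lemma is used in the paper.
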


\begin{proof}[Proof of Theorem~8] 
For notational convenience we omit the super index from $f^{(1)}$, i.e. we write $f=f^{(1)}$. 

Recall from the proof of Theorem~\ref{thm:upperboundf} in Section~\ref{sec:proof_ubthresh_f} that $\E_g\|f-\hf_{k_1,k_2}\|_2^2 = B_g'(k_2)+V_g'(k_1,k_2)$ where
\begin{align*}
    B_g'(k_2) &\coloneqq \sum_{\ell> k_2} (g_\ell^2+\tilde{\lambda}_\ell) \leq k_2^{-2\alpha} \sum_{\ell> k_2} g_\ell^2 \ell^{2\alpha}+  \sum_{\ell > k_2}  \tilde{\lambda}_\ell\leq c_R k_2^{-2(\alpha\wedge\talpha)},\\
    V_g'(k_1,k_2) &\coloneqq \frac{k_1}{n}+\sum_{k=k_1+1}^{k_2}\bigg\{\tlambda_k\Big(1+\frac{1}{m}\Big)+\frac{1}{nm}\bigg\}.
\end{align*}

First we derive the oracle values $k_1^*$, $k_2^*$ which (up to constant multiplier) minimizes the preceding risk.
Let
\begin{equation}\label{def:threshold:oracle}
\begin{split}
k_2^*&=\arg\min_{k_2 \in \mathbb{N}} \bigg\{B_g'(k_2)\leq \frac{k_2}{nm}\bigg\}, \\
k_1^*&=\arg\min_{k_1\leq k_2^*}\bigg\{B_g'(k_2^*)+V_g'(k_1,k_2^*)\leq 2\frac{k_1}{n}\bigg\}.
\end{split}
\end{equation}
Note $k_2^*$ is well defined since, regarding the terms in the set in $k_2^*$'s definition as $k_2$ increases, the left-hand side decreases to zero whereas the right-hand side goes to infinity.
Since $B_g'((nm)^{\frac{1}{1+2(\alpha\wedge\talpha)}}) \leq c_R (nm)^{-\frac{2(\alpha\wedge\talpha)}{1+2(\alpha\wedge\talpha)}}$, the definition of $k_2^*$ implies $k_2^*\lesssim (nm)^{\frac{1}{1+2(\alpha\wedge\talpha)}}$
and hence $k_2^*/(nm)\leq c_0 (nm)^{-\frac{2(\alpha\wedge\talpha)}{1+2(\alpha\wedge\talpha)}}$ for some $c_0>0$ large enough. 
Note $k_1^*$ is well defined since $B_g'(k_2^*)+V_g'(k_2^*,k_2^*)= (1+1/m)k_2^*/n<2k_2^*/n$ and hence the corresponding set is not empty. 
Furthermore, for $k_1^o\geq \big((4c_2)^{\frac{1}{2\talpha}} n^{\frac{1}{1+2\talpha}}\big)\vee \big(4c_0 n(nm)^{-\frac{2(\alpha\wedge\talpha)}{1+2(\alpha\wedge\talpha)}} \big)$ we have 
\begin{align*}
B_g'(k_2^*)+V_g'(k_1^o,k_2^*)
\leq \frac{k_2^*}{nm} + \frac{k_1^o}{n} + 2c_2(k_1^o)^{-2\talpha} + \frac{k_2^*}{nm}
\leq 2\frac{k_1^o}{n},
\end{align*}
which in turn implies that $k_1^*\leq k_1^o$ and 
\begin{align*}
B_g'(k_2^*)+V_g'(k_1^*,k_2^*) 
\lesssim (nm)^{-\frac{2\alpha}{1+2\alpha}}+n^{-\frac{2\talpha}{1+2\talpha}}.
\end{align*}
Also note 
\begin{align*}
B_g'(k_2)+V_g'(k_1,k_2)> \sum_{k=k_1+1}^{\infty}\tlambda_k\asymp k_1^{-2\talpha}. 
\end{align*} 
From the preceding panel and the definition of $k_1^*$, we get $k_1^*\gtrsim n^{1/(1+2\talpha)}$. Similarly, from $B_g'(k_2)\geq \sum_{k>k_2}\tlambda_k \asymp k_2^{-2\talpha}$ and the definition of $k_2^*$, we get $k_2^*\gtrsim (nm)^{1/(1+2\talpha)}$.

We estimate the oracle values using the data-driven thresholds $\hk_2$ and $\hk_1 = \hk_1^{(1)}$ from \eqref{eq:thresholdadaptf}.
Then we define our estimator of $f$ with these plug-in values:
\begin{align}\label{def:adapt:est}
\tilde{f}(x)=\hf_{\hk_1,\hk_1\vee\hk_2}(x)=\sum_{k=1}^{\hk_1} Y^{(1)}_k \psi_k(x) +\sum_{k=\hk_1+1}^{\hk_2}\hg_k \psi_k(x).
\end{align}

The next lemma shows that the data-driven thresholds $\hk_2$ and $\hk_1 = \hk_1^{(1)}$ will overshoot the oracle values $k_1^*$ and $k_2^{*}$ with exponentially small probability.

\begin{lemma}\label{lem:thresholdf}
For the thresholding estimators $\hk_1 = \hk_1^{(1)}$ and $\hk_2$ defined in \eqref{eq:thresholdadaptf} there exists a small enough constant $c>0$ such that for every $k_1>k_1^*$ and $k_2>k_2^*$,
\begin{align*}
\bP_g(\hk_1=k_1)^{1/2}\leq \frac{1}{c}e^{-c k_1^*}\qquad\text{and}\qquad \bP_g(\hk_2=k_2)^{1/2}\leq \frac{1}{c}e^{-c k_2^*}.
\end{align*}
\end{lemma}

\begin{proof}[Proof of Lemma~\ref{lem:thresholdf}]
We start with the $\hk_2$ bound and take arbitrary $k_2>k_2^*$. The union bound implies
\begin{align*}
\bP_g(\hk_2=k_2)
\leq \sum_{\ell=k_2+1}^{\sqrt{nm}}\bP_g\Big(\|\hf_{1,k_2}-\hf_{1,\ell}\|_2^2>\tau_2\frac{\ell}{nm}\Big)
= \sum_{\ell=k_2+1}^{\sqrt{nm}}\bP_g\Big( \sum_{k=k_2+1}^\ell \hg_k^2>\tau_2\frac{\ell}{nm} \Big). \nonumber
\end{align*}
Let $Z_1,\ldots,Z_{\sqrt{n}}\iid \text{N}(0,1)$.
By the inequality $(a+b)^2 \leq 2(a^2+b^2)$ for any $a,b \in \mathbb{R}$ and the inequalities $\sum_{k=k_2+1}^\ell g_k^2\leq B_g'(k_2^*)\leq k_2^*/(nm)$, the right hand side of the preceding display is bounded above by
\begin{align*}
&\sum_{\ell=k_2+1}^{\sqrt{nm}} \prob \Big\{ 2 \sum_{k=k_2+1}^\ell \Big(\frac{\tlambda_k}{m}+\frac{1}{nm}\Big) Z_k^2  >\tau_2\frac{\ell}{nm}-2\sum_{k=k_2+1}^\ell g_k^2\Big\} \\
&\leq  \sum_{\ell=k_2+1}^{\sqrt{nm}} \prob \Big\{ \sum_{k=k_2+1}^\ell \Big(\tlambda_k+\frac{1}{n}\Big) Z_k^2  >\frac{\tau_2-2}{2} \frac{\ell}{n}\Big\}.
\end{align*}
From $\sum_{k=k_2+1}^\ell \tlambda_k\leq k_2^*/(nm)< \ell/n$ and Lemma \ref{lem:concent}, for $\tau_2>6$ the right hand side of the preceding display is further bounded above by
\begin{align*}
\sum_{\ell=k_2+1}^{\sqrt{nm}}& \prob \Big\{\sum_{k=k_2+1}^\ell \Big(\tlambda_k+\frac{1}{n}\Big)\Big(Z_k^2-1\Big)>\frac{\tau_2-6}{2} \frac{\ell}{n}\Big\}\\
&\leq \sum_{\ell=k_2+1}^{\sqrt{nm}} \exp\Big\{ -\frac{(\tau_2/2-3)^2\ell^2/n^2}{\sum_{k=k_2+1}^\ell  (\tlambda_k+1/n)^2+(\tlambda_{k_2+1}+1/n)(\tau_2/2-3)\ell/n} \Big\}\\
&\leq \sum_{\ell=k_2+1}^{\sqrt{nm}} \exp\Big\{- c_{\tau_2}\frac{\ell^2}{\sum_{k=k_2^*}^\ell \tlambda_k^2n^2+ 1+\big(\tlambda_{k_2^*} +1/n \big)\ell n}\Big\}\\
&\leq \sum_{\ell=k_2+1}^{\sqrt{nm}} \exp\Big\{-c_{\tau_2}'\frac{\ell}{(k_2^{*})^{-2-4\talpha}n^2+1+(k_2^{*})^{-1-2\talpha} n  } \Big\}
\lesssim \exp\Big\{-c_{\tau_2}'' k_2^{*}  \Big\},
\end{align*}
for some positive constants $c_{\tau_2}, c_{\tau_2}'$, where the last inequality used $k_2^{*} \gtrsim n^{1/(1+2\talpha)}$, finishing the proof of the $\hk_2$ bound.

We turn now to the $\hk_1$ bound and use similar arguments as before. 
Take arbitrary $k_1>k_1^*$ and $\tau_1>4$, and let $Z_1,\ldots,Z_n \iid \text{N}(0,1)$.
We introduce the notation $k_1^{-}:=k_1-1\geq k_1^*$. 
From $\sum_{k=k_1^*}^{\sqrt{n}}\tlambda_k \leq k_1^*/n$ we get
\begin{align}
\bP_g(\hk_1=k_1)
&\leq \sum_{\ell=k_1}^{\sqrt{n}}\bP_g\Big(\|\hf_{k_1^-,\sqrt{n}}-\hf_{\ell,\sqrt{n}} \|_2^2 >\tau_1\frac{\ell}{n}\Big)\nonumber
= \sum_{\ell=k_1}^{\sqrt{n}}\bP_g\Big(\sum_{k=k_1}^\ell (Y_k^{(1)}-\hg_k)^2 >\tau_1\frac{\ell}{n}\Big)\nonumber\\
&=\sum_{\ell=k_1}^{\sqrt{n}}\prob\Big(\sum_{k=k_1}^\ell Z_k^2 (\tilde
\lambda_k+1/n)(1+1/m) >\tau_1\frac{\ell}{n}\Big)\nonumber\\
&\leq \sum_{\ell=k_1}^{\sqrt{n}}\prob\Big(\sum_{k=k_1}^\ell (Z_k^2-1)(\tilde
\lambda_k+1/n)(1+1/m) >(\tau_1-2-2/m)\frac{\ell}{n}\Big).
\label{eq:UB:j1}
\end{align}
Then applying again Lemma \ref{lem:concent} the preceding display is bounded above by a multiple of
\begin{align*}
 &\sum_{\ell=k_1}^{\sqrt{n}}\exp\Bigg\{\frac{-(\tau_1-2-2/m)^2\ell^2/n^2}{\sum_{k=k_1^-+1}^\ell(\tlambda_k+1/n)^2(1+1/m)^2 +(\tlambda_{k_1^-+1}+1/n)(1+1/m) (\tau_1-2-2/m)\ell/n} \Bigg\}\\
&\qquad\leq \sum_{\ell=k_1^*}^{\sqrt{n}}\exp\Big\{- c_{\tau_1}\frac{\ell}{(k_1^*)^{-2-4\talpha}n^2+1+(k_1^*)^{-1-2\talpha}n } \Big\}
\lesssim\exp\Big\{ -c_{\tau_1}'k_1^*\Big\},
\end{align*}
for some positive constants $c_{\tau_1}, c_{\tau_1}'$, where the last line used $k_1^*\gtrsim n^{1/(1+2\talpha)}$. This finishes the proof of the $\hk_1$ bound and hence concludes the lemma.
\end{proof}

We continue the proof of Theorem~\ref{thm:adaptivef}.
From the above lemma we can derive the adaptive convergence rate for our double-thresholded estimator.

We distinguish four cases $\{\hk_1\leq k_1^*\leq \hk_2\leq k_2^*\}$, $\{\hk_1\vee\hk_2\leq k_1^*\}$, $\{\hk_1> k_1^*\}$, and $\{\hk_2> k_2^*\}$. 
For the first case, from the definition of $k_1^*, k_2^*$ in \eqref{def:threshold:oracle} and $\hk_1, \hk_2$ in \eqref{eq:thresholdadaptf}, the term $\E_g\|\tilde{f}-f\|_2^21_{\{\hk_1\leq k_1^*\leq  \hk_2\leq k_2^* \}}$ is bounded above by
\begin{align*}
& \E_g\|\hf_{k_1^*,\hk_2}-\hf_{\hk_1,\hk_2}\|_2^21_{\{\hk_1\leq k_1^*\leq \hk_2\leq k_2^* \}}
 +\E_g\|\hf_{k_1^*,\hk_2}-\hf_{k_1^*,k_2^*}\|_2^21_{\{\hk_1\leq k_1^*\leq \hk_2\leq k_2^* \}}
 +\E_g\|\hf_{k_1^*,k_2^*}-f\|_2^2\\
 &\leq  \E_g\|\hf_{k_1^*,\sqrt{n}}-\hf_{\hk_1,\sqrt{n}}\|_2^21_{\{ \hk_1\leq k_1^* \}}
 +\E_g\|\hf_{1,\hk_2}-\hf_{1,k_2^*}\|_2^21_{\{\hk_2\leq k_2^* \}}
+ \E_g\|\hf_{k_1^*,k_2^*}-f\|_2^2\\
&\leq  \tau_1\frac{k_1^*}{n} + \tau_2\frac{k_2^*}{nm} + B_g'(k_2^*)+V_g'(k_1^*,k_2^*)
\lesssim  (nm)^{-\frac{2\alpha}{1+2\alpha}}+n^{-\frac{2\talpha}{1+2\talpha}}.
\end{align*}

For the case $\{\hk_1\vee \hk_2\leq k_1^* \}$, the term $\E_g\|\tilde{f}-f\|_2^21_{\{\hk_1\vee \hk_2\leq k_1^* \}}$ is bounded above by
\begin{align*}
&\E_g\|\hf_{\hk_1,\hk_1\vee\hk_2}-\hf_{\hk_1,k_2^*}\|_2^21_{\{\hk_1\vee  \hk_2\leq k_1^*  \}}
+\E_g\|\hf_{\hk_1,k_2^*}-\hf_{k_1^*,k_2^*}\|_2^21_{\{\hk_1\vee  \hk_2\leq k_1^*  \}}
+\E_g\|\hf_{k_1^*,k_2^*}-f\|_2^2\\
 &\leq \E_g\|\hf_{1,\hk_2}-\hf_{1,k_2^*}\|_2^21_{\{\hk_2\leq k_2^* \}}
+ \E_g\|\hf_{k_1^*,\sqrt{n}}-\hf_{\hk_1,\sqrt{n}}\|_2^21_{\{ \hk_1\leq k_1^* \}}
 +\E_g\|\hf_{k_1^*,k_2^*}-f\|_2^2,
\end{align*}
which is further bounded by a multiple of $ (nm)^{-\frac{2\alpha}{1+2\alpha}}+n^{-\frac{2\talpha}{1+2\talpha}}$ per the previous display.

Then we consider the case $\hk_2>k_2^*$. By the Cauchy-Schwarz inequality and Lemma~\ref{lem:thresholdf},
\begin{align*}
\E_g\|\tilde{f}-f\|_2 1_{\{\hk_2>k_2^* \}}
&\leq  \sum_{k_2=k_2^*+1}^{\sqrt{nm}}\sum_{k_1=1}^{k_2\wedge\sqrt{n}}\big(\E_g\|\hf_{k_1,k_2}-f\|_2^2\big)^{1/2}(\E_g1_{\hk_2=k_2})^{1/2} \\
&\lesssim  \sum_{k_2=k_2^*+1}^{\sqrt{nm}}\sum_{k_1=1}^{\sqrt{n}} \Big(\frac{k_1}{n}+ k_1^{-2\talpha}+\frac{k_2}{nm}+ k_2^{-2(\alpha\wedge \talpha)}\Big)^{1/2}\bP_g(\hk_2=k_2)^{1/2}\\
&\lesssim \sqrt{n}\sum_{k_2>k_2^*}\bP_g(\hk_2=k_2)^{1/2}
\lesssim \sqrt{n}e^{-ck_2^*}=\sqrt{n}\exp\Big\{-c' (nm)^{\frac{1}{1+2\talpha}}\Big\}
\end{align*}
which is $o\big( n^{-\frac{\talpha}{1+2\talpha}}+(nm)^{-\frac{\alpha}{1+2\alpha}}\big)$.

Finally, for the last case $\hk_1>k_1^*\gtrsim n^{1/(1+2\talpha)}$, by similar computations
\begin{align*}
\E_g\|\tilde{f}-f\|_2 1_{\{\hk_1>k_1^* \}}
&\leq  \sum_{k_1=k_1^*+1}^{\sqrt{n}}\sum_{k_2= k_1}^{\sqrt{nm}}\big(\E_g\|\hf_{k_1,k_2}-f\|_2^2\big)^{1/2}(\E_g 1_{\hk_1=k_1})^{1/2}\\
&\lesssim  \sum_{k_1=k_1^*+1}^{\sqrt{n}}\sum_{k_2=k_1}^{\sqrt{nm}} \Big(\frac{k_1}{n}+ k_1^{-2\talpha}+\frac{k_2}{nm}+ k_2^{-2(\alpha\wedge \talpha)}\Big)^{1/2}\bP_g(\hk_1=k_1)^{1/2}\\
&\lesssim \sqrt{nm}\sum_{k_1>k_1^*}\bP_g(\hk_1=k_1)^{1/2}
\lesssim \sqrt{nm}\exp\Big\{-c' n^{\frac{1}{1+2\talpha}}\Big\}
\end{align*}
which is again $o\big( n^{-\frac{\talpha}{1+2\talpha}}+(nm)^{-\frac{\alpha}{1+2\alpha}}\big)$, concluding the proof of Theorem~\ref{thm:adaptivef}.
\end{proof}

\section{Proof of Theorem~4: Adaptation for $g$}
\label{sec:proofadaptationggwn}

\begin{proof}[Proof of Theorem~4]
From the threshold-estimator part of the proof of Theorem~\ref{thm:upperboundg} in Section~\ref{sec:proof_ubthresh_g}, recall for any estimator $\tilde{g}^K$ of the form \eqref{eq:estg} the following decomposition into a squared bias and variance term 
\begin{align*}
    \E_g\|g-\tilde{g}^K\|_2^2
    = B_g(K) + V_g(K)
\end{align*}
where $B_g(K) = \sum_{k>K} g_{k}^2$ and $V_g(K) = \sum_{k=1}^{K} m^{-1} (\tlambda_k + n^{-1})$. 
First we derive the oracle value $K^*$ which (up to constant multiplier) minimizes the above integrated squared error.
Here we simply use $K^*=k_2^*$ as defined in \eqref{def:threshold:oracle}.
If $g \in \cS(\alpha, R)$, then $\sum_{k>K} g_{k}^2 \leq R^2 K^{-2\alpha}$ for any $K$, which implies $K^* \asymp (nm)^{1/(1+2\alpha)}$ and thus 
\begin{align}
\E_g\|g-\tilde{g}^{K^*}\|_2^2 \lesssim m^{-1} + (nm)^{-\frac{2\alpha}{1+2\alpha}}.\label{eq:help:adapt:g}
\end{align}
We use the data-driven threshold $\hK = \hk_2$ from \eqref{eq:thresholdadaptf} to define our estimator:
\begin{align}\label{def:adapt:estg}
\tilde{g}^{\hK}(x)
= \sum_{k=1}^{\hK} \hg_k \psi_k(x).
\end{align}

Lemma~\ref{lem:thresholdf} shows the threshold estimator $\hK = \hk_2$ will overshoot the oracle value $K^*$ with exponentially small probability.
From this we can derive the adaptive convergence rate for our thresholded estimator.
We distinguish two cases $\{\hK \leq K^*\}$ and $\{\hK> K^*\}$. 
From the Cauchy-Schwarz inequality and Lemma~\ref{lem:thresholdf}, we get
\begin{align*}
\E_g\|\tilde{g}^{\hK}-g\|_2^2 1_{\{\hK > K^*\}}
&\leq \sum_{k=K^*+1}^{\sqrt{nm}} \big(\E_g\|\hg_{k}-g\|_2^2\big)^{1/2}(\E_g1_{\hK=k})^{1/2} \\
&\lesssim \sum_{k=K^*+1}^{\sqrt{nm}} \big((nm)^{-1}k + m^{-1} + k^{-2\alpha}\big)^{1/2} \mathbb{P}_g(\hK=k)^{1/2} \\
&\lesssim \sum_{k>K^*}\mathbb{P}_g(\hK=k)^{1/2}
\lesssim e^{-c K^*} 
= \exp\Big\{-c' (nm)^{\frac{1}{1+2\alpha}}\Big\} 
\end{align*}
which is $o( (nm)^{-\frac{2\alpha}{1+2\alpha}})$.
From the definition of $\hK=\hk_2$ in \eqref{eq:thresholdadaptf} and $K^*=k_2^*$ in \eqref{def:threshold:oracle}, we get
\begin{align*}
\E_g\|\tilde{g}^{\hK}-g\|_2^2 1_{\{\hK\leq K^*\}}
\lesssim \E_g\|\tilde{g}^{\hK}-\tilde{g}^{K^*}\|_2^2 1_{\{\hK\leq K^*\}} + \E_g\|\tilde{g}^{K^*} - g\|_2^2 
\leq \tau_2 \frac{K^*}{nm} + \E_g\|\tilde{g}^{K^*} - g\|_2^2
\end{align*}
which is $O( (nm)^{-\frac{2\alpha}{1+2\alpha}}+m^{-1})$ in view of \eqref{eq:help:adapt:g}, finishing the proof of the theorem.
\end{proof}

\end{document}